\def\C{{\mathbb C}}
\numberwithin{equation}{section}
\newtheorem{Theorem}{Theorem}[section]
\newtheorem{Corollary}[Theorem]{Corollary}
\newtheorem{Proposition}[Theorem]{Proposition}
{ \theoremstyle{definition}
\newtheorem{Definition}[Theorem]{Definition}
\newtheorem{Note}[Theorem]{Note}
\newtheorem{Remark}[Theorem]{Remark} }
\begin{document}

\allowdisplaybreaks

\newcommand{\arXivNumber}{1708.02280}

\renewcommand{\PaperNumber}{099}

\FirstPageHeading

\ShortArticleName{Contractions of Degenerate Quadratic Algebras, Abstract and Geometric}

\ArticleName{Contractions of Degenerate Quadratic Algebras,\\ Abstract and Geometric}

\Author{Mauricio A.~ESCOBAR RUIZ~$^\dag$, Willard MILLER Jr.~$^\ddag$ and Eyal SUBAG~$^\S$}

\AuthorNameForHeading{M.A.~Escobar~Ruiz, W.~Miller, Jr.\ and E.~Subag}

\Address{$^\dag$~Centre de Recherches Math\'ematiques, Universit\'e de Montreal, \\
\hphantom{$^\dag$}~C.P.~6128, succ. Centre-Ville, Montr\'eal, QC H3C 3J7, Canada}
\EmailD{\href{mailto:mauricio.escobar@nucleares.unam.mx}{mauricio.escobar@nucleares.unam.mx}}

\Address{$^\ddag$~School of Mathematics, University of Minnesota, Minneapolis, Minnesota, 55455, USA}
\EmailD{\href{mailto:miller@ima.umn.edu}{miller@ima.umn.edu}}
\URLaddressD{\url{https://www.ima.umn.edu/~miller/}}
\Address{$^\S$~Department of Mathematics, Pennsylvania State University, State College,\\
\hphantom{$^\S$}~Pennsylvania, 16802 USA}
\EmailD{\href{mailto:eus25@psu.edu}{eus25@psu.edu}}

\ArticleDates{Received August 09, 2017, in f\/inal form December 26, 2017; Published online December 31, 2017}

\Abstract{Quadratic algebras are generalizations of Lie algebras which include the symmetry algebras of 2nd order superintegrable systems in 2 dimensions as special cases. The superintegrable systems are exactly solvable physical systems in classical and quantum mechanics. Distinct superintegrable systems and their quadratic algebras can be related by geometric contractions, induced by B\^ocher contractions of the conformal Lie algebra $\mathfrak{so}(4,\mathbb {C})$ to itself. In 2 dimensions there are two kinds of quadratic algebras, nondegenerate and degenerate. In the geometric case these correspond to 3 parameter and 1 parameter potentials, respectively. In a previous paper we classif\/ied all abstract parameter-free nondegenerate quadratic algebras in terms of canonical forms and determined which of these can be realized as quadratic algebras of 2D nondegenerate superintegrable systems on constant curvature spaces and Darboux spaces, and studied the relationship between B\^ocher contractions of these systems and abstract contractions of the free quadratic algebras. Here we carry out an analogous study of abstract parameter-free degenerate quadratic algebras and their possible geometric realizations. We show that the only free degenerate quadratic algebras that can be constructed in phase space are those that arise from superintegrability. We classify all B\^ocher contractions relating degenerate superintegrable systems and, separately, all abstract contractions relating free degenerate quadratic algebras. We point out the few exceptions where abstract contractions cannot be realized by the geometric B\^ocher contractions.}

\Keywords{B\^ocher contractions; quadratic algebras; superintegrable systems; conformal superintegrability; Poisson structures}

\Classification{22E70; 16G99; 37J35; 37K10; 33C45; 17B60; 81R05; 33C45}

\section{Introduction}\label{int1}
An abstract {\it degenerate (quantum) quadratic algebra}~$Q$
is a noncommutative multiparameter associative algebra generated by linearly independent operators $X$, $H$, $L_1$, $L_2$, with parameters~$a_i$,
such that $H$ is in the center and the following commutation relations hold~\cite{KMP2014}:
 \begin{gather}\label{structure2}[X,L_j]=\sum_{0\leq e_1+e_2+e_3+e_4\leq 1} P^{(j)}_{e_1,e_2,e_3, e_4} L_1^{e_1}L_2^{e_2}H^{e_3}X^{2e_4} ,\qquad j=1,2,\\
 \label{commutator1}[L_1,L_2]=\sum_{0\leq e_1+e_2+e_3+e_4\leq 1} T_{e_1,e_2, e_3, e_4}\{ L_1^{e_1},L_2^{e_2},X\}H^{e_3}X^{2e_4}. \end{gather}
Finally, there is the relation:
\begin{gather}\label{Casimir2}G\equiv\sum_{0\leq e_1+e_2+e_3+e_4\leq 2} S_{e_1, e_2, e_3, e_4}\big\{L_1^{e_1},L_2^{e_2}, X^{2e_4}\big\}H^{e_3}+c_1XL_1X+c_2XL_2X=0, \\
 X^0=H^0=I,\nonumber\end{gather}
where $\{L_1^{e_1},L_2^{e_2}, X^{2e_4}\}$ is the 6-term symmetrizer of three operators. The constants
$P^{(j)}_{e_1, e_2, e_3, e_4}$, $T_{e_1, e_2, e_3, e_4}$ and $S_{e_1, e_2, e_3, e_4}$
are polynomials in the parameters $a_i$ of degrees $1-e_1-e_2-e_3-e_4,$ $1-e_1-e_2-e_3-e_4$ and $2-e_1-e_2-e_3-e_4$, respectively, while $c_1$, $c_2$ are of degree~0. If all parameters $a_j=0$ the algebra is {\it free}.
For these quantum quadratic algebras there is a natural grading such that the operators $H$, $L_j$ are 2nd order and~$X$ is 1st order.
The f\/ield of scalars can be either~$\mathbb{ R}$ or~$\mathbb C$.

 An abstract {\it degenerate $($classical$)$ quadratic algebra} ${\mathcal Q}$
is a Poisson algebra with linearly independent generators ${\mathcal X}$, ${\mathcal H}$, ${\mathcal L}_1$, ${\mathcal L}_2$, and parameters $a_i$, satisfying relations (\ref{structure2}), (\ref{commutator1}), (\ref{Casimir2}) with the commutator replaced by the Poisson bracket, $H$, $L_j$, $X$ by ${\mathcal H}$, ${\mathcal L}_j$, ${\mathcal X} $, and the symmetrizer $\{ L_1^{e_1},L_2^{e_2},X^{e_3}\}$ by the product
${\mathcal L}_1^{e_1}{\mathcal L}_2^{e_2}{\mathcal X}^{e_3}/3!$.

These structures arise naturally in the study of classical and quantum superintegrable systems in two dimensions, e.g., \cite{SCQS, MPW2013}, and, in the case of zero potential systems, they are examples of Poisson structures, on which there is a considerable literature~\cite{DufZun05, GraMarPer93,LauPiVan13}.
 A quantum 2D superintegrable system is an integrable Hamiltonian system on a $2$-dimensional real or complex Riemannian manifold
with potential: $ H=\Delta_2+V$, that admits~$3$ algebraically independent partial dif\/ferential operators commuting with $H$, the maximum possible:
 \begin{gather*} [H,L_j]=0,\qquad L_{3}=H, \qquad j=1,2,3.\end{gather*} Here~$\Delta_2$ is the Laplace operator on the manifold. (We call this a {\it Helmholtz superintegrable system} with eigenvalue equation $H\Psi =E\Psi$ to distinguish it from a Laplace conformally superintegrable system, $H\Psi =(\Delta_2 +V)\Psi=0$~\cite{KMS2016}.) A~system is of order $K$ if the maximum order of the symmetry  operators~$L_j$ (other than $H$) is~$K$; all such systems are known for $K=1,2$~\cite{DASK2007, KKM2005a,KKMW,KKMP}. Superintegrability captures the properties of quantum Hamiltonian systems that allow the Schr\"odinger eigenvalue problem $H\Psi=E\Psi$ to be solved exactly, analytically and algebraically. A classical 2D superintegrable system is an integrable Hamiltonian system on a~real or complex $2$-dimensional Riemannian manifold with potential: $ {\mathcal H}=\sum\limits_{j,k=1}^2g^{jk}({\bf x})p_jp_k+V({\bf x})$
in local coordinates $x_1$, $x_2$, $p_1$, $p_2$ that admit~3 functionally independent phase space functions~${\mathcal H}$, ${\mathcal L}_1$, ${\mathcal L}_2$
in involution with~${\mathcal H}$, the maximum possible.
\begin{gather*} \{{\mathcal H},{\mathcal L}_j\}=0,\qquad {\mathcal L}_{3}={\mathcal H}, \qquad j=1,2,3.\end{gather*}
A system is of order $K$ if the maximum order of the constants of the motion~$L_j$, $j\ne 3$, as polynomials in $p_1$, $p_2$ is $K$. Again all such systems are known for $K=1,2$, and, for them, there is a 1-1 relationship between classical and quantum 2nd order 2D superintegrable systems~\cite{KKM20041++}, i.e., the quantum system can be computed from the classical system, and vice versa.

The possible superintegrable systems divide into six classes:
\begin{enumerate}\itemsep=0pt
 \item First order systems. These are the (zero-potential) Laplace--Beltrami eigenvalue equations on constant curvature spaces. The symmetry algebras close under commutation to form the Lie algebras ${\mathfrak e}(2,\mathbb{ R})$, ${\mathfrak e}(1,1)$, ${\mathfrak o}(3,\mathbb{ R})$ or ${\mathfrak o}(2,1)$. Such systems have been studied in detail, using group theory methods.
 \item Free triplets. These are superintegrable systems with zero potential and all generators of 2nd order. The possible spaces for which these systems can occur were classif\/ied by Koenigs~(1896). They are: constant curvature spaces, the four Darboux spaces, and eleven 4-parameter Koenigs spaces~\cite{Koenigs}. In most cases the symmetry operators will not generate a quadratic algebra, i.e., the algebra will not close. If the system generates a~nondegenerate quadratic algebra we call it a {\it free quadratic triplet}.
 \item Nondegenerate systems. These are superintegrable systems with a non-zero potential and the generating symmetries are all of 2nd order. The space of potentials is 4-dimensional:
 \begin{gather*}V({\bf x})= a_1V_{(1)}({\bf x})+a_2V_{(2)}({\bf x})+a_3V_{(3)}({\bf x})+a_4.\end{gather*}
 The symmetry operators generate a nondegenerate quadratic algebra with parameters $a_j$.
 \item Degenerate systems. There are 4 generators: one 1st order $X$ and 3 second order $H$, $L_1$, $L_2$. Here, $X^2$ is not contained in the span of $H$, $L_1$, $L_2$. The space of potentials is 2-dimensional: $V({\bf x})= a_1V_{(1)}({\bf x})+a_2$. The symmetry operators generate a degenerate quadratic algebra with parameters~$a_j$. Relation~(\ref{Casimir2}) is an expression of the fact that 4 symmetry operators cannot be algebraically independent. The possible degenerate systems, classif\/ied up to conjugacy with respect to the symmetry groups of their underlying spaces, are listed in Appendix~\ref{appendixA}.
 \item Exceptional system. $ E15$: $V=f(x-iy)$, $f$ an arbitrary function.

The exceptional case is characterized by the fact that the symmetry generators  are functionally linearly dependent~\cite{KKM20041,KKM20041+,KKM20041++,KKMP}. This is the only 2nd order functionally linearly dependent 2D system but there are many such systems in 3D, including the Calogero 3-body system on the line. In 3D such systems have not yet been classif\/ied.
\end{enumerate}
Every degenerate superintegrable system occurs as a restriction of the 3-parameter potentials to 1-parameter ones, such that one of the symmetries becomes a perfect square: $L=X^2$. Here~$X$ is a f\/irst order symmetry and a new 2nd order symmetry appears so that this restriction admits more symmetries than the original system, see Remark~\ref{remark2}. Basic results that relate these superintegrable systems are the closure theorems:
 \begin{Theorem} A free triplet, classical or quantum, extends to a superintegrable system with potential if and only if it generates a free quadratic algebra $\tilde Q$, degenerate or nondegenerate.
 \end{Theorem}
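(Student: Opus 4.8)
The plan is to reduce everything to the classical (Poisson) side, using the $1$-$1$ correspondence between $2$nd order classical and quantum $2$D superintegrable systems recalled above \cite{KKM20041++}; one then only needs to check that the standard quantization introduces no commutator obstructions beyond those already visible in the Poisson brackets, which follows from tracking the lower-order terms of $[H,L]=0$ order by order in derivatives. Fix one of the admissible free spaces (constant curvature, a Darboux space, or a Koenigs space) and write a general $2$nd order constant of the motion of ${\mathcal H}=\sum g^{jk}p_jp_k+V$ as ${\mathcal L}=\sum a^{jk}({\bf x})p_jp_k+W({\bf x})$. Separating $\{{\mathcal H},{\mathcal L}\}=0$ by degree in $p$ yields the Killing-tensor equations for $a^{jk}$ (independent of $V$) together with an equation determining $dW$ from $a^{jk}$ and $dV$; its integrability condition is the Bertrand--Darboux equation $\mathrm{BD}(a)[V]=0$, a linear second order PDE for $V$ attached to $a$. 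A $1$st order constant ${\mathcal X}=\xi^jp_j$ requires, in addition, $\xi$ Killing and $\xi^jV_j=0$.

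Now compare the two sides of the equivalence. On the algebra side, for a free triplet the generating quadratic forms are Killing tensors $K_1$, $K_2$ together with $K_3=g$ giving ${\mathcal H}$; in the degenerate case one of them, say $K_1=\xi\otimes\xi$, is the square of a Killing vector, so ${\mathcal X}=\xi^jp_j$ is available, ${\mathcal L}_1={\mathcal X}^2$, and the Casimir relation (\ref{Casimir2}) collapses to the identity ${\mathcal L}_1={\mathcal X}^2$. By homogeneity in $p$, closure of (\ref{structure2}) is then the single algebraic requirement that the Killing tensor $\{{\mathcal X},{\mathcal L}_2\}$ lie in $\mathrm{span}\{{\mathcal H},{\mathcal X}^2,{\mathcal L}_2\}$, and closure of (\ref{commutator1}) follows from this via the Leibniz rule $\{{\mathcal X}^2,{\mathcal L}_2\}=2{\mathcal X}\{{\mathcal X},{\mathcal L}_2\}$; in the nondegenerate case the analogue is that $R:=\{{\mathcal L}_1,{\mathcal L}_2\}$ and then $\{{\mathcal L}_j,R\}$ and $R^2$ reduce to polynomials in $({\mathcal H},{\mathcal L}_1,{\mathcal L}_2)$. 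On the potential side, the triplet extends to a system with non-trivial potential precisely when the combined system $\mathrm{BD}(K_1)[V]=\mathrm{BD}(K_2)[V]=0$ (supplemented in the degenerate case by $\xi^jV_j=0$, which reduces it to first order) is formally integrable with its maximal solution space, i.e.\ has a solution beyond the constants; that space is $4$-dimensional in the nondegenerate case and $2$-dimensional in the degenerate case, matching the potentials of the degenerate systems listed in Appendix~\ref{appendixA}.

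The heart of the matter, and the step I expect to be the main obstacle, is to show that these two conditions coincide. One writes the cross-derivative integrability conditions of the prolonged potential system as the vanishing of finitely many functions built algebraically from the metric, $K_1$, $K_2$, $\xi$ and their derivatives, and one writes the defect of $\{{\mathcal X},{\mathcal L}_2\}$ (resp.\ of $\{{\mathcal L}_j,R\}$ and $R^2$) modulo the admissible products of generators; the claim is that, expressed in a common basis of Killing tensors on the given space, these two families of functions are proportional and so vanish together. Equivalently, and perhaps more transparently, when the free algebra closes one can integrate the structure relations directly: the potential $V$ and the correction terms $W^{(i)}_k$ are recovered by quadratures from the closed structure equations, and the consistency conditions encountered along the way are exactly those closure relations. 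Carrying this out uses the explicit coordinate forms on constant curvature, Darboux, and Koenigs spaces, and this is where the special geometry enters essentially (and where any exceptional cases must be dealt with individually).

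Granting the bridge, both implications follow. If the free triplet generates a free quadratic algebra, then the integrability functions vanish, the Bertrand--Darboux system has its full solution space, and any non-constant solution $V=a_1V_{(1)}+a_2$ (resp.\ $V=a_1V_{(1)}+a_2V_{(2)}+a_3V_{(3)}+a_4$), together with the induced $W^{(i)}_k$, produces constants ${\mathcal L}_i+W_i$ in involution with ${\mathcal H}$; functional independence is inherited from the triplet, so one obtains a superintegrable system with potential, degenerate exactly when one of the free Killing tensors was a perfect square of a Killing vector (see Remark~\ref{remark2}). Conversely, a superintegrable system with potential extending the triplet is a $2$nd order $2$D superintegrable system, so by the structure theory of such systems \cite{KKMP} its generators close into a quadratic algebra with structure constants polynomial in the parameters $a_i$; specializing $a_i=0$ gives a closed algebra satisfying (\ref{structure2})--(\ref{Casimir2}) with zero parameters, i.e.\ the free quadratic algebra $\tilde Q$. (Alternatively, a non-constant $V$ forces the integrability functions to vanish, and the bridge yields closure directly.) This establishes the closure theorem.
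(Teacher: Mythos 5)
The paper does not actually prove this theorem: it states it and defers to \cite{KM2014}, remarking only that the proofs there are constructive (from the free quadratic algebra one computes the potential and the symmetries of the full quadratic algebra). Your outline is faithful to that strategy --- separating $\{\mathcal{H},\mathcal{L}\}=0$ by degree in the momenta, isolating the Bertrand--Darboux integrability conditions for the potential, and comparing them with the algebraic closure conditions on the free Killing tensors --- so as a description of the route it is the right one, and your ``alternative'' formulation (recovering $V$ and the correction terms by quadratures from the closed structure relations) is essentially the constructive argument the paper alludes to.

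As a proof, however, there is a genuine gap, and it is the one you yourself flag: the assertion that the vanishing of the Bertrand--Darboux integrability functions coincides with closure of the free algebra \emph{is} the content of the theorem, and you only state it as a claim, offering two plans for verifying it (proportionality of two families of functions in a common Killing-tensor basis; integration of the closed structure relations by quadratures) without carrying either out. Since in \cite{KM2014} the equivalence is established precisely by this kind of explicit computation on each admissible space (constant curvature, Darboux, Koenigs), nothing short of doing that computation, or an a priori argument replacing it, closes the gap. A secondary inaccuracy: your picture of the degenerate branch is off. A degenerate free quadratic algebra has four generators $\mathcal{X}$, $\mathcal{H}$, $\mathcal{L}_1$, $\mathcal{L}_2$ with $\mathcal{X}^2$ \emph{not} contained in the span of the second-order generators, and the relation (\ref{Casimir2}) is a nontrivial degree-two polynomial identity among all four; it does not collapse to $\mathcal{L}_1=\mathcal{X}^2$. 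What happens under restriction (Remark~\ref{remark2}) is that one symmetry of the nondegenerate system becomes $\mathcal{X}^2$ and a \emph{new} second-order symmetry appears, so the degenerate system still has three second-order generators besides $\mathcal{X}^2$; your count of generators and the closure conditions you list for that case need to be reworked accordingly.
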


\begin{Theorem} A superintegrable system, degenerate or nondegenerate, classical or quantum, with quadratic algebra~$Q$, is uniquely determined by its free quadratic algebra~$\tilde Q$.
 \end{Theorem}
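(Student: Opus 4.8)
The plan is to move between the abstract algebra and its geometric realization, using that for 2nd-order 2D systems the two carry the same data. A 2nd-order superintegrable system consists of an underlying \emph{free system} --- the symmetry generators with all potential parameters set to zero, on one of the spaces in Koenigs' list --- together with a potential $V=a_1V_{(1)}+\cdots$ depending linearly on parameters $a_i$, one of which is a pure additive constant. Each generator has the form $\mathcal{L}=\mathcal{L}^{(0)}+W({\bf x})$, with $\mathcal{H}=\mathcal{H}^{(0)}+V$, where $\mathcal{L}^{(0)}$, $\mathcal{H}^{(0)}$ are the free-system symmetries and the potential terms $W$ depend linearly on the $a_i$ (and the first-order generator $\mathcal{X}$ of a degenerate system carries no potential term). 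Setting all $a_i=0$ therefore collapses $Q$ onto the quadratic algebra of its underlying free system, which is exactly $\tilde Q$. So it suffices to prove (i) that $\tilde Q$ determines the underlying free system up to isometry and linear reparametrization of the generators, and (ii) that such a free system extends to a superintegrable system in at most one way.

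For (i) I would argue from the classifications: the free degenerate quadratic algebras are classified in this paper (the nondegenerate ones in the earlier paper), and Koenigs' list enumerates all 2D spaces admitting a free triplet. One matches the two lists and checks, case by case, that inequivalent free systems produce non-isomorphic free quadratic algebras --- a finite verification, since the structure constants of $\tilde Q$ encode the curvature of the space together with the leading symbols of the symmetry functions. In fact full rigidity is not needed: it suffices that $\tilde Q$ recovers enough of the free-system data to pin down the determining equations used in step~(ii), and the few places where even this fails are precisely the exceptional abstract contractions with no geometric model mentioned in the abstract.

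For (ii) I would observe that requiring each $\mathcal{L}^{(0)}+W$ to remain in involution with $\mathcal{H}^{(0)}+V$ is equivalent to a linear, overdetermined Bertrand--Darboux system for the $W$'s and $V$ whose coefficients are built only from the metric and the leading symbols of the free generators --- that is, only from $\tilde Q$. By the preceding closure theorem this system is consistent, and its solution space has dimension $4$ for nondegenerate and $2$ for degenerate systems, the extra dimension being the additive constant; a basis $V_{(1)},\dots$ of this space together with the induced $W$'s realizes the full potential family, the free coefficients being exactly the parameters $a_i$ of $Q$. Substituting these back into the involution relations and the Casimir identity fixes every structure constant $P^{(j)}_{e_1,e_2,e_3,e_4}$, $T_{e_1,e_2,e_3,e_4}$, $S_{e_1,e_2,e_3,e_4}$, $c_1$, $c_2$ as a polynomial in the $a_i$, so $Q$ is recovered from $\tilde Q$. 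The quantum statement then follows from the classical one via the known 1-1 correspondence between classical and quantum 2nd-order 2D superintegrable systems; alternatively one reruns the same steps with the symmetry operators, the only new ingredient being curvature-correction terms in the determining equations, again fixed by the metric and hence by $\tilde Q$.

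The load-bearing step is (i): reconstructing the underlying free system from $\tilde Q$ is genuinely classification-dependent and amounts to checking injectivity of the assignment ``system $\mapsto$ free quadratic algebra'' case by case --- which is also exactly where the abstract and geometric pictures part company in the few exceptional cases.
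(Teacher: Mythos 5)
First, be aware that the paper does not actually prove this theorem: it cites \cite{KM2014} and only records that the proof there is constructive --- from the free quadratic algebra one computes the potential $V$ and the potential parts $W_j$ of the symmetries. Your step (ii) is essentially that argument: the Bertrand--Darboux system whose coefficients are built from the metric and the leading symbols of the free generators, consistency guaranteed by the closure theorem, a solution space of dimension $4$ (nondegenerate) or $2$ (degenerate), and back-substitution fixing the structure constants of $Q$ as polynomials in the $a_i$, with the quantum case handled by the classical--quantum correspondence. That part is sound and is the real content of the theorem.

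The difficulty is your step (i), which is both unnecessary and, as stated, false. The abstract isomorphism class of $\tilde Q$ does \emph{not} determine the underlying free system up to isometry: the paper's own Table~\ref{tabl5} exhibits $E14$ (on flat space) and $S5$ (on the $2$-sphere) with the \emph{same} canonical form $B^{17}(0)$, i.e., isomorphic free quadratic algebras, yet these are distinct superintegrable systems on non-isometric spaces; the authors explicitly note that "the quadratic algebras of $E14$ and $S5$ coincide". So the case-by-case injectivity check you propose would come out negative. The theorem is only correct if $\tilde Q$ is read as the \emph{concretely realized} free quadratic algebra --- the free triplet as functions on phase space over a specific manifold --- which is how both \cite{KM2014} and this paper use the term (the preceding closure theorem speaks of a free triplet \emph{generating} a free quadratic algebra). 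Under that reading step (i) evaporates and only step (ii) is needed. Your hedge attributing the failure of (i) to "the exceptional abstract contractions with no geometric model" is also off target: those exceptions concern contractions \emph{between} algebras (Section~\ref{section6}), not the injectivity of the map from systems to abstract algebras, and the genuine counterexample ($E14$ versus $S5$) is of an entirely different nature.
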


These theorems were proved for systems in~\cite{KM2014}.  The proofs are constructive: Given a free quadratic algebra $\tilde Q$ one can compute the potential~$V$ and the symmetries of the quadratic algebra~$Q$. Thus as far as superintegrable systems are concerned, all information about the systems is contained in the free classical quadratic algebras.

\begin{Remark} This paper is a companion to~\cite{EKMS2017} where we studied nondegenerate quadratic algebras, and we assume that the reader has some familiarity with this prior work. In particular, B\^ocher contractions, their properties and associated notation, are treated there and we use them in this paper without detailed comment.\end{Remark}

The layout of this paper is as follows: In Section~\ref{section2} we show how degenerate Helmholtz superintegrable systems can be split into St\"ackel equivalence classes
of Laplace conformally superintegrable systems and we determine how each Helmholtz system can be characterized in its equivalence class. In Section~\ref{section3} we determine
all B\^ocher, i.e., geometrical, contractions of the Laplace systems and obtain complete lists of the possible Helmholtz contractions. In Section~\ref{section4} we classify all abstract free quadratic algebras and determine which of these can be realized as the quadratic algebra of a Helmholtz degenerate superintegrable system. In Section~\ref{section5} we classify all abstract contractions of abstract free quadratic algebras and determine which of these can be realized as B\^ocher and Heisenberg contractions of the quadratic algebras of Helmholtz degenerate superintegrable systems. In Fig.~\ref{caption1} we describe how restrictions of nondegenerate superintegrable systems to degenerate ones, and contractions of degenerate superintegrable systems account for the lower half of the Askey scheme. The upper half of the Askey scheme is described by contractions of nondegenerate systems~\cite{EKMS2017}.
In Section~\ref{section6} we assess our results. A list of all Helmholtz degenerate superintegrable systems can be found in Appendix~\ref{appendixA}.

\section{St\"ackel transforms and Laplace equations}\label{section2}
\looseness=-1 Distinct degenerate classical or quantum superintegrable systems can be mapped to one another by St\"ackel transforms, invertible transforms that preserve the structure of the quadratic algebra. This divides the 15 systems into 6 St\"ackel equivalence classes~\cite{MPW2013}. The most convenient way to understand the equivalence classes is in terms of Laplace-like equations~\cite{KMS2016}. Since every 2D space is conformally f\/lat there always exist ``Cartesian-like'' coordinates $x$, $y$ such that the Hamilton--Jacobi equation can be expressed in the form $H=E$ where $H=\frac{p_x^2+p_y^2}{\lambda(x,y)}+\alpha V$ and $\alpha$ is a parameter. This is equivalent to the Laplace-like equation $p_x^2+p_y^2 +a_1 V_1+a_2 V_2=0$ where $V_1=\lambda V$, $V_2=\lambda$, $a_1=\alpha$, $a_2=-E$, now with 2 parameters. Symmetries (constants of the motion) for the Helmholtz equation correspond to conformal symmetries of the Laplace equation. The Hamilton--Jacobi equation is def\/ined on one of a variety of conformally f\/lat spaces but the Laplace equation is always def\/ined on f\/lat space with conformal symmetry algebra ${\mathfrak{so}}(4,\mathbb{ C})$~\cite{KMS2016}. An important observation is that the Laplace equations are St\"ackel equivalence classes: two Helmholtz systems are St\"ackel equivalent if and only if they correspond to the same Laplace equation.

\begin{Remark} Indeed, If the Laplace conformally superintegrable equation can be split in the form $p_x^2+p_y^2 +V_0-{\tilde E}W=0$, where $\tilde E$ is an arbitrary parameter, $W$ is a nonconstant function, and~$V_0$,~$W$ are independent of~$E$, then~$W$, by division, def\/ines a conformal St\"ackel transform to the superintegrable Helmholtz system ${\tilde H}=\frac{1}{W}(p_x^2+p_y^2 +V_0)={\tilde E}$. If the Laplace system admits another splitting $p_x^2+p_y^2 +V'_0-{\tilde E'}W'=0$, it determines another superintegrable Helmholtz system ${\tilde H'}=\frac{1}{W'}(p_x^2+p_y^2 +V'_0)={\tilde E'}$ and $\tilde H'$ can be obtained from ${\tilde H}$ by an invertible St\"ackel transform $\frac{W'}{W}$. Thus all Helmholtz systems that can be obtained from the Laplace equation by splitting the potential are St\"ackel equivalent to one another. \end{Remark}

The Laplace equations for nondegenerate systems were derived in~\cite{KMS2016}, see Table~\ref{Tab1}. The Laplace equations for degenerate systems are listed in Table~\ref{degpot}. The notation~$a_i$ in Table~\ref{degpot} describes how these systems can be obtained as restrictions of systems in Table~\ref{Tab1}, but with added symmetry.

\begin{table}[t!]\centering
%\resizebox{\textwidth}{!}{%
\begin{tabular}{|c||c| }
\hline
System & Non-degenerate potentials $V(x, y)$\bsep{2pt}\tsep{2pt} \\
\hline
\hline
$ [1,1,1,1] $ & $
\frac{a_1}{x^2}+\frac{a_2}{y^2}+\frac{4 a_3}{(x^2+y^2-1)^2}-\frac{4 a_4}{(x^2+y^2+1)^2}$\tsep{3pt}\bsep{3pt}\\
$ [2,1,1] $ & $ \frac{a_1}{x^2} + \frac{a_2}{y^2} - a_3 (x^2+y^2) + a_4 $\bsep{3pt}\\
$ [2,2] $ & $
\frac{a_1}{(x+i y)^2}+\frac{a_2 (x-i y)}{(x+i y)^3}+a_3-a_4 (x^2+y^2) $\bsep{3pt}\\
$ [3,1] $ & $ a_1-a_2 x+a_3 (4 x^2+y^2)+\frac{a_4}{y^2} $\bsep{3pt}\\
$ [4] $ & $ a_1-a_2 (x+i y)
+a_3 (3(x+i y)^2+2(x-i y))-a_4 (4(x^2+y^2)+2(x+i y)^3) $\bsep{3pt}\\
$ [0] $ & $ a_1-(a_2 x+a_3 y)+a_4 (x^2+y^2) $\bsep{3pt}\\
$ (1) $ & $ \frac{a_1}{(x+i y)^2}+a_2
-\frac{a_3}{(x+i y)^3}+\frac{a_4}{(x+i y)^4} $\bsep{3pt}\\
$ (2) $ & $ a_1+a_2(x+i y)+a_3(x+i y)^2+a_4(x+i y)^3 $\bsep{3pt}\\
\hline
\end{tabular}
\caption{Four parameter Laplace systems.}\label{Tab1}
\end{table}

\begin{table}[h!]\centering
\begin{tabular}{|c||c|}
\hline
System & Degenerate potentials  $V(x,y)$\tsep{2pt}\bsep{2pt}\\
\hline
\hline
$ A $ & $ \frac{4\,a_3}{(x^2+y^2-1)^2}-\frac{4a_4}{(x^2+y^2+1)^2} $\tsep{2pt}\bsep{2pt}\\
$ B $ & $ \frac{a_1}{x^2}+a_4 $\bsep{2pt}\\
$ C $ & $ a_3-a_4(x^2+y^2) $\bsep{2pt}\\
$ D $ & $ a_1-a_2x $\bsep{2pt}\\
$ E $ & $ \frac{a_1}{(x+iy)^2}+a_3 $\bsep{2pt}\\
$ F $ & $ a_1-a_2(x+iy) $\bsep{2pt}\\
\hline
\end{tabular}
\caption{Two-parameter Laplace systems.}\label{degpot}
\end{table}

The Helmholtz systems corresponding to each Laplace system are:

{\bf St\"ackel equivalence classes:} Here the notation refers to the Helmholtz superintegrable systems listed in the Appendix.
\begin{enumerate}\itemsep=0pt
 \item Class $A$ $(a_3,a_4)$: System $S3$ corresponds to $(1,0)$ and $(0,1)$. System $S6$ corresponds to $(1,1)$.
 System $D4(b)D$ corresponds to $(a_3,a_4)$ with $a_3a_4(a_3-a_4)\ne 0$.
 \item Class $B$\ $(a_1,a_4)$: System $S5$ corresponds to $(1,0)$. System $E6$ corresponds to $(0,1)$. System $D2D$ corresponds to $(a_1,a_4)$
 with $a_1a_4\ne 0$.
 \item Class $C$\ $(a_3,a_4)$: System $E3$ corresponds to $(1,0)$. System $E{18}$ corresponds to $(0,1)$. System $D3E$ corresponds to $(a_3,a_4)$
 with $a_3a_4\ne 0$.
 \item Class $D$\ $(a_1,a_2)$: System $E5$ corresponds to $(1,0)$. System $D1D$ corresponds to $(a_1,a_2)$
 with $a_1a_2\ne 0$.
 \item Class $E$\ $(a_1,a_3)$: System $E{14}$ corresponds to $(0,1)$. System $E{12}$ corresponds to $(a_1,a_3)$
 with $a_1a_3\ne 0$.
 \item Class $F$\ $(a_1,a_2)$: System $E{13}$ corresponds to $(a_1,a_2)$
 with $a_2\ne 0$. System $E{4}$ corresponds to $(1,0)$.
 \end{enumerate}
Here, for example, system $D3E$ belongs to class $C$ and is obtained from the Laplace equation by dividing it by
$a_3-a_4(x^2+y^2)$ where $a_3a_4\ne 0$, whereas $E{18}$ is obtained by the same division with $a_3=0$, $a_4=1$.

The conformal symmetry of these Laplace equations is best exploited by using tetraspherical coordinates to linearize the
action of the conformal symmetry group~\cite{Bocher,KMS2016}. These are projective coordinates $x_1$, $x_2$, $x_3$, $x_4$ on the null
cone $x_1^2+x_2^2+x_3^2+x_4^2=0$, related to f\/lat space coordina\-tes~$x$,~$y$ by
\begin{gather*} x_1=2XT,\qquad x_2=2YT,\qquad x_3=X^2+Y^2-T^2,\qquad x_4=i\big(X^2+Y^2+T^2\big),\\
x=\frac{X}{T}=-\frac{x_1}{x_3+ix_4},\qquad y=\frac{Y}{T}=-\frac{x_2}{x_3+ix_4},\qquad
 x=\frac{s_1}{1+s_3},\qquad y=\frac{s_2}{1+s_3},\\
  {\mathcal H}\equiv p_x^2+p_{y}^2+{ V}=(x_3+ix_4)^2\left(\sum_{k=1}^4p_{x_k}^2+V_B\right)
=(1+s_3)^2\left(\sum_{j=1}^3p_{s_j}^2+V_S\right),\\
V=(x_3+ix_4)^2V_B, \!\!\!\qquad
 (1+s_3)=-i\frac{(x_3+ix_4)}{x_4},\!\!\!\qquad
s_1=\frac{ix_1}{x_4},\!\!\!\qquad s_2=\frac{ix_2}{x_4},\!\!\!\qquad s_3=\frac{-ix_3}{x_4}.\end{gather*}
Thus the Laplace equation $ {\mathcal H}\equiv p_x^2+p_{y}^2+{V}=0$ in Cartesian coordinates
becomes $\sum\limits_{k=1}^4p_{x_k}^2+V_B=0$ in tetraspherical coordinates. Here, the $s_j$ refer to coordinates on the unit 2-sphere: \smash{$s_1^2+s_2^2+s_3^2=1$},

The possible limits of one superintegrable system to another can be derived and classif\/ied by using tetraspherical coordinates and special B\^ocher contractions of ${\mathfrak{so}}(4,\mathbb{ C})$ to itself. The method is described in detail in~\cite{EKMS2017, KMS2016}. Here we just recall the basic def\/inition of a B\^ocher contraction, Let ${\bf x}={\bf A}(\epsilon){\bf y}$, and ${\bf x}=(x_1,\dots,x_4)$, ${\bf y}=(y_1,\dots,y_4)$ be column vectors, and ${\bf A}=(A_{jk}(\epsilon))$ be a~$4\times 4$ matrix with matrix elements
\begin{gather*}%\label{Amatrix}
A_{kj}(\epsilon)=\sum_{\ell=-N}^Na^\ell_{kj}\epsilon^\ell,\end{gather*} where $N$ is a nonnegative integer and the $a^\ell_{kj}$ are complex constants.
(Here, $N$ can be arbitrarily large, but it must be f\/inite in any particular case.) We say that the matrix $\bf A$ def\/ines a {\it B\^ocher contraction} of the conformal algebra ${\mathfrak{so}}(4,\mathbb {C})$ to itself provided
\begin{gather*}%\label{condition1}
1)\quad
\det ({\bf A})=\pm 1, \ {\rm constant\ for\ all\ }\epsilon\ne 0,\\
%\label{Bocher2}
 2)\quad{\bf x}\cdot{\bf x}\equiv \sum_{j=1}^4x_i(\epsilon)^2={\bf y}\cdot{\bf y}+O(\epsilon).\end{gather*}
If, in addition, ${\bf A}\in O(4,\mathbb {C})$ for all $\epsilon\ne 0$ the matrix $\bf A$ def\/ines a {\it special B\"ocher contraction}. For a special B\"ocher
contraction ${\bf x}\cdot{\bf x}={\bf y}\cdot{\bf y}$, with no error term.
 (These contractions correspond to limit relations introduced by B\^ocher to obtain all orthogonal separable coordinates for Laplace and wave equations as limits of cyclidic coordinates. There is an inf\/inite family of such contractions, but they can be generated by 4 basic contractions.) Related contraction methods that don't make use of tetraspherical coordinates directly can be found in references such as~\cite{Pog01, Pog96}.

B\^ocher contractions take a Laplace system to itself. The contraction process has already been described in~\cite{EKMS2017, KMS2016} and references therein, but we discuss, brief\/ly, the main ideas. Suppose we have a degenerate Laplace superintegrable system with potential $V({\bf x},{\bf a})=a_1 V_1({\bf x})+a_2V_2({\bf x})$ and generating conformal symmetries $X={\cal X}+W_0$, $L_1={\cal L}_1+W_1$, $L_2={\cal L}_2+W_2$, where ${\cal X}$,~${\cal L}_1$,~${\cal L}_2$ are free 2nd order conformal symmetries and $W_0$, $W_1$, $W_2$ are functions of the tetraspherical coordinates~$x_i$. Applying a~B\^ocher contraction $A(\epsilon)$ to the free symmetries we obtain
\begin{gather*}{\cal X}(\epsilon)=\epsilon^{\alpha_0}{\cal X}'+O(\epsilon),\qquad {\cal L}_j(\epsilon)=\epsilon^{\alpha_j}{\cal L}_j'+O(\epsilon),\qquad j=1,2,\end{gather*}
where ${\cal X}'\in {\mathfrak{so}}(4,\C)$ and the ${\cal L}_j'$ are quadratic in ${\mathfrak{so}}(4,\C)$. By a change of basis $\{ L_1,L_2\}$ if necessary, one can verify that ${\cal X}'$, ${\cal L}_j'$, $j=1,2$ generate a free conformal quadratic algebra, The action of the B\^ocher contraction on the 2-dimensional potential space preserves its dimension and maps it smoothly as a function of~$\epsilon$, as follows from an examination of the Bertrand-Darboux equations. Thus we get a 2-dimensional potential space in the limit. To f\/ind an explicit basis for the contracted potential $V'({\bf b},{\bf y})=b_1 V_1'({\bf y})'+b_2V_2({\bf y})'$ we put $a_1=\sum\limits_{k=-\infty}^\infty c_k\epsilon^k$, $a_2=\sum\limits_{k=-\infty}^\infty d_k\epsilon^k$, where only a f\/inite number of the coef\/f\/icients $c_k$, $d_k$ can be nonzero. Then it is a linear algebra problem to determine the $c_k$, $d_k$ such that $\lim\limits_{\epsilon\to 0} V({\bf x}(\epsilon),{\bf a}(\epsilon))= V'({\bf b},{\bf y})$ exists for independent potential functions $V_1'({\bf y})'$, $V_2'({\bf y})'$, where the nonzero $c_k$, $d_k$ are linear in $b_1$, $b_2$. The limit is guaranteed to exist and is unique up to a change of basis $\{V_1'({\bf y})', V_2'({\bf y})'\}$ for the target potential. Only the last limit and the linear algebra problem need to be solved to identify the contraction. This work was carried out with the assistance of the symbol manipulation programs Maple and Mathematica. There is one additional complication; the results of the contraction are not invariant under a permutation of the indices of the hyperspherical coordinates def\/ining the contraction matrix~$A_{ij}$. Thus one B\^ocher contraction applied to a source system can yield a multiplicity of target results, and all permutations need to be examined.

The results are rather complicated. Fig.~\ref{figurea} provides a clearer idea of what is happening.  There are 4 basic B\"ocher contractions of 2D Laplace systems and each one when applied to a~Laplace system, and each permutation treated, yields another Laplace superintegrable system. A system in class $K_1$ can be obtained from a system in class~$K_2$ via contraction provided there is a directed arrow path from~$K_2$ to~$K_1$. All systems follow from $A$ for degenerate potentials, and~$A$ is a~restriction of $[1111]$ with increased symmetry. Fig.~\ref{figurea} describes only the existence or nonexistence of contractions, not the multiplicity of distinct contractions.

\begin{figure}[h]\centering
\includegraphics[width=45mm]{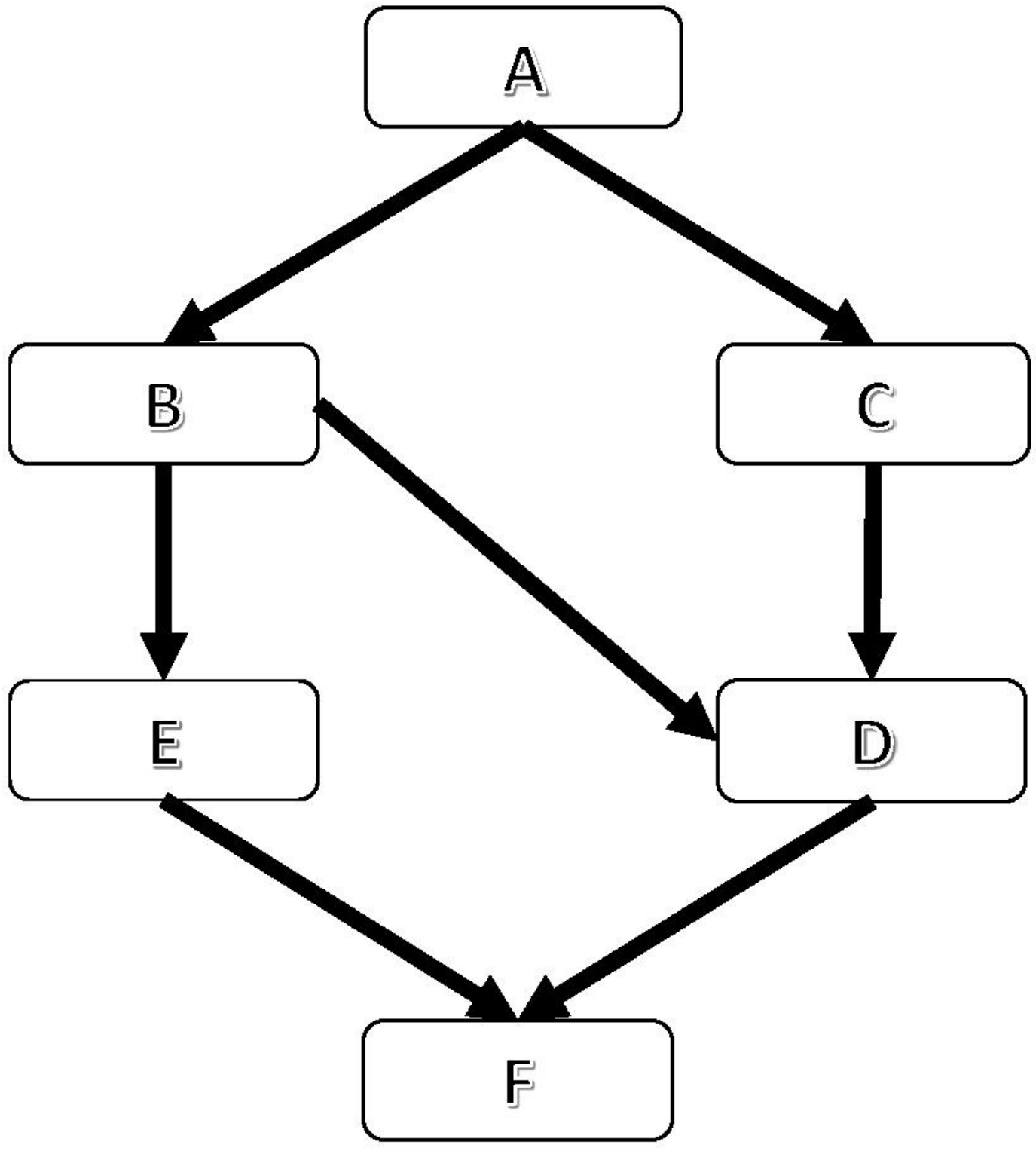}
\caption{Contraction relations for degenerate Laplace systems.}\label{figurea}
\end{figure}

Our basic interest, however, is in Helmholtz contractions, i.e., contractions of a~Helmholtz superintegrable system to another such system. The key is to start with a Laplace system, take a conformal St\"ackel transform to a Helmholtz system (which we initially interpret as another Laplace system) and then take a B\^ocher contraction of the new system, which as described below gives a new Helmholtz system. The result is the contraction of one Helmholtz system to another This can be done in such a~way the ``diagrams commute'', i.e., a Helmholtz contraction is induced by a B\^ocher contraction and a St\"ackel transform~\cite{KMS2016}. For example, let $\cal H$ be the Hamiltonian for class~$A$. In terms of tetraspherical coordinates a general conformal St\"ackel transformed potential will take the form
\begin{gather*} V=\frac{\frac{a_3}{x_3^2}+\frac{a_4}{x_4^2}}{\frac{b_3}{x_3^2}+\frac{b_4}{x_4^2}} =\frac{V_{A}}{F({\bf x},{\bf b})},\end{gather*}
where $ F({\bf x},{\bf b})=\frac{b_3}{x_3^2}+\frac{b_4}{x_4^2}$, and the transformed Hamiltonian will be ${\hat {\cal H}}=\frac{1}{ F({\bf x},{\bf b})}{\cal H}$, where the transform is determined by the f\/ixed vector $(b_3,b_4)$. Now we apply the B\^ocher contraction $[1,1,1,1]\to [2,1,1]$ to this system. Depending on the permutation of the indices~$x_j$, in the limit as $\epsilon\to 0$ the potential $V_{A}\to V_{B}$, or $V_{A}\to V_{C}$, and $\cal H\to {\cal H}'$, the $B$ or $C$ system. Now consider
$ F({\bf x}(\epsilon),{\bf b})= V'({\bf x}',b)\epsilon^\alpha+O\big(\epsilon^{\alpha+1}\big)$,
where the integer exponent $\alpha$ depends upon our choice of $\bf b$. From our theory, the system def\/ined by Hamiltonian ${\hat {\cal H}}'=\lim\limits_{\epsilon\to 0}\epsilon^\alpha {\hat{\cal H}}(\epsilon)=\frac{1}{V'({\bf x}',{\bf b})}{\cal H}'$ is a superintegrable system that arises from the system $A$ by a conformal St\"ackel transform induced by the potential $V'({\bf x}',{\bf b})$. Thus the Helmholtz superintegrable system with potential $V=V_{A}/F$ contracts to the Helmholtz superintegrable system with potential $V_{S}/V'$, where $S=B$ or $S=C$. The contraction is induced by a generalized In\"on\"u--Wigner Lie algebra contraction of the conformal algebra ${\mathfrak{so}}(4,\mathbb{ C})$. Always the $V'$ can be identif\/ied with a specialization of the $S$ potential. Thus a conformal St\"ackel transform of $A$ has been contracted to a conformal St\"ackel transform of~$S$.

\section{Degenerate Helmholtz contractions} \label{Helmholtzcontractions}\label{section3} The superscript for each targeted Helmholtz system is the value of the exponent $\alpha$ associated with the contraction. In each table below, corresponding to a single Laplace equation equivalence class, the top line is
a list of the Helmholtz systems in the class, and the lower lines are the target systems under the B\^ocher contraction.
\begin{gather*}%\label{Table1}
\begin{array}{llllllll}& A &{\rm equivalence}&{\rm class} & {\rm contractions}& &\\
\hline
{\rm contraction} &{S3}&S6&D4D\\
\hline
{[1111]}\downarrow[211]&E3^2&E{18}^4&E3^2\tsep{2pt}\\
&S5^0, \ E6^2&S5^0&S5^0\\
\hline
{[1111]}\downarrow[22]&E3^2 &E{18}^4 &E{3}^2\tsep{2pt}\\
&E{14}^2&E{12}^2&E{12}^2\\
\hline
{[1111]}\downarrow[31]&E5^2&E5^2&E5^2, \ D1D^3\tsep{2pt}\\
&E6^2, \ S5^0&S5^0&S5^0\\
\hline
{[1111]}\downarrow[4]&E{4}^6&E{4}^6&E{4}^6, \ E{13}^8\tsep{2pt}\\
\hline
\end{array}
\\
%\label{Table2}
\begin{array}{llllllll}& B &{\rm equivalence}&{\rm class} & {\rm contractions}& &\\
\hline
{\rm contraction} &{S5}&E6&D2D\\
\hline
{[1111]}\downarrow[211]&E{14}^2&E{14}^0&E{14}^0\tsep{2pt}\\
&S5^0& E6^2&S5^0\\
\hline
{[1111]}\downarrow[22]&E{14}^2 &E{14}^2 &E{12}^2\tsep{2pt}\\
\hline
{[1111]}\downarrow[31]&E5^2&E5^2&E5^2,\ D1D^3\tsep{2pt}\\
&S5^0&E6^2&S5^0, \ E6^2\\
\hline
{[1111]}\downarrow[4]&E{4}^6&E{4}^6&E{4}^6, \ E{13}^8\tsep{2pt}\\
\hline
\end{array}
\\
%\label{Table3}
\begin{array}{llllllll}& C &{\rm equivalence}&{\rm class} & {\rm contractions}& &\\
\hline
{\rm contraction} &{E3}&E{18}&D3E\tsep{2pt}\\
\hline
{[1111]}\downarrow[211]&E{5}^2&E{5}^2&E{5}^{2,3}\tsep{2pt}\\
&E3^2& E{18}^4&E3^2\\
\hline
{[1111]}\downarrow[22]&E3^2 &E{18}^{2,4} &E3^2, \ D3E^2\tsep{2pt}\\
\hline
[1111]\downarrow[31]&E5^2&E5^2&E5^2, \ D1D^3\tsep{2pt}\\
\hline
{[1111]}\downarrow[4]&E{4}^6&E{4}^6&E{4}^6, \ E{13}^8\tsep{2pt}\\
\hline
\end{array}
\\
%\label{Table4}
\begin{array}{llllllll}& D&{\rm equivalence}&{\rm class}& {\rm contractions}& &\\
\hline
{\rm contraction} &{E5}&D1D\\
\hline
{[1111]}\downarrow[211]&E{4}^0&E{13}^{-1},\ E4^0\tsep{2pt}\\
&E5^2&E5^2,\  D1D^3\\
\hline
{[1111]}\downarrow[22]&E4^2 &E{4}^2,\ E{13}^2\tsep{2pt}\\
\hline
{[1111]}\downarrow[31]&E5^2&E5^2, \ D1D^2\tsep{2pt}\\
\hline
[1111]\downarrow[4]&E{4}^6&E{4}^6, \ E{13}^8\tsep{2pt}\\
\hline
\end{array}
\\
%\label{Table6}
\begin{array}{llllllll}& E&{\rm equivalence}&{\rm class}& {\rm contractions}& &\\
\hline
{\rm contraction} &E{14}&E{12}\\
\hline
{[1111]}\downarrow[211]&E{4}^{2}&E{4}^{2}, \ E{13}^3\tsep{2pt}\\
&E{14}^0&E{14}^0\\
\hline
{[1111]}\downarrow[22]&E4^2 &E{4}^2, \ E{13}^4\tsep{2pt}\\
&E{14}^2&E{12}^2\\
\hline
{[1111]}\downarrow[31]&E4^2&E4^2, \ E{13}^3\tsep{2pt}\\
\hline
[1111]\downarrow[4]&E{4}^6&E{4}^6, \ E{13}^8\tsep{2pt}\\
\hline
\end{array}
\\
%\label{Table5}
\begin{array}{llllllll}& F &{\rm equivalence}&{\rm class} & {\rm contractions}& &\\
\hline
{\rm contraction} &{E4}&E{13}\\
\hline
{[1111]}\downarrow[211]&E{4}^{0,2}&E{13}^{-1,3},\ E4^2\tsep{2pt}\\
\hline
{[1111]}\downarrow[22]&E4^2 &E{13}^2 \tsep{2pt}\\
\hline
{[1111]}\downarrow[31]&E4^2&E4^2\tsep{2pt}\\
\hline
{[1111]}\downarrow[4]&E{4}^6&E{4}^6\tsep{2pt}\\
\hline
\end{array}
\end{gather*}

\section[Classif\/ication of free abstract degenerate classical quadratic algebras]{Classif\/ication of free abstract degenerate\\ classical quadratic algebras}\label{section4}
\subsection{Abstract quadratic algebras}
The special case of a 2D degenerate classical superintegrable system on a constant curvature space or a Darboux space with all parameters equal to zero (no potential) gives rise to a special kind of quadratic algebra which we call a free abstract quadratic algebra. Below is a precise def\/inition.
\begin{Definition}
An abstract 2D free degenerate (classical) quadratic algebra is a complex Poisson algebra possessing a linearly independent generating set $\{\mathcal{L}_1, \mathcal{L}_2, \mathcal{L}_3=\mathcal{H},\mathcal{X}\}$ that satisfy:
\begin{enumerate}\itemsep=0pt
\item The associative product is abelian.
\item $\mathcal{A}$ is graded: $\mathcal{A}=\oplus_{k=0}^{\infty}\mathcal{A}_k$ where each $\mathcal{A}_k$ is a complex vector space, the associative product takes $\mathcal{A}_k \times \mathcal{A}_l$ into $\mathcal{A}_{k+l}$, and the Poisson brackets $\{\_\_,\_\_\}$ goes from $\mathcal{A}_k \times \mathcal{A}_l$ into $\mathcal{A}_{k+l-1}$.
\item $ \mathcal{A}_0=\mathbb{C}$.
\item $\mathcal{A}_1=\mathbb{C}\mathcal{X}$.
\item $\mathcal{A}_2=\operatorname{span}\{\mathcal{L}_1,\mathcal{L}_2,\mathcal{L}_3,\mathcal{X}^2\} $.
\item The elements $\{\mathcal{L}_1, \mathcal{L}_2, \mathcal{L}_3,\mathcal{L}_4:=\mathcal{X}^2\}$ satisfy a relation given by a~ homogeneous polynomial of degree~2, ${G}(\mathcal{L}_1, \mathcal{L}_2, \mathcal{L}_3,\mathcal{L}_4 )=0$.
\item Any nonzero polynomial $F$ of minimal degree such that ${F}(\mathcal{L}_1, \mathcal{L}_2, \mathcal{L}_3,\mathcal{L}_4 )=0$ is a multiple of $G$.
\item The center of $\mathcal{A}$ contains no elements of order $1$ and any element of order two in the center is a multiple of $\mathcal{L}_3=\mathcal{H}$.
\item The polynomial $G$ depends non-trivially on at least one of the non-central 2nd order generators ${\mathcal L}_1$, ${\mathcal L}_2$.
\end{enumerate}
 We shall simply refer to such an algebra as a \textit{free abstract quadratic algebra}.\label{defn1}
\end{Definition}

\begin{Remark}
A free abstract quadratic algebra is a special case of a three-dimensional af\/f\/ine Poisson variety, see, e.g., \cite{DufZun05, LauPiVan13}. Description of some of the properties of such Poisson varieties, along with classif\/ication of such Poisson structures on the af\/f\/ine space $\mathbb{C}^3$ can be found in \cite[Section~9.2]{LauPiVan13}. Poisson structures on the manifold $\mathbb{R}^3$ were classif\/ied at \cite{GraMarPer93}. It should be noted that the Poisson algebras considered in this paper, are not quadratic Poisson structures in the sense of \cite[Section~8.2]{LauPiVan13}. In addition our classif\/ication scheme is dif\/ferent from the above mentioned classif\/ications.
\end{Remark}

\begin{Theorem}
Keeping the same notations as of Definition~{\rm \ref{defn1}}, let $\mathcal{A}$ be a free abstract quadratic algebra. Then there exists a non zero $K\in \mathbb{C}$ such that:{\samepage
\begin{gather}\label{eq}
 \{\mathcal{L}_1, \mathcal{L}_2\}= K\frac{\partial G}{\partial {\cal X}},\qquad \{{\cal X}, \mathcal{L}_1\}= K\frac{\partial G}{\partial\mathcal{L}_2 },\qquad
 \{{\cal X}, \mathcal{L}_2\}= -K\frac{\partial G}{\partial \mathcal{L}_1}.
\end{gather}
$($We shall refer to equations \eqref{eq} as the structure equations of $\mathcal{A})$.}
\end{Theorem}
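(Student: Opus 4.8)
### Proof proposal

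The plan is to exploit the low-dimensionality of $\mathcal A_1,\mathcal A_2$ together with the single defining relation $G=0$ and the Jacobi/Leibniz identities for the Poisson bracket. First I would note that by the grading axiom, $\{\mathcal L_i,\mathcal L_j\}$ lies in $\mathcal A_3$ and $\{\mathcal X,\mathcal L_i\}$ lies in $\mathcal A_2=\operatorname{span}\{\mathcal L_1,\mathcal L_2,\mathcal L_3,\mathcal X^2\}$, while $\{\mathcal X,\mathcal X\}=0$. Since $\mathcal A_1=\mathbb C\mathcal X$, the Leibniz rule forces $\mathcal A_3=\mathcal X\cdot\mathcal A_2$, so $\{\mathcal L_1,\mathcal L_2\}=\mathcal X\cdot(\text{element of }\mathcal A_2)$; thus all three brackets in question are determined by \emph{four} unknown linear forms in $\mathcal L_1,\mathcal L_2,\mathcal L_3,\mathcal X^2$ (one for $\{\mathcal L_1,\mathcal L_2\}/\mathcal X$, and one each for $\{\mathcal X,\mathcal L_1\}$, $\{\mathcal X,\mathcal L_2\}$), hence by finitely many scalars. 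The claim is that, up to the single scalar $K$, these coincide with the partials of $G$ with respect to $\mathcal X,\mathcal L_2,\mathcal L_1$ respectively (with the stated signs).

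Next I would impose that $G(\mathcal L_1,\mathcal L_2,\mathcal L_3,\mathcal X^2)=0$ is preserved by all Hamiltonian vector fields. Applying the derivation $\{\mathcal X,\cdot\}$ to $G=0$ and using the chain rule for Poisson derivations gives
\begin{gather*}
0=\{\mathcal X,G\}=\frac{\partial G}{\partial\mathcal L_1}\{\mathcal X,\mathcal L_1\}+\frac{\partial G}{\partial\mathcal L_2}\{\mathcal X,\mathcal L_2\}+\frac{\partial G}{\partial\mathcal L_3}\{\mathcal X,\mathcal L_3\}+\frac{\partial G}{\partial\mathcal L_4}\{\mathcal X,\mathcal X^2\}.
\end{gather*}
The last two terms vanish: $\mathcal L_3=\mathcal H$ is central (a standing hypothesis about degenerate quadratic algebras, also forced by axiom 8 together with the free-algebra setup), and $\{\mathcal X,\mathcal X^2\}=2\mathcal X\{\mathcal X,\mathcal X\}=0$. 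So $\frac{\partial G}{\partial\mathcal L_1}\{\mathcal X,\mathcal L_1\}=-\frac{\partial G}{\partial\mathcal L_2}\{\mathcal X,\mathcal L_2\}$ as elements of $\mathcal A$. Similarly, applying $\{\mathcal L_1,\cdot\}$ and $\{\mathcal L_2,\cdot\}$ to $G=0$ yields two more such "syzygy" relations among the unknown linear forms and the partials $\partial G/\partial\mathcal L_i$. Because $G$ is an \emph{irreducible} generator of the full relation ideal (axioms 6--7: any relation of minimal degree is a multiple of $G$) and depends nontrivially on $\mathcal L_1$ or $\mathcal L_2$ (axiom 9), the polynomials $\partial G/\partial\mathcal L_1$ and $\partial G/\partial\mathcal L_2$ have no common factor that is itself a relation; this is exactly what is needed to conclude from $\frac{\partial G}{\partial\mathcal L_1}\{\mathcal X,\mathcal L_1\}+\frac{\partial G}{\partial\mathcal L_2}\{\mathcal X,\mathcal L_2\}=0$ in $\mathcal A$ that there is a single scalar $K$ with $\{\mathcal X,\mathcal L_1\}=K\,\partial G/\partial\mathcal L_2$ and $\{\mathcal X,\mathcal L_2\}=-K\,\partial G/\partial\mathcal L_1$ — after checking degrees force the proportionality factor to be a constant rather than a linear form. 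The analogous argument using $\{\mathcal L_1,G\}=0$ and $\{\mathcal L_2,G\}=0$, combined with $\{\mathcal L_1,\mathcal L_2\}\in\mathcal X\mathcal A_2$ and the already-found expressions for $\{\mathcal X,\mathcal L_i\}$, pins down $\{\mathcal L_1,\mathcal L_2\}=K\,\partial G/\partial\mathcal X$ with the \emph{same} $K$; the Jacobi identity for the triple $(\mathcal X,\mathcal L_1,\mathcal L_2)$ provides the final consistency check that the constant is common to all three equations and is nonzero (if $K=0$ the algebra would be abelian as a Poisson algebra, contradicting that $\{\mathcal L_1,\mathcal L_2\}$ etc.\ are nontrivial — which follows because $\mathcal X^2\notin\operatorname{span}\{\mathcal L_1,\mathcal L_2,\mathcal L_3\}$ and the center is as small as axiom 8 says, so not every generator can Poisson-commute with every other).

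The main obstacle is the algebraic step where I pass from an identity of the shape $A\cdot u + B\cdot v=0$ \emph{in the quotient algebra $\mathcal A$} (not in the polynomial ring) to proportionality $u=K B'$, $v=-K A'$ with a \emph{constant} $K$. This requires: (i) lifting to the polynomial ring $\mathbb C[\mathcal L_1,\mathcal L_2,\mathcal L_3,\mathcal L_4]$, so that $A u+Bv$ becomes a multiple of $G$; (ii) a UFD/primary-decomposition argument using irreducibility of $G$ and axiom 9 to control common factors of $\partial G/\partial\mathcal L_1,\partial G/\partial\mathcal L_2$ (one must rule out the degenerate possibility that $G$ has a repeated factor or factors through $\mathcal L_3,\mathcal X^2$ alone); and (iii) a bookkeeping check that the grading ($\deg\mathcal X=1$, $\deg\mathcal L_i=2$, $\deg G=4$, so $\partial G/\partial\mathcal L_i$ has degree $2$, $\{\mathcal X,\mathcal L_i\}$ has degree $2$, and any correction term would have degree $\le 1$, forcing the scalar to be constant and the correction to vanish once one uses that the only degree-$1$ element is $\mathcal X$ and $\{\mathcal X,\mathcal L_i\}$ cannot contain a $\mathcal X$ term by a parity/weight count). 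I would handle (ii) by treating separately the cases in a rough normal form for $G$ and invoking the earlier classification machinery only if a slick uniform argument is not available; (iii) is routine once (ii) is in place.
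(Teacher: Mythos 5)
Your proposal follows essentially the same route as the paper's proof: apply the derivations $\{\mathcal{X},\cdot\}$, $\{\mathcal{L}_1,\cdot\}$, $\{\mathcal{L}_2,\cdot\}$ to the relation $G=0$, use the chain rule together with the centrality of $\mathcal{H}$ and $\{\mathcal{X},\mathcal{X}^2\}=0$ to obtain the syzygies, deduce proportionality of the brackets to the partials of $G$ with a common constant $K$, and exclude $K=0$ because $\mathcal{X}$ would then be central, contradicting axiom~8. The paper's own argument is in fact terser than yours -- it passes directly from the degree count and the nonvanishing of one of $\partial G/\partial\mathcal{L}_1$, $\partial G/\partial\mathcal{L}_2$ to the proportionality -- so the technical care you flag in points (i)--(iii) goes beyond, rather than falls short of, what the authors supply.
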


\begin{proof}  Note that $ {G}(\mathcal{L}_1, \mathcal{L}_2, \mathcal{L}_3,\mathcal{L}_4 )=0$ implies that
\begin{gather*}
0=\{\mathcal{X},G\}=\frac{\partial G}{\partial \mathcal{L}_1}\{{\cal X}, \mathcal{L}_1\}+\frac{\partial G}{\partial \mathcal{L}_2}\{{\cal X}, \mathcal{L}_2\}.
\end{gather*}
The above equation is a polynomial expression in $\chi^2,\mathcal{L}_1$, $\mathcal{L}_2$, $\mathcal{H}$. Any one of the terms $\frac{\partial G}{\partial \mathcal{L}_1}$, $\{{\cal X}, \mathcal{L}_1\}$, $\frac{\partial G}{\partial \mathcal{L}_2}$, $\{{\cal X}, \mathcal{L}_2\}$ is either zero or a polynomial of degree one in the variables $\chi^2$, $\mathcal{L}_1$, $\mathcal{L}_2$,~$\mathcal{H}$. Since at least one of the terms $\frac{\partial G}{\partial \mathcal{L}_1}$, $\frac{\partial G}{\partial \mathcal{L}_2}$ is non-zero then we must have
\begin{gather*}\nonumber
 \{{\cal X}, \mathcal{L}_1\}= K\frac{\partial G}{\partial\mathcal{L}_2 },\qquad
 \{{\cal X}, \mathcal{L}_2\}= -K\frac{\partial G}{\partial \mathcal{L}_1},
\end{gather*}
for some $K$. If $K= 0$ then the f\/irst order element ${\cal X}$ must be in the center of $\mathcal{A}$ which is impossible. Hence $K\neq 0$. Similarly $ {G}(\mathcal{L}_1, \mathcal{L}_2, \mathcal{L}_3,\mathcal{L}_4 )=0$ implies
\begin{gather*}
 0=\{ \mathcal{L}_1,G\}=\frac{\partial G}{\partial \mathcal{X}}\{ \mathcal{L}_1,{\cal X}\}+\frac{\partial G}{\partial \mathcal{L}_2}\{\mathcal{L}_1, \mathcal{L}_2\},\\
 0=\{ \mathcal{L}_2,G\}=\frac{\partial G}{\partial \mathcal{X}}\{ \mathcal{L}_2,{\cal X}\}+\frac{\partial G}{\partial \mathcal{L}_1}\{\mathcal{L}_2, \mathcal{L}_1\},
\end{gather*}
and along with our assumption that at least one of the terms $\frac{\partial G}{\partial \mathcal{L}_1}$, $\frac{\partial G}{\partial \mathcal{L}_2}$ is non-zero we see that $\{\mathcal{L}_1, \mathcal{L}_2\}= K\frac{\partial G}{\partial {\cal X}}$.
\end{proof}

\begin{Corollary}\label{lemma}
Keeping the same notations as of Definition~{\rm \ref{defn1}}, let $\mathcal{A}$ be a free abstract quadratic algebra. Then $ {G}(\mathcal{L}_1, \mathcal{L}_2, \mathcal{L}_3,\mathcal{L}_4 )$ depends non-trivially in at least two of the three generators $\mathcal{L}_1$, $\mathcal{L}_2$, $\mathcal{L}_4$.
\end{Corollary}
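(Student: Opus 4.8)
The plan is a short argument by contradiction, using the structure equations \eqref{eq} of the preceding Theorem together with conditions (8) and (9) of Definition~\ref{defn1}. Suppose, contrary to the claim, that $G$ depends non-trivially on at most one of the three generators $\mathcal{L}_1,\mathcal{L}_2,\mathcal{L}_4=\mathcal{X}^2$. By condition~(9), $G$ depends non-trivially on at least one of $\mathcal{L}_1,\mathcal{L}_2$; since the conditions of Definition~\ref{defn1} are symmetric under interchanging $\mathcal{L}_1$ and $\mathcal{L}_2$, we may assume without loss of generality that $\partial G/\partial\mathcal{L}_1\neq 0$ while $\partial G/\partial\mathcal{L}_2=0$ and $\partial G/\partial\mathcal{L}_4=0$. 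Thus $G=G(\mathcal{L}_1,\mathcal{L}_3)$ is a homogeneous quadratic in $\mathcal{L}_1$ and the central generator $\mathcal{L}_3=\mathcal{H}$ alone.

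Next I would feed this into \eqref{eq}. From $\{\mathcal{X},\mathcal{L}_1\}=K\,\partial G/\partial\mathcal{L}_2$ we get $\{\mathcal{X},\mathcal{L}_1\}=0$. Viewing $G$ through $\mathcal{L}_4=\mathcal{X}^2$ and applying the chain rule, $\partial G/\partial\mathcal{X}=2\mathcal{X}\,\partial G/\partial\mathcal{L}_4=0$, so $\{\mathcal{L}_1,\mathcal{L}_2\}=K\,\partial G/\partial\mathcal{X}=0$. Finally $\{\mathcal{L}_1,\mathcal{L}_3\}=\{\mathcal{L}_1,\mathcal{H}\}=0$ because $\mathcal{H}$ is central, and $\{\mathcal{L}_1,\mathcal{L}_1\}=0$ trivially. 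Hence $\mathcal{L}_1$ Poisson-commutes with each of the associative generators $\mathcal{X},\mathcal{L}_1,\mathcal{L}_2,\mathcal{L}_3$ of $\mathcal{A}$; since the Poisson bracket is a derivation in each slot (and the associative product is abelian by condition~(1)), it follows that $\mathcal{L}_1$ lies in the center of $\mathcal{A}$. But $\mathcal{L}_1\in\mathcal{A}_2$ is a central element of order $2$, so condition~(8) forces $\mathcal{L}_1$ to be a scalar multiple of $\mathcal{H}=\mathcal{L}_3$. This contradicts the linear independence of the generating set $\{\mathcal{L}_1,\mathcal{L}_2,\mathcal{L}_3,\mathcal{X}^2\}$ required in Definition~\ref{defn1} (the scalar cannot be $0$ either, since $\mathcal{L}_1\neq 0$). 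The only possibility left under the negation of the claim is that $G$ depends non-trivially only on $\mathcal{L}_4$ (and possibly $\mathcal{L}_3$), which is excluded outright by condition~(9). Therefore $G$ depends non-trivially on at least two of $\mathcal{L}_1,\mathcal{L}_2,\mathcal{L}_4$.

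I do not anticipate a real obstacle: the whole corollary is a two-line consequence of the structure equations. The only points demanding care are (i) reading ``$G$ depends non-trivially on $\mathcal{L}_i$'' as the polynomial condition $\partial G/\partial\mathcal{L}_i\neq 0$; (ii) the bookkeeping in passing between the $\mathcal{X}$-description and the $\mathcal{L}_4=\mathcal{X}^2$-description, i.e.\ using $\partial G/\partial\mathcal{X}=2\mathcal{X}\,\partial G/\partial\mathcal{L}_4$ when evaluating $\{\mathcal{L}_1,\mathcal{L}_2\}$ from \eqref{eq}; and (iii) the elementary fact that an element Poisson-commuting with a generating set lies in the center, which relies on the Leibniz rule for the bracket together with condition~(1).
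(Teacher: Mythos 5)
Your proof is correct and follows essentially the same route as the paper: the paper's (one-line) proof also argues that if $G$ depended on at most one of $\mathcal{L}_1,\mathcal{L}_2,\mathcal{L}_4$, the structure equations \eqref{eq} would force a second-order generator other than a multiple of $\mathcal{H}$ into the center, contradicting condition~(8). You have simply filled in the details of that sketch (identifying $\mathcal{L}_1$ as the offending central element, disposing of the remaining cases via condition~(9)), and the bookkeeping, including the chain-rule step $\partial G/\partial\mathcal{X}=2\mathcal{X}\,\partial G/\partial\mathcal{L}_4$, is sound.
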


\begin{proof} It easily follows from the structure equations (\ref{eq}) that if $ {G}(\mathcal{L}_1, \mathcal{L}_2, \mathcal{L}_3,\mathcal{L}_4 )$ depends on at most one of the generators $\mathcal{L}_1$, $\mathcal{L}_2$, $\mathcal{L}_4$ then the center of~$\mathcal{A}$ contains a second order generator which is not a multiple of $\mathcal{H}$.
\end{proof}

Free abstract quadratic algebras appear as symmetry algebras of 2D degenerate free superintegrable systems. That is the main motivation for their study. For a~given~$ G$ there is a~1-parameter family of quadratic algebras, parametrized by~$K$. However, these algebras are isomorphic. It follows that the particular choice of nonzero $ K$ in the classif\/ication theory to follow is immaterial.

\begin{Note} If phase space generators ${\mathcal L}_1$, ${\mathcal L}_2$, ${\mathcal H}$, ${\mathcal X}$ satisfy a 4th order (in the momentum variables) relation ${ G}=0$ but the closure relations for $\{{\mathcal X},{\mathcal L}_1\}$, $\{{\mathcal X},{\mathcal L}_2\}$, $\{{\mathcal L}_1,{\mathcal L}_2\}$, are not satisf\/ied, then~$K$ is rational in the generators.

For quantum degenerate systems, knowledge of the Casimir relation $ G(L_1,L_2,H,X, \alpha)=0$ is suf\/f\/icient to determine the quadratic algebra, subject to the same conditions as the classical case. Details can be found in \cite{HKMS2015}.\end{Note}

\subsection{The symmetry group of a free abstract quadratic algebra}
In this section we determine the symmetry group of a free abstract quadratic algebra. Note that once we f\/ixed a generating set $\{\mathcal{L}_1, \mathcal{L}_2, \mathcal{L}_3,{\cal X} \}$, the Casimir $G$ is given by a symmetric $4\times 4$ matrix $B(G)$ def\/ined by
\begin{gather*}%\label{e25}
 {G}(\mathcal{L}_1, \mathcal{L}_2, \mathcal{L}_3,\mathcal{L}_4)=\sum_{i,j=1}^{4}B({G})_{ij} \mathcal{L}_i\mathcal{L}_j,\end{gather*}
The pair $(B(G),K)$ (consisting of the Casimir and the multiplicative constant in the structure equations) is def\/ined only up to a constant. That is, the pair $\big(zB(G),z^{-1}K\big)$ for any nonzero~$z$ can also serve as a Casimir and a structure constant for the same quadratic algebra and the same basis. A free abstract quadratic algebra is completely determined by a generating set $ \{\mathcal{L}_1, \mathcal{L}_2, \mathcal{L}_3,{\cal X}  \}$ (satisfying the conditions of Def\/inition~\ref{defn1}) together with the pair $(B(G),K)$. As noted previously, the constant $K$ can always be normalized to~$1$.
Given two sets of generators $ \{\mathcal{L}_1, \mathcal{L}_2, \mathcal{L}_3,{\cal X}  \}$, $ \big\{ \widetilde{\mathcal{L}}_1,
 \widetilde{\mathcal{L}}_2, \widetilde{\mathcal{L}}_3, \widetilde{{\cal X}} \big\}$ of the same algebra~$\mathcal{A}$ such that:
 \begin{enumerate}\itemsep=0pt
 \item $2=\operatorname{deg}(\mathcal{L}_1)= \operatorname{deg}\big(\widetilde{\mathcal{L}}_1\big)=\operatorname{deg}(\mathcal{L}_2)= \operatorname{deg}(\widetilde{\mathcal{L}}_2)=\operatorname{deg}(\mathcal{L}_3)= \operatorname{deg}\big(\widetilde{\mathcal{L}}_3\big)$.
 \item $1=\operatorname{deg}(\mathcal{X})= \operatorname{deg}\big(\widetilde{\mathcal{X}}\big)$.
 \item ${\mathcal{L}}_3$ and $\widetilde{\mathcal{L}}_3$ are in the center of $\mathcal{A}$.
 \end{enumerate}
Then there is a 'change of basis matrix' $A$ of the form
\begin{gather}\label{eq21l}
A:=\left(\begin{matrix}
A_{1,1} &A_{1,2} &A_{1,3} & A_{1,4}&0 \\
A_{2,1} &A_{2,2} &A_{2,3}& A_{2,4}&0 \\
0 &0 &A_{3,3} & 0 &0\\
0 &0 &0& A_{4,4}&0\\
0 &0 &0&0 & A_{5,5}
\end{matrix}\right)\in {\rm GL}(5,\mathbb{C}),
\end{gather}
with $A_{4,4}=A_{5,5}^2$ such that for $\mathcal{L}:=(\mathcal{L}_1,\mathcal{L}_2,\mathcal{L}_3,\mathcal{L}_4,\mathcal{X})^{\rm t}$, $\widetilde{\mathcal{L}}:=\big(\widetilde{\mathcal{L}}_1,\widetilde{\mathcal{L}}_2,\widetilde{\mathcal{L}}_3,\widetilde{\mathcal{L}}_4, \widetilde{\mathcal{X}}\big)^{\rm t}$
 \begin{gather}\label{Tran'}
{\mathcal{L}}=A\widetilde{\mathcal{L}}.
\end{gather}

For $A$ as above we def\/ine
\begin{gather*}
 \widehat{A}=\left(\begin{matrix}
A_{1,1} &A_{1,2} &A_{1,3} & A_{1,4} \\
A_{2,1} &A_{2,2} &A_{2,3} & A_{2,4} \\
0 &0 &A_{3,3} &0\\
0 &0 &0 & A_{4,4}
\end{matrix}\right).
\end{gather*}
Besides linear change of basis we can also rescale the invariants $(B(G),K)$. The full symmetry group of a free abstract quadratic algebra is given by the collection of pairs $(A,z)$, $A$ as in equation~(\ref{eq21l}) and $z\in \mathbb{ C}^*$. We shall denote this group by $G_{\operatorname{degn}}$. The action of such $(A,z)\in G_{\operatorname{degn}}$ on the generators is given by equation (\ref{Tran'}) and on $(G,K)$ by
\begin{gather*}%\label{e23}
 B(G')=z\widehat{A}^{\rm t} B({G}) \widehat{A},\\
 K'=z^{-1}K(A_{1,1}A_{2,2}-A_{1,2}A_{2,1})^{-1}A_{5,5}^{-1}, %\label{e24}
\end{gather*}
where $B({G})$ is the symmetric matrix representing $G$ with respect to $ \{\mathcal{L}_1, \mathcal{L}_2, \mathcal{L}_3,\mathcal{L}_4  \}$,
 and  $B(G')$ is the symmetric matrix representing $G':=(A,z)\cdot G$ with respect to $ \big\{\widetilde{\mathcal{L}}_1,\widetilde{\mathcal{L}}_2,\widetilde{\mathcal{L}}_3,\widetilde{\mathcal{L}}_4\big\}$. We shall call the subgroup of $G_{\operatorname{degn}}$ of all pairs of the form $(A,1)$ as the group of linear change of bases,we denote it by $G_{\operatorname{degn}}^{\operatorname{lin}}$. In the following we shall determine a canonical form for the pair $(B(G),K)$ of each abstract quadratic algebra. This means that we shall decide upon a unique representative from each orbit of $G_{\operatorname{degn}}$ in the space of all possible pairs $(B(G),K)$ that arise from an abstract quadratic algebra. Without loss of generality we can assume that $K=1$ and determine representative from each orbit of the subgroup of $G_{\operatorname{degn}}$ that f\/ixes the value of~$K$. Explicitly this group is given by
\begin{gather*}G_{\operatorname{degn}}^K:=\big\{(A,z)\in G_{\operatorname{degn}}\,|\,z=(A_{1,1}A_{2,2}-A_{1,2}A_{2,1})^{-1}A_{5,5}^{-1} \big\}.\end{gather*}

\subsection{The canonical form} \label{canonical form}
In this section after we identify some invariants of orbits of $(B(G),K)$ under the action of $G_{\operatorname{degn}}^{\operatorname{K}}$ we consider the reduction to a canonical form. That is, for each orbit we choose a representative which we call the canonical form $(B(G),K)$. For a given Casimir $(B(G),K)$ of a free abstract quadratic algebra $\mathcal{A}$ and a given $(A,z)\in G_{\operatorname{degn}}^{\operatorname{K}} $ we introduce the following notations:
\begin{gather*}
 \widehat{A}=\left(\begin{matrix}
A_{1,1} &A_{1,2} &A_{1,3} & A_{1,4} \\
A_{2,1} &A_{2,2} &A_{2,3} & A_{2,4} \\
0 &0 &A_{3,3} &0\\
0 &0 &0 & A_{4,4}
\end{matrix}\right)=\left(\begin{matrix}
r&s \\
 0&t
\end{matrix}\right),\\
 B({G})=\left(\begin{matrix}
b_{1,1} &b_{1,2} &b_{1,3} & b_{1,4} \\
b_{1,2} &b_{2,2} &b_{2,3} & b_{2,4} \\
b_{1,3} &b_{2,3} &b_{3,3} &b_{3,4}\\
b_{1,4} &b_{2,4} &b_{3,4} & b_{4,4}
\end{matrix}\right)=\left(\begin{matrix}
b&c \\
c^{\rm t}&d
\end{matrix}\right).
\end{gather*}
Direct calculation shows that
\begin{gather*}
 \widehat{A}^{\rm t}B({G}) \widehat{A}=\left(\begin{matrix}
r^{\rm t}br&r^{\rm t}(bs+ct) \\
s^{\rm t}br+tc^{\rm t}r&s^{\rm t}(bs+ct)+t(c^{\rm t}s+dt)
\end{matrix}\right).
\end{gather*}
Hence under the action $G_{\operatorname{degn}}^{\operatorname{K}} $ the rank of $B(G)$ and the rank of its upper left~2 by 2 block (which we denote by~$b$) are preserved. We use the theory of symmetric bilinear forms and f\/ind $(A,z)\in G_{\operatorname{degn}}^{\operatorname{K}} $ that cast $B(G)$ into exactly one of the following forms depending on $\operatorname{rank}(b)$.
\begin{gather*}%\label{rank2}
1. \quad B(G)=\left(\begin{matrix}
1 &0 &0 & 0 \\
0 &1&0 & 0 \\
0 &0 &b_{3,3} &b_{3,4}\\
0 &0 &b_{3,4} & b_{4,4}
\end{matrix}\right)
\qquad\text{if} \quad
\operatorname{rank}(b)=2,
\\ %\label{rank1}
2. \quad B(G)=\left(\begin{matrix}
1 &0 &0 & 0 \\
0 &0&b_{2,3} &b_{2,4}\\
0 &b_{2,3} &b_{3,3} &b_{3,4}\\
0 &b_{2,4} &b_{3,4} & b_{4,4}
\end{matrix}\right)
\qquad \text{if} \quad
\operatorname{rank}(b)=1,
\\ %\label{rank0}
3. \quad B(G)=\left(\begin{matrix}
0 &0 &b_{1,3} & b_{1,4} \\
0 &0&b_{2,3} & b_{2,4} \\
b_{1,3} &b_{2,3} &b_{3,3} &b_{3,4}\\
b_{1,4} &b_{2,4} &b_{3,4} & b_{4,4}
\end{matrix}\right)
\qquad \text{if} \quad
\operatorname{rank}(b)=0.
\end{gather*}
Below we analyze the dif\/ferent cases according to $\operatorname{rank}(b)$ we summarize the results of all cases in Table~\ref{table2}.

\subsection{The rank 2 case}\label{sec4.4}
If
\begin{gather*}
B( {G} )=\left(\begin{matrix}
1 &0 &0 & 0 \\
0 &1&0 & 0 \\
0 &0 &b_{3,3} &b_{3,4}\\
0 &0 &b_{4,3} & b_{4,4}
\end{matrix}\right) \qquad \text{and} \qquad (A,z)\cdot B(G)=\left(\begin{matrix}
1 &0 &0 & 0 \\
0 &1&0 & 0 \\
0 &0 &\widetilde{b}_{3,3} &\widetilde{b}_{3,4}\\
0 &0 &\widetilde{b}_{4,3} & \widetilde{b}_{4,4}
\end{matrix}\right),
\end{gather*}
for some $(A,z)\in G_{\operatorname{degn}}^{\operatorname{K}}$ then
\begin{gather*}
 \widehat{A}=\left(\begin{matrix}
A_{1,1} &A_{1,2} &0 & 0 \\
A_{2,1} &A_{2,2} &0 & 0 \\
0 &0 &A_{3,3} &0\\
0 &0 &0 & A_{4,4}
\end{matrix}\right), \qquad z=(A_{1,1}A_{2,2}-A_{1,2}A_{2,1})^{-1}A_{5,5}^{-1},
\end{gather*}
with
\begin{gather*}
\sqrt{z}\left(\begin{matrix}
A_{1,1} &A_{1,2} \\
A_{2,1} &A_{2,2}
\end{matrix}\right)\in {\rm O}_2(\mathbb{C}),\qquad A_{5,5}^{-1}=\pm 1,
\qquad
\left(\begin{matrix}
\widetilde{b}_{3,3} &\widetilde{b}_{3,4}\\
\widetilde{b}_{3,4} & \widetilde{b}_{4,4}
\end{matrix}\right)= \left(\begin{matrix}
zA_{3,3}^2b_{3,3} &zA_{3,3}b_{3,4}\\
zA_{3,3}b_{3,4} & zb_{4,4}
\end{matrix}\right).
\end{gather*}
So if $b_{3,3} \neq 0$ we def\/ine the canonical
form of $B( {G} )$ to be
\begin{gather*}%\label{C21}
B( {G} )^{21}(b_{3,4}, b_{4,4})=\left(\begin{matrix}
1 &0 &0 & 0 \\
0 &1&0 & 0 \\
0 &0 &1 &b_{3,4}\\
0 &0 &b_{3,4} & b_{4,4}
\end{matrix}\right),
\end{gather*}
with $b_{3,4}\in \{0,1\}$, $b_{4,4}\in {\mathcal C}$. If $b_{3,3}= 0$ we def\/ine the canonical form of $B( {G} )$ to be
\begin{gather*}%\label{C22,1}
B( {G} )^{22}(b_{3,4}, b_{4,4})=\left(\begin{matrix}
1 &0 &0 & 0 \\
0 &1&0 & 0 \\
0 &0 &0 &b_{3,4}\\
0 &0 &b_{3,4} & b_{4,4}
\end{matrix}\right),
\end{gather*}
with $b_{3,4},b_{4,4} \in \{0,1\}$.

\subsection{The rank 1 case}\label{sec4.5}
If
\begin{gather*}
B( G )=\left(\begin{matrix}
1 &0 &0 & 0 \\
0 &0&b_{2,3} &b_{2,4}\\
0 &b_{2,3} &b_{3,3} &b_{3,4}\\
0 &b_{2,4} &b_{3,4} & b_{4,4}
\end{matrix}\right)\qquad \text{and} \qquad (A,z)\cdot B(G)=\left(\begin{matrix}
1 &0 &0 & 0 \\
0 &0&\widetilde{b}_{2,3} &\widetilde{b}_{2,4}\\
0 &\widetilde{b}_{2,3} &\widetilde{b}_{3,3} &\widetilde{b}_{3,4}\\
0 &\widetilde{b}_{2,4}&\widetilde{b}_{3,4} & \widetilde{b}_{4,4}
\end{matrix}\right),
\end{gather*}
for some $(A,z)\in G_{\operatorname{degn}}^{\operatorname{K}}$ then
\begin{gather*}%\label{equ45}
 \widehat{A}=\left(\begin{matrix}
\pm {z}^{-1/2} &0 &\mp {z}^{1/2}b_{2,3}A_{3,3}A_{2,1} & \mp {z}^{-1/2}b_{2,4}A_{4,4}A_{2,1} \\
A_{2,1} &A_{2,2} &A_{2,3}& A_{2,4} \\
0 &0 &A_{3,3} &0\\
0 &0 &0 & A_{4,4}
\end{matrix}\right),\qquad
  z=A_{2,2}^{-2}A_{4,4}^{-1},
\end{gather*}
and
\begin{gather}\label{equ46}
\widetilde{b}_{2,3}=b_{2,3}A_{2,2}^{-1}A_{3,3} A_{4,4}^{-1},\\\label{equ47}
 \widetilde{b}_{2,4}= b_{2,4}A_{2,2}^{-1},\\
 \widetilde{b}_{3,3}= b_{2,3}^2A_{2,1}^2A_{2,2}^{-4}A_{3,3}^2A_{4,4}^{-2}+2b_{2,3}A_{2,3}A_{2,2}^{-2}A_{3,3}A_{4,4}^{-1}+b_{3,3}A_{2,2}^{-2}A_{3,3}^2A_{4,4}^{-1},\nonumber \\
 \widetilde{b}_{3,4}=b_{2,3}b_{2,4}A_{2,1}^2A_{2,2}^{-4}A_{3,3}A_{4,4}^{-1}+b_{2,3}A_{2,4}A_{2,2}^{-2}A_{3,3}A_{4,4}^{-1}
+b_{2,4}A_{2,3}A_{2,2}^{-2}+b_{3,4}A_{2,2}^{-2}A_{3,3}, \nonumber\\
 \widetilde{b}_{4,4}= b_{2,4}^2A_{2,1}^2A_{2,2}^{-4}+2b_{2,4}A_{2,4}A_{2,2}^{-2}+b_{4,4}A_{2,2}^{-2}A_{4,4}.\nonumber
\end{gather}
From (\ref{equ46}) we see that $\widetilde{b}_{2,3}=0$ if and only if ${b}_{2,3}=0$, similarly from (\ref{equ47}) we see that $\widetilde{b}_{2,4}=0$ if and only if ${b}_{2,4}=0$. Hence we continue our analysis according to the vanishing of $b_{2,3}$ and~$b_{2,4}$.

\subsubsection[$\operatorname{rank}(b)=1$, $b_{2,3}=0$ and $b_{2,4}=0$]{$\boldsymbol{\operatorname{rank}(b)=1}$, $\boldsymbol{b_{2,3}=0}$ and $\boldsymbol{b_{2,4}=0}$}
If
\begin{gather*}
B( G )=\left(\begin{matrix}
1 &0 &0 & 0 \\
0 &0&0 & 0 \\
0 &0 &b_{3,3} &b_{3,4}\\
0 &0 &b_{3,4} &b _{4,4}
\end{matrix}\right) \qquad \text{and} \qquad (A,z)\cdot B(G)=\left(\begin{matrix}
1 &0 &0 & 0 \\
0 &0&0 & 0 \\
0 &0 &\widetilde{b}_{3,3} &\widetilde{b}_{3,4}\\
0 &0 &\widetilde{b}_{3,4} & \widetilde{b}_{4,4}
\end{matrix}\right),
\end{gather*}
for some $(A,z)\in G_{\operatorname{degn}}^{\operatorname{K}}$ then
\begin{gather*}
 \widehat{A}=\left(\begin{matrix}
\pm {z}^{-1/2} &0 &0 & 0 \\
A_{2,1} &A_{2,2} &A_{2,3}& A_{2,4} \\
0 &0 &A_{3,3} &0\\
0 &0 &0 & A_{4,4}
\end{matrix}\right), \qquad z=A_{2,2}^{-2}A_{4,4}^{-1},
\end{gather*}
and
\begin{gather*}
\left(\begin{matrix}
\widetilde{b}_{3,3} &\widetilde{b}_{3,4}\\
\widetilde{b}_{3,4} & \widetilde{b}_{4,4}
\end{matrix}\right)= \left(\begin{matrix}
A_{2,2}^{-2}A_{4,4}^{-1}A_{3,3}^2b_{3,3} &A_{2,2}^{-2}A_{4,4}^{-1}A_{3,3}A_{4,4}b_{3,4}\\
A_{2,2}^{-2}A_{4,4}^{-1}A_{3,3}A_{4,4}b_{3,4} &A_{2,2}^{-2}A_{4,4}b_{4,4}
\end{matrix}\right).
\end{gather*}
We def\/ine the canonical form to be
\begin{gather*}%\label{C11}
B( G )^{11}(b_{3,3},b_{3,4},b_{4,4})=\left(\begin{matrix}
1 &0 &0 & 0 \\
0 &0&0 & 0 \\
0 &0 &b_{3,3} &b_{3,4}\\
0 &0 &b_{3,4} & b_{4,4}
\end{matrix}\right),
\end{gather*}
with
$b_{3,3},b_{3,4},b_{4,4}\in \{0,1\}$.

\subsubsection[$\operatorname{rank}(b)=1$, $b_{2,3}\neq 0$ and $b_{2,4}\neq 0$]{$\boldsymbol{\operatorname{rank}(b)=1}$, $\boldsymbol{b_{2,3}\neq 0}$ and $\boldsymbol{b_{2,4}\neq 0}$}
In this case we can act with $(A,z)\in G_{\operatorname{degn}}^{\operatorname{K}}$
to obtain
\begin{gather*}
B(G)^{15}(b_{3,4})=\left(\begin{matrix}
1 &0 &0 & 0 \\
0 &0&1 &1\\
0 &1 &0 &b_{3,4}\\
0 &1&b_{3,4} & 0
\end{matrix}\right),
\end{gather*}
for a unique $b_{3,4}\in \{0,1\}$.
We def\/ine this form to be the canonical form in this case.

\subsubsection[$\operatorname{rank}(b)=1$, $ b_{2,3}\neq 0$ and $b_{2,4}= 0$]{$\boldsymbol{\operatorname{rank}(b)=1}$, $\boldsymbol{b_{2,3}\neq 0}$ and $\boldsymbol{b_{2,4}= 0}$}
In this case we can act with $(A,z)\in G_{\operatorname{degn}}^{\operatorname{K}}$ to obtain
\begin{gather*}%\label{form15,1}
B(G)^{16}(b_{4,4})=\left(\begin{matrix}
1 &0 &0 & 0 \\
0 &0&1 &0\\
0 &1 &0 &0\\
0 &0&0 & b_{4,4}
\end{matrix}\right),
\end{gather*}
with unique $b_{4,4}\in \{0,1\}$ which is def\/ined to be the canonical form in this case.

\subsubsection[$\operatorname{rank}(b)=1$, $b_{2,3}= 0$ and $b_{2,4}\neq 0$]{$\boldsymbol{\operatorname{rank}(b)=1}$, $\boldsymbol{b_{2,3}= 0}$ and $\boldsymbol{ b_{2,4}\neq 0}$}

In this case we can act with $(A,z)\in G_{\operatorname{degn}}^{\operatorname{K}}$ to obtain
\begin{gather*}%\label{form17,1}
B(G)^{17}(b_{3,3})=\left(\begin{matrix}
1 &0 &0 & 0 \\
0 &0&0 &1\\
0 &0 &b_{3,3} &0\\
0 &1&0 & 0
\end{matrix}\right),
\end{gather*}
with unique $b_{3,3}\in \{0,1\}$ which is def\/ined to be the canonical form in this case.

\subsection{The rank 0 case}\label{sec4.6}
If
\begin{gather*}
B( G )=\left(\begin{matrix}
0 &0 &b_{1,3} &b_{1,4} \\
0 &0&b_{2,3} &b_{2,4}\\
b_{1,3}&b_{2,3} &b_{3,3} &b_{3,4}\\
b_{1,4}&b_{2,4} &b_{3,4} & b_{4,4}
\end{matrix}\right)\qquad \text{and} \qquad (A,z)\cdot B(G)=\left(\begin{matrix}
0 &0 &\widetilde{b}_{1,3} & \widetilde{b}_{1,4} \\
0 &0&\widetilde{b}_{2,3} &\widetilde{b}_{2,4}\\
\widetilde{b}_{1,3} &\widetilde{b}_{2,3} &\widetilde{b}_{3,3} &\widetilde{b}_{3,4}\\
 \widetilde{b}_{1,4}&\widetilde{b}_{2,4}&\widetilde{b}_{3,4} & \widetilde{b}_{4,4}
\end{matrix}\right),
\end{gather*}
for some $(A,z)\in G_{\operatorname{degn}}^{\operatorname{K}}$ then
\begin{gather*}%\label{equ107}
 \left(\begin{matrix}
\widetilde{b}_{1,3} &\widetilde{b}_{1,4}\\
\widetilde{b}_{2,4} & \widetilde{b}_{2,4}
\end{matrix}\right) = z\left(\begin{matrix}
 A_{1,1} & A_{2,1}\\
 A_{1,2} & A_{2,2}
\end{matrix}\right)\left(\begin{matrix}
 b_{1,3} & b_{1,4}\\
 b_{2,3} & b_{2,4}
\end{matrix}\right)\left(\begin{matrix}
 A_{3,3} & 0\\
 0 & A_{4,4}
\end{matrix}\right),
\end{gather*}
and
\begin{gather*}
\left(\begin{matrix}
\widetilde{b}_{3,3} &\widetilde{b}_{3,4}\\
\widetilde{b}_{4,3} & \widetilde{b}_{4,4}
\end{matrix}\right)= z\left(\begin{matrix}
 A_{1,3} & A_{2,3}\\
 A_{1,4} & A_{2,4}
\end{matrix}\right)\left(\begin{matrix}
 b_{1,3} & b_{1,4}\\
 b_{2,3} & b_{2,4}
\end{matrix}\right)\left(\begin{matrix}
 A_{3,3} & 0\\
 0 & A_{4,4}
\end{matrix}\right) \\
\hphantom{\left(\begin{matrix}
\widetilde{b}_{3,3} &\widetilde{b}_{3,4}\\
\widetilde{b}_{4,3} & \widetilde{b}_{4,4}
\end{matrix}\right)=}{} +
 z\left(\begin{matrix}
 A_{3,3} & 0\\
 0 & A_{4,4}
\end{matrix}\right)\left(\begin{matrix}
 b_{1,3} & b_{2,3}\\
 b_{1,4} & b_{2,4}
\end{matrix}\right)\left(\begin{matrix}
 A_{1,3} & A_{1,4}\\
 A_{2,3} & A_{2,4}
\end{matrix}\right)  \\
\hphantom{\left(\begin{matrix}
\widetilde{b}_{3,3} &\widetilde{b}_{3,4}\\
\widetilde{b}_{4,3} & \widetilde{b}_{4,4}
\end{matrix}\right)=}{}+
 z\left(\begin{matrix}
 A_{3,3} & 0\\
 0 & A_{4,4}
\end{matrix}\right)\left(\begin{matrix}
 b_{3,3} & b_{3,4}\\
 b_{3,4} & b_{4,4}
\end{matrix}\right)\left(\begin{matrix}
 A_{3,3} & 0\\
 0 & A_{4,4}
\end{matrix}\right).
\end{gather*}
Hence in the case of $b=0$, the rank of $c=\left(\begin{smallmatrix}
 b_{1,3} & b_{1,4}\\
 b_{2,3} & b_{2,4}
\end{smallmatrix}\right)$ is invariant under the action of $G_{\operatorname{degn}}^{\operatorname{K}}$.
We continue our analysis according to the value of this rank. By Corollary~\ref{lemma} $c\neq 0$.

\subsubsection[$\operatorname{rank}(b)=0$, $\operatorname{rank}(c)=1$]{$\boldsymbol{\operatorname{rank}(b)=0}$, $\boldsymbol{\operatorname{rank}(c)=1}$}
In this case we have the following forms
\begin{gather*}\nonumber
\left(\begin{matrix}
0 &0 &0 & 0 \\
0 &0&1 & 1 \\
0 &1 &b_{3,3} &b_{3,4}\\
0 &1 &b_{3,4} & b_{4,4}
\end{matrix}\right),\qquad
\left(\begin{matrix}
0 &0 &0 & 0 \\
0 &0&1 & 0 \\
0 &1 &b_{3,3} &b_{3,4}\\
0 &0 &b_{3,4} & b_{4,4}
\end{matrix}\right),\qquad
 \left(\begin{matrix}
0 &0 &0 & 0 \\
0 &0&0 & 1 \\
0 &0 &b_{3,3} &b_{3,4}\\
0 &1 &b_{3,4} & b_{4,4}
\end{matrix}\right).
\end{gather*}
The f\/irst case can be reduced to the canonical form
\begin{gather*}
B(G)^{05}=\left(\begin{matrix}
0 &0 &0 & 0 \\
0 &0&1 &1\\
0 &1 &0 &b_{3,4}\\
0 &1&b_{3,4} & 0
\end{matrix}\right),
\end{gather*}
for a unique $b_{3,4}\in \{0,1\}$. The second case can be reduced to the canonical form
\begin{gather*}
B(G)^{06}=\left(\begin{matrix}
0 &0 &0 & 0 \\
0 &0&1 & 0 \\
0 &1 &0 &0\\
0 &0 &0 & b_{4,4}
\end{matrix}\right),
\end{gather*}
for a unique $b_{4,4}\in\{0,1\}$. The case with $b_{4,4}=0$ is not possible due to Corollary~\ref{lemma}.
The third case can be reduced to the following canonical form
\begin{gather*}
B(G)^{07}(b_{3,3})=\left(\begin{matrix}
0 &0 &0 & 0 \\
0 &0&0 & 1 \\
0 &0 &b_{3,3} &0\\
0 &1 &0 & 0
\end{matrix}\right),
\end{gather*}
for a unique $b_{3,3}\in \{0,1\}$.

\subsubsection[$\operatorname{rank}(b)=0$, $\operatorname{rank}(c)=2$]{$\boldsymbol{\operatorname{rank}(b)=0}$, $\boldsymbol{\operatorname{rank}(c)=2}$}
In this case we only have the form
\begin{gather}
\left(\begin{matrix}
0 &0 &1 & 0 \\
0 &0&0 & 1 \\
1 &0 &b_{3,3} &b_{3,4}\\
0 &1 &b_{3,4} & b_{4,4}
\end{matrix}\right),
\end{gather}
which can be reduced to the canonical form
\begin{gather*}
B(G)^{08}=\left(\begin{matrix}
0 &0 &1 & 0 \\
0 &0&0 & 1 \\
1 &0 &0 &0\\
0 &1 &0 & 0
\end{matrix}\right).
\end{gather*}

\begin{table}[h]\centering
 \begin{tabular}{|l |l | l | p{8.5cm}|}
 \hline \multicolumn{4}{|c|}{degenerate quadratic algebras}\\
 \hline $\#$& $\operatorname{rank}(b)$& invariant form & canonical form
\\ \hline
1& 2& $B( {G} )^{21}(b_{34},b_{44})$ &$\mathcal{L}_1^2+ \mathcal{L}_2^2+ \mathcal{H}^2+b_{44} {\cal X}^4+2b_{34}\mathcal{H}{\cal X}^2$,\tsep{2pt} $ b_{34}\in \{0,1\}$, $b_{44} \in \mathbb{C} $\bsep{2pt}\\ \hline
2& 2& $B( {G} )^{22}(b_{34},b_{44})$ & $\mathcal{L}_1^2+ \mathcal{L}_2^2+b_{44} {\cal X}^4+2b_{34}\mathcal{H}{\cal X}^2$,\tsep{2pt}  $b_{3,4},b_{44}\in \{0,1\}$\bsep{2pt}\\ \hline
3& 1& $B( {G} )^{11}(b_{33},b_{34},b_{44})$ & $\mathcal{L}_1^2+ b_{33}\mathcal{H}^2+ b_{44}{\cal X}^4+2b_{34}\mathcal{H}{\cal X}^2$,\tsep{2pt} $ b_{33},b_{34},b_{44}\in \{0,1\} $, $b_{34}+b_{44}\neq0$\bsep{2pt}\\ \hline
4& 1& $B( {G} )^{15}(b_{34})$ & $\mathcal{L}_1^2+2\mathcal{L}_2\mathcal{H} +2\mathcal{L}_2{\cal X}^2+2b_{34}\mathcal{H}{\cal X}^2$, $ b_{34} \in \{0,1\} $
\tsep{2pt}\bsep{2pt}\\ \hline
5& 1& $B( {G} )^{16}(b_{44})$ & $\mathcal{L}_1^2+2\mathcal{L}_2\mathcal{H} + b_{44} {\cal X}^4$, $b_{44}\in \{0,1\}$
\tsep{2pt}\bsep{2pt}\\ \hline
6& 1& $B(G)^{17}(b_{33})$ & $\mathcal{L}_1^2+2\mathcal{L}_2\mathcal{X}^2 +b_{33}{\mathcal{H}}^2$, $b_{33}\in \{0,1\}$
\tsep{2pt}\bsep{2pt}\\ \hline
7& 0& $B( {G} )^{05}(b_{34})$ & $2\mathcal{L}_2\mathcal{H}+2\mathcal{L}_2{\cal X}^2 +2b_{34}\mathcal{H}{\cal X}^2$, $ b_{34} \in \{0,1\}$
\tsep{2pt}\bsep{2pt}\\ \hline
8& 0& $B( {G} )^{06}$ & $2\mathcal{L}_2\mathcal{H} +{\cal X}^4$
\tsep{2pt}\bsep{2pt}\\ \hline
9&0& $B(G)^{07}(b_{33})$ & $2\mathcal{L}_2{\cal X}^2+b_{33}\mathcal{H}^2$, $b_{33}\in \{0,1\}$
\tsep{2pt}\bsep{2pt}\\ \hline
10 & 0& $B( {G} )^{08}$ & $2\mathcal{L}_1\mathcal{H}+ 2\mathcal{L}_2{\cal X}^2$
\tsep{2pt}\bsep{2pt}\\ \hline
\end{tabular}
\caption{List of canonical forms of free abstract degenerate quadratic algebras. The canonical forms $B(G)^{ab}$ are given explicitly in Sections \ref{sec4.4}, \ref{sec4.5}, \ref{sec4.6} above.}\label{table2}
\end{table}

\subsection{Comparison of geometric and abstract free degenerate quadratic algebras}\label{Comparison}

We examine the entries in Table \ref{table2} to determine which can be realized as a classical 2D degenerate superintegrable system or a classical Lie algebra with linearly
independent generators.
There is a close relationship between the canonical forms of free abstract degenerate quadratic algebras and St\"ackel equivalence classes of degenerate
superintegrable systems. We demonstrate this by treating one example in detail. The superintegrable system $S3$, with degenerate potential,
can be def\/ined by
\begin{gather*}{\mathcal G}= {\mathcal L}_1^2+{\mathcal L}_2^2-{\mathcal L}_1{\mathcal H}+{\mathcal L}_1{\mathcal X}^2+a_1{\mathcal X}^2+(a_1+a_2){\mathcal L}_1=0,\end{gather*}
where the $a_j$ are the parameters in the potential. To perform a general St\"ackel transform of this system with nonsingular transform
matrix $C=(c_{jk})$: 1) we set
$a_j=\sum\limits_{k=1}^2 c_{jk}b_k$, $k=1,2$ where the $b_k$ are the new parameters, 2) we make the replacements
${\mathcal H}\to -b_2$, $b_2\to -{\mathcal H}$
and 3) we then set all parameters $b_j=0$ to determine the free degenerate quadratic algebra.
The result is
\begin{gather*}\label{Acanon}[A]\colon \quad {\mathcal G}={\mathcal L}_1^2+{\mathcal L}_2^2+{\mathcal L}_1{\mathcal X}^2-c_{12}{\mathcal H}{\mathcal X}^2
 -(c_{12}+c_{22}){\mathcal H}{\mathcal L}_1=0,\end{gather*}
 where $|c_{12}|+|c_{22}|>0$.
The canonical forms in Table~\ref{table2} associated with the equivalence class $[A]$ are
$1$:~$b_{44}=1$ and $2$:~$b_{44}=1$.

The superintegrable system $E6$, with degenerate potential, can be def\/ined by
\begin{gather*} {\mathcal G}={\mathcal L}_1^2-{\mathcal L}_2{\mathcal H}+{\mathcal L}_2{\mathcal X}^2+a_1{\mathcal X}^2+a_2{\mathcal L}_2=0.\end{gather*}
 Going through the same procedure as above, we obtain the equivalence class
 \begin{gather*}%\label{Bcanon}
 [B]\colon \quad {\mathcal G}={\mathcal L}_1^2+{\mathcal L}_2{\mathcal X}^2-c_{12}{\mathcal H}{\mathcal X}^2
 -c_{22}{\mathcal H}{\mathcal L}_2=0,\end{gather*} where $|c_{12}|+|c_{22}|>0$.
 The canonical form associated with this equivalence class is $4$: all cases.

 The superintegrable system $E3$, with degenerate potential, can be def\/ined by
\begin{gather*} {\mathcal G}={\mathcal L}_1^2+{\mathcal L}_2^2-{\mathcal L}_1{\mathcal H}+a_2\big({\mathcal X}^2+a_2\big){\mathcal L}_1.\end{gather*}
 The equivalence class is
 \begin{gather*}%\label{Ccanon}
 [C]\colon \quad {\mathcal G}={\mathcal L}_1^2+{\mathcal L}_2^2-c_{12}{\mathcal H}{\mathcal X}^2-c_{22}{\mathcal H}{\mathcal L}_1=0,
 \end{gather*} where $|c_{12}|+|c_{22}|>0$.
 The canonical forms associated with this equivalence class are $1$: $b_{44}=0$, and $2$:~$b_{44}=0$.

The superintegrable system $E5$ can be def\/ined by
\begin{gather*}{\mathcal G}= {\mathcal L}_1^2+{\mathcal X}^4-{\mathcal H}{\mathcal X}^2+a_1{\mathcal L}_2+a_2{\mathcal X}^2=0.\end{gather*}
The equivalence class is
\begin{gather*}%\label{Dcanon}
[D]\colon \quad {\mathcal G}={\mathcal L}_1^2+{\mathcal X}^4-c_{22}{\mathcal H}{\mathcal X}^2
-c_{12}{\mathcal H}{\mathcal L}_2=0,\end{gather*}
where $|c_{12}|+|c_{22}|>0$. The canonical forms associated with this equivalence class are
 $3$:~$b_{44}=1$, $b_{34}=0$ and $5$:~$b_{44}=1$.

The superintegrable system $E14$ can be def\/ined by
\begin{gather*}{\mathcal G}= -{\mathcal L}_1^2 -{\mathcal L}_2{\mathcal X}^2 +a_1{\mathcal H}-a_1a_2=0.\end{gather*}
 The equivalence class is
\begin{gather*}%\label{Ecanon}
[E]\colon \quad {\mathcal G}^2=-{\mathcal L}_1^2-{\mathcal L}_2{\mathcal X}^2-c_{12}c_{22}{\mathcal H}^2=0,\end{gather*}
where $|c_{12}|+|c_{22}|>0$.
 The canonical forms associated with $[E]$ are $6$: all cases.

The superintegrable system $E4$ can be def\/ined by
\begin{gather*} {\mathcal G}= {\mathcal H}^2+{\mathcal X}^4+2{\mathcal H}{\mathcal X}^2-4{\mathcal L}_2{\mathcal X}^2
-4ia_1{\mathcal L}_1-2a_2{\mathcal X}^2-2a_2{\mathcal H}+a_2^2=0.\end{gather*}
The equivalence class is
\begin{gather*}%\label{Fcanon}
 [F]\colon \quad {\mathcal G}={\mathcal X}^4-4{\mathcal L}_2{\mathcal X}^2-4ic_{12}{\mathcal H}{\mathcal L}_1
+2c_{22}{\mathcal H}{\mathcal X}^2
+c_{22}^2{\mathcal H}^2=0,\end{gather*}
where $|c_{12}|+|c_{22}|>0$.
The canonical forms associated with $[F]$ are
 $9$:~$b_{33}=1$ and $10$: all cases.

{\bf Heisenberg systems:}
In addition there are systems that can be obtained from the degenerate geometric systems above by contractions from ${\mathfrak{so}}(4,\mathbb{C})$ to $\mathfrak{e}(3,\mathbb{C})$.
These are not B\^ocher contractions and the contracted systems are not superintegrable, because the Hamiltonians become singular. However, they do
form quadratic algebras and several have the interpretation of time-dependent Schr\"odinger equations in 2D spacetime, so we also consider them geometrical.
Some of these were classif\/ied in~\cite{KM2014} where they were called Heisenberg systems since they appeared in quadratic algebras formed
from 2nd order elements in the Heisenberg algebra with generators ${\cal M}_1=p_x$, ${\cal M}_2=xp_y$, ${\cal E}=p_y$, where ${\cal E}^2={\cal H}$.
The systems are all of type~4. We will devote a future paper to their study. The possible canonical forms
are $3$:~$b_{33}=b_{44}=0$, $b_{34}=1$ and $5$:~$b_{44}=0$.

\begin{table}[t]\centering
\begin{tabular}{|l|l|l|}
\hline rank & \multicolumn{2}{c}{canonical form} \vline
\\ \hline 2 & $ 1$: all cases & $2$: all cases
\\ \hline 1 & $3$: all cases except & $4$: all cases
\\ \hline
& $(b_{44},b_{34}=1,b_{33}=0,1)$, $(b_{44}=0,b_{34},b_{33}=1)$ &\\ \hline
 1& $5$: all cases & $6$: all cases \\ \hline
0&$7$: no & $8$: missing $2{\mathcal L}_2{\mathcal H}+{\mathcal X}^4=0$ \\ \hline
0& $9$: missing ${\mathcal L}_2{\mathcal X}^2=0$ &$10$: all cases\\ \hline
\end{tabular}
\caption{Matching of geometric with abstract quadratic algebras.}\label{tablea}
\end{table}

All these results relating geometric systems to abstract systems are summarized in Table~\ref{tablea}. We see that every abstract quadratic algebra is isomorphic to a quadratic algebra corresponding to a superintegrable system, with just 5 exceptions.

\begin{Theorem} Every free quadratic algebra realizable by functions on $4$-dimensional phase space with grading the order of polynomials in the momenta is isomorphic to a free quadratic algebra of a superintegrable system. \end{Theorem}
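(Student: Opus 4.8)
The plan is to play the classification in Table~\ref{table2} off against the explicit geometric realizations of Section~\ref{Comparison}, reducing the theorem to a finite check that a handful of canonical forms are not realizable on phase space at all. By the classification, every free abstract degenerate quadratic algebra is isomorphic, under $G_{\operatorname{degn}}$, to exactly one of the canonical forms $1$--$10$ of Table~\ref{table2}; and Section~\ref{Comparison} already exhibits all of them, with five exceptions, as the free quadratic algebra of a St\"ackel-class representative of a Helmholtz degenerate superintegrable system ($S3,E6,E3,E5,E14,E4$ and their St\"ackel transforms). For every form already realized there, the isomorphism onto the canonical form constructed in Section~\ref{Comparison} is the isomorphism required by the theorem, and the \textbf{Heisenberg} systems realize only forms that are already on this list, so they enlarge nothing. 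It therefore remains to show that each of the five exceptional canonical forms (the leftover sub-case of form~$3$, all of form~$7$, form~$8$, and form~$9$ with $b_{33}=0$) cannot be realized by functions on a $4$-dimensional phase space with the momentum-order grading; for these the theorem is then vacuously true.

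Suppose such a realization exists: $\mathcal{X},\mathcal{H},\mathcal{L}_1,\mathcal{L}_2$ are functions on $T^*M$ with $M$ a $2$-manifold. Because the algebra is \emph{free} (no potential) and $\mathcal{A}_1=\mathbb{C}\mathcal{X}$, $\mathcal{A}_2=\operatorname{span}\{\mathcal{L}_1,\mathcal{L}_2,\mathcal{H},\mathcal{X}^2\}$ are homogeneous graded pieces for the momentum order, $\mathcal{X}=\ell$ is a linear form in $p_1,p_2$ and $\mathcal{H}=g$, $\mathcal{L}_a=L_a$ are homogeneous quadratic forms in $p_1,p_2$, all with coefficients in the ring $R$ of real-analytic functions on some connected open subset of $M$. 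Then $R[p_1,p_2]$ is an integral domain with unique factorization of binary forms, $\mathcal{X}^2=\ell^2$ is pointwise the square of a linear form, and the Casimir relation $G(\mathcal{L}_1,\mathcal{L}_2,\mathcal{H},\mathcal{X}^2)=0$ becomes a homogeneous degree-$4$ polynomial identity in $p_1,p_2$. For each exceptional $G$ one reads off a contradiction with Definition~\ref{defn1} (nonvanishing of the generators, and linear independence of $\{\mathcal{L}_1,\mathcal{L}_2,\mathcal{H},\mathcal{X}^2\}$): for form~$9$ with $b_{33}=0$, $L_2\ell^2=0$ forces $L_2=0$; for form~$8$, $2L_2g=-\ell^4$ forces $g$ and $L_2$ to be scalar multiples of $\ell^2$, so $\mathcal{H}\propto\mathcal{X}^2$; for form~$7$ with $b_{34}=0$, $L_2(g+\ell^2)=0$ forces $L_2=0$ or $g=-\ell^2$, and with $b_{34}=1$, $\ell^2\mid L_2(g+\ell^2)$ together with irreducibility of $\ell$ drives $g\propto\ell^2$ through every branch; for the leftover form~$3$ case, of shape $\mathcal{L}_1^2+(\text{quadratic in }\mathcal{H},\mathcal{X}^2)=0$, the left side being a perfect square forces the right side to factor as one, which again yields $g\propto\ell^2$. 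In all cases this collides with the linear independence axiom, so no realization exists, and the theorem follows.

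The main obstacle is twofold. First, one has to justify rigorously that a free realization carries strictly homogeneous generators, so that $G=0$ really is a polynomial identity over the domain $R[p_1,p_2]$ rather than an identity among inhomogeneous functions; this is where the grading axiom and the absence of a potential are essential. Second, the divisibility bookkeeping in the $b_{34}=1$ case of form~$7$ and in the leftover form~$3$ case branches several times, and each branch has to be run down to the $g\propto\ell^2$ (equivalently, $\mathcal{H}\propto\mathcal{X}^2$) contradiction; this is routine but unavoidably case-laden. A secondary caveat is to confirm that the five forms treated above are exactly the complement, inside Table~\ref{table2}, of the forms realized in Section~\ref{Comparison}, and that this complement is disjoint from the Heisenberg list, so that no genuinely realizable algebra is mistakenly ruled out.
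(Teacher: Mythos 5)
Your proposal follows essentially the same route as the paper's proof: reduce via the classification of Table~\ref{table2} and the geometric realizations of Section~\ref{Comparison} to a short list of unrealized canonical forms, observe that a free realization forces $\mathcal{X}$ to be a homogeneous linear form and $\mathcal{H},\mathcal{L}_1,\mathcal{L}_2$ homogeneous quadratic forms in the momenta, and kill each leftover Casimir by a factorization/divisibility argument that contradicts the linear independence of $\{\mathcal{L}_1,\mathcal{L}_2,\mathcal{H},\mathcal{X}^2\}$ --- your treatments of forms~$8$ and~$9$ are verbatim the paper's. The only substantive discrepancy is in the complement you propose to rule out: the paper's proof handles \emph{three} leftover sub-cases of form~$3$ (with $(b_{33},b_{34},b_{44})=(1,1,1)$, $(0,1,1)$, $(1,1,0)$), not one, and does not treat form~$7$ at all, whereas you compress form~$3$ to a single generic case but do treat form~$7$; this is harmless in substance, because your uniform ``perfect-square'' argument for $\mathcal{L}_1^2+Q(\mathcal{H},\mathcal{X}^2)=0$ covers all three sub-cases once the branching is actually carried out (note that in the $(1,1,1)$ case the contradiction is $\mathcal{L}_1\propto\mathcal{H}+\mathcal{X}^2$ rather than $\mathcal{H}\propto\mathcal{X}^2$, and the $(0,1,1)$ and $(1,1,0)$ cases need the extra step that a first-order factor of a symmetry must be proportional to $\mathcal{X}$), and your form-$7$ argument covers a case that Table~\ref{tablea} itself marks as geometrically unrealized but the paper's proof omits --- so the verification demanded by your own closing caveat is genuinely necessary, and once done your argument establishes the theorem by the paper's method.
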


Proof: We show that the 5 exceptional free quadratic algebras cannot be realized in phase space. We assume in each case that the algebra is realizable in terms of functions on phase space and obtain a contradiction.
\begin{enumerate}\itemsep=0pt
\item Case 3: ${\cal L}_1^2 +\big({\cal H}+{\cal X}^2\big)^2=0$.
 We can factor ${\cal G}$ as
$({\cal H}+{\cal X}^2+i{\cal L}_1)({\cal H}+{\cal X}^2-i{\cal L}_1)=0$. This is possible only if one of the factors vanishes; hence the generators are linearly dependent. Impossible!
\item Case 3: ${\cal L}_1^2 +2{\cal H}{\cal X}^2 +{\cal X}^4=0$.
Here, ${\cal L}_1^2=-{\cal X}^2(2{\cal H}+{\cal X}^2)$ so ${\cal L}_1={\cal X}{\cal Y}$ where ${\cal Y}$ is a 1st order constant of the motion. Thus ${\cal Y}$ must be a multiple of ${\cal X}$ and the generators are linearly dependent. Impossible!
\item Case 3: $ {\cal L}_1^2 +{\cal H}^2+2{\cal H}{\cal X}^2 =0$.
Here, ${\cal L}_1^2 =-{\cal H}({\cal H}+2{\cal X}^2)$. If ${\cal L}_1$ doesn't factor then it must be a multiple of ${\cal H}$ and of ${\cal H}+2{\cal X}^2$ simultaneously. Impossible! Suppose then that ${\cal L}_1={\cal Y}{\cal Z}$, a product of two 1st order factors. If ${\cal Z}$ is a multiple of ${\cal Y}$ then ${\cal Y}$ is a constant of the motion, hence proportional to~${\cal X}$. Impossible! Thus ${\cal Y},{\cal Z}$ must be distinct linear factors. If ${\cal H}$ is divisible by each factor then ${\cal L}_1$, ${\cal H}$ are linearly dependent. Impossible! So~$\cal H$ must be divisible by the square of a single factor. By renormalizing $\cal Y$ and $\cal Z$ we can assume ${\cal H}= {\cal Y}^2$. Thus~${\cal Y}$ is a 1st order constant of the motion, necessarily proportional to~${\cal X}$. We conclude that ${\cal H}\sim {\cal X}^2$. Impossible!
\item Case 8: $2{\cal L}_2{\cal H}+{\cal X}^4=0$.
Since ${\cal X}$ is 1st order, and $2{\cal L}_2{\cal H}=-{\cal X}^4$ both ${\cal H}$ and ${\cal L}_2$ must be divisible by ${\cal X}^2$. Thus each is a perfect square of a 1st order symmetry, necessarily a~scalar multiple of $\cal X$. Hence the 2nd order generators are linearly dependent. Impossible!
\item Case 9: ${\cal L}_2{\cal X}=0$.
Here at least one of the generators must vanish. Impossible.!
\end{enumerate}

Thus we have shown that the only free degenerate quadratic algebras that can be constructed in phase space are those that arise from superintegrability. The remaining 5 abstract systems must lie in dif\/ferent graded Poisson algebras.

\section[Classif\/ication of contractions of free abstract quadratic degenerate 2D superintegrable systems on constant curvature spaces and Darboux spaces]{Classif\/ication of contractions of free abstract quadratic\\ degenerate 2D superintegrable systems on constant\\ curvature spaces and Darboux spaces}\label{section5}

In this section we def\/ine contractions between free abstract quadratic algebras. Then we list the canonical forms of the Casimirs of free abstract quadratic algebras that arises as the symmetry algebras of degenerate 2D free superintegrable systems on constant curvature spaces or Darboux spaces. Finally using the canonical forms, we classify all possible contractions relating free abstract quadratic algebras of degenerate 2D superintegrable systems on constant curvature spaces or Darboux spaces.

\subsection{Contraction of a free abstract quadratic algebra}
\begin{Definition}
Let $\mathcal{A}$, $\mathcal{B}$ be free abstract quadratic algebras with Casimirs $(B({G}_\mathcal{A}),K_\mathcal{A})$ and
$(B({G}_\mathcal{B}),K_\mathcal{B})$, respectively.
Suppose that there is a continuous map $\epsilon \longmapsto (A_{\epsilon},z_{\epsilon})$ from some punctured neighborhood of $0$ in $\C^*$,
the nonzero complex numbers, into $G_{\operatorname{degn}}$ and such that
$\lim\limits_{\epsilon \longrightarrow 0} (A_{\epsilon},z_{\epsilon})\cdot (b({G}_\mathcal{A}),K_\mathcal{A})=(B({G}_\mathcal{B}),K_\mathcal{B})$.
Then we say that $\mathcal{B}$ is a contraction of $\mathcal{A}$.\end{Definition}

The meaning of the def\/inition is that in any generating set $\left\{\mathcal{L}_1, \mathcal{L}_2, \mathcal{L}_3,\mathcal{L}_4 \right\}$
of $\mathcal{A}$ (satisfying the same assumptions as before) the corresponding matrix $B({G}_\mathcal{A})$ and ${K}_\mathcal{A}$ satisfy
\begin{gather}\label{e67}
\lim_{\epsilon \longrightarrow 0}z_{\epsilon}\widehat{A_{\epsilon}}^{\rm t} B({G}_\mathcal{A}) \widehat{A_{\epsilon}}=B({G}_\mathcal{B}), \\ \label{e68}
\lim_{\epsilon \longrightarrow 0}z_{\epsilon}^{-1}K_\mathcal{A}\left((A_{\epsilon})_{1,1}
(A_{\epsilon})_{2,2}-(A_{\epsilon})_{1,2}(A_{\epsilon})_{2,1}\right)^{-1}(A_{\epsilon})_{5,5}^{-1}=K_\mathcal{B},
\end{gather}
where $(B({G}_\mathcal{B}),K_\mathcal{B})$ is a realization of the Casimir of $\mathcal{B}$ in some generating set. In the classif\/ication below we are using a more ref\/ined class of contractions. We shall call these contractions algebraic. By def\/inition an \textit{algebraic contraction} is a contraction that can be realized via a map $\epsilon \longmapsto (A_{\epsilon},z_{\epsilon})$ from some punctured neighborhood of $0$ in $\C^*$ into $G_{\operatorname{degn}}$ such that $z_{\epsilon}$ as well as the entries of $A_{\epsilon}$ are rational functions in $\epsilon $.

\begin{Proposition}\label{p1}
$\mathcal{B}$ is a contraction of $\mathcal{A}$ if and only if there is a continuous map $\epsilon \longmapsto (A_{\epsilon},z_{\epsilon})$
from some punctured neighborhood of $0$ in $\C^*$ into $G_{\operatorname{degn}}$ such that
\begin{gather}\label{690}
 \lim_{\epsilon \longrightarrow 0}z_{\epsilon}\widehat{A_{\epsilon}}^{\rm t} B({G}_\mathcal{A})_{\rm can} \widehat{A_{\epsilon}}=B({G}_\mathcal{B})_{\rm can}, \\ \label{700}
 \lim_{\epsilon \longrightarrow 0}z_{\epsilon}^{-1}\left((A_{\epsilon})_{1,1}(A_{\epsilon})_{2,2}-(A_{\epsilon})_{1,2}(A_{\epsilon})_{2,1},
\right)^{-1}(A_{\epsilon})_{5,5}^{-1}=1,
\end{gather}
where $(B({G}_\mathcal{A})_{\rm can},1)$ and $(B({G}_\mathcal{B})_{\rm can},1)$ are the canonical forms of the Casimirs of~$\mathcal{A}$ and $\mathcal{B}$ respectively.
\end{Proposition}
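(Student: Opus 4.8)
The plan is to reduce the proposition to two facts already in place: that $(A,z)\mapsto (A,z)\cdot(-)$ is a genuine action of the group $G_{\operatorname{degn}}$ on the set of pairs $(B(G),K)$, which is rational---hence continuous---jointly in all of its arguments; and that the canonical forms listed in Table~\ref{table2} and constructed in Sections~\ref{sec4.4}, \ref{sec4.5} and~\ref{sec4.6} are orbit representatives, so that for \emph{any} realization $(B(G_\mathcal{A}),K_\mathcal{A})$ of the Casimir of $\mathcal{A}$ there is a fixed $g_\mathcal{A}\in G_{\operatorname{degn}}$ with $g_\mathcal{A}\cdot(B(G_\mathcal{A}),K_\mathcal{A})=(B(G_\mathcal{A})_{\mathrm{can}},1)$, and similarly a fixed $g_\mathcal{B}\in G_{\operatorname{degn}}$ for $\mathcal{B}$. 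Before anything else I would record the small fact that $\widehat{(\cdot)}$ and the ``$K$-multiplier'' $(A_{1,1}A_{2,2}-A_{1,2}A_{2,1})A_{5,5}$ are multiplicative under composition in $G_{\operatorname{degn}}$: by the block shape~\eqref{eq21l} a product of two such matrices again has vanishing entries in positions $(3,1),(3,2),(4,1),(4,2)$ and off the diagonal in the fifth row and column, so the upper-left $2\times2$ block of a product is the product of the upper-left blocks, the $(5,5)$ entry is the product of the $(5,5)$ entries, and $\widehat{A_1A_2}=\widehat{A_1}\widehat{A_2}$. This is what makes $G_{\operatorname{degn}}$ a group with the transformation formulas for $B(G)$ and $K$ composing as expected.

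For the ``if'' direction I would observe that $(B(G_\mathcal{A})_{\mathrm{can}},1)$ is itself a realization of the Casimir of $\mathcal{A}$ (the one attached to its canonical generating set), and likewise $(B(G_\mathcal{B})_{\mathrm{can}},1)$ for $\mathcal{B}$; hence a continuous map $\epsilon\mapsto(A_\epsilon,z_\epsilon)$ into $G_{\operatorname{degn}}$ satisfying \eqref{690}--\eqref{700} is exactly a witness of the definition of ``contraction'' for these particular realizations, so $\mathcal{B}$ is a contraction of $\mathcal{A}$ and nothing further is needed.

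For the ``only if'' direction I would start from a contraction witness $\epsilon\mapsto(A_\epsilon,z_\epsilon)$, giving $\lim_{\epsilon\to0}(A_\epsilon,z_\epsilon)\cdot(B(G_\mathcal{A}),K_\mathcal{A})=(B(G_\mathcal{B}),K_\mathcal{B})$ for \emph{some} realizations, and define $(A'_\epsilon,z'_\epsilon)\in G_{\operatorname{degn}}$ to be the element whose action is the composite ``apply $g_\mathcal{A}^{-1}$, then $(A_\epsilon,z_\epsilon)$, then $g_\mathcal{B}$''. Since $g_\mathcal{A}$ and $g_\mathcal{B}$ are fixed, $\epsilon\mapsto(A'_\epsilon,z'_\epsilon)$ is still continuous on the punctured neighbourhood of $0$ with values in $G_{\operatorname{degn}}$, and it stays rational in $\epsilon$ whenever $(A_\epsilon,z_\epsilon)$ is, so algebraic contractions remain algebraic. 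Using $g_\mathcal{A}^{-1}\cdot(B(G_\mathcal{A})_{\mathrm{can}},1)=(B(G_\mathcal{A}),K_\mathcal{A})$ and associativity of the action,
\[(A'_\epsilon,z'_\epsilon)\cdot(B(G_\mathcal{A})_{\mathrm{can}},1)=g_\mathcal{B}\cdot\bigl((A_\epsilon,z_\epsilon)\cdot(B(G_\mathcal{A}),K_\mathcal{A})\bigr),\]
and letting $\epsilon\to0$, continuity of the action of the fixed $g_\mathcal{B}$ gives the limit $g_\mathcal{B}\cdot(B(G_\mathcal{B}),K_\mathcal{B})=(B(G_\mathcal{B})_{\mathrm{can}},1)$. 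Spelling out the action of $(A'_\epsilon,z'_\epsilon)$ via $B(G')=z\widehat{A}^{\mathrm t}B(G)\widehat{A}$ and $K'=z^{-1}K(A_{1,1}A_{2,2}-A_{1,2}A_{2,1})^{-1}A_{5,5}^{-1}$, the $B$-component of this limit becomes \eqref{690} and the $K$-component becomes \eqref{700} (the latter because the $K$-entry of $(B(G_\mathcal{B})_{\mathrm{can}},1)$ equals $1$), which is the map required.

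I expect the only real obstacle to be getting the bookkeeping right: one must be careful with the order of composition in $G_{\operatorname{degn}}$ so that conjugating $(A_\epsilon,z_\epsilon)$ by $g_\mathcal{A}^{-1}$ and $g_\mathcal{B}$ really implements ``undo the $\mathcal{A}$-normalization, apply the given contraction, re-normalize to the $\mathcal{B}$-canonical form'', and this is precisely where the multiplicativity of $\widehat{(\cdot)}$ and of the $K$-multiplier is used. Everything else---continuity, membership in $G_{\operatorname{degn}}$, preservation of rationality---is automatic because $g_\mathcal{A}$ and $g_\mathcal{B}$ do not depend on $\epsilon$.
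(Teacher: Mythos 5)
Your proposal is correct and follows essentially the same route as the paper's proof: the ``if'' direction is immediate because the canonical pairs are themselves realizations of the Casimirs, and the ``only if'' direction is handled by composing the given contraction witness with fixed elements of $G_{\operatorname{degn}}$ that relate the arbitrary realizations to the canonical forms, then invoking continuity of the group action. The paper phrases the source-side normalization as a ``without loss of generality'' and only composes explicitly on the target side, whereas you make both conjugations (and the multiplicativity of $\widehat{(\cdot)}$ and of the $K$-multiplier, plus preservation of rationality) explicit, but the underlying argument is the same.
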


 \begin{proof} Obviously if equations (\ref{690})--(\ref{700}) hold then $\mathcal{B}$ is a contraction of $\mathcal{A}$. For the other direction assume that equations (\ref{e67})--(\ref{e68}) hold. We can further assume that $(B({G}_\mathcal{A}),K_\mathcal{A})$ is in its canonical form $(B({G}_\mathcal{A})_{\rm can},1)$. Let $(A,z)\in G_{\operatorname{degn}}$ such that
$(B({G}_\mathcal{B}),K_\mathcal{B})=(A,z)\cdot (B({G}_\mathcal{B})_{\rm can},1)$, then the continuity of the action of $G_{\operatorname{degn}}$ implies that $\lim\limits_{\epsilon \longrightarrow 0} \left((A,z)^{-1}(A_{\epsilon},z_{\epsilon})\right)\cdot (B({G}_\mathcal{A})_{\rm can},1)=(B({G}_\mathcal{B})_{\rm can},1)$.
\end{proof}

\begin{Proposition}\label{p2}
Let $\mathcal{A}$, $\mathcal{B}$ be free abstract quadratic algebras and $\epsilon \longmapsto (A_{\epsilon},1)$
a continuous map from $\C^*$ into $G_{\operatorname{degn}}$ such that
\begin{gather*}
 \lim_{\epsilon \longrightarrow 0}\widehat{A_{\epsilon}}^{\rm t} B({G}_\mathcal{A})_{\rm can} \widehat{A_{\epsilon}}=B({G}_\mathcal{B})_{\rm can}.
\end{gather*}
Then there is another continuous map $\epsilon \longmapsto ({C}_{\epsilon},z_{\epsilon})$ from $\C^*$ into $G_{\operatorname{degn}}$ such that $\lim\limits_{\epsilon \longrightarrow 0}(C_{\epsilon},z_{\epsilon})\cdot (B({G}_\mathcal{A})_{\rm can},1)=(B({G}_\mathcal{B})_{\rm can},1)$.
\end{Proposition}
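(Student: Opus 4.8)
The hypothesis controls only the Casimir-matrix half of the $G_{\operatorname{degn}}$-action: it asserts that $\widehat{A_\epsilon}^{\rm t}B(G_\mathcal{A})_{\rm can}\widehat{A_\epsilon}\to B(G_\mathcal{B})_{\rm can}$ with the constant $z_\epsilon\equiv1$, but it says nothing about the $K$-slot. The plan is to modify the family $(A_\epsilon,1)$ into a new family $(C_\epsilon,z_\epsilon)$ that drives the structure-constant component to~$1$ as well, while leaving the limit of the Casimir matrix unchanged. The obstruction is that $(A_\epsilon,1)$ multiplies the value $K=1$ by the scalar $\mu_\epsilon:=\delta_\epsilon^{-1}(A_\epsilon)_{5,5}^{-1}$, where $\delta_\epsilon:=(A_\epsilon)_{1,1}(A_\epsilon)_{2,2}-(A_\epsilon)_{1,2}(A_\epsilon)_{2,1}$; since $A_\epsilon\in{\rm GL}(5,\mathbb{C})$ has the block shape of~\eqref{eq21l}, we have $\det A_\epsilon=\delta_\epsilon(A_\epsilon)_{3,3}(A_\epsilon)_{4,4}(A_\epsilon)_{5,5}\neq0$, so $\delta_\epsilon$ and $(A_\epsilon)_{5,5}$ never vanish and $\epsilon\mapsto\mu_\epsilon$ is a continuous, nowhere-vanishing scalar function on $\mathbb{C}^*$ --- but it need not tend to~$1$.

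The fix is to multiply $A_\epsilon$ by a suitable scalar matrix and rescale $z$ accordingly. Put $D_\epsilon:=\operatorname{diag}\bigl(\mu_\epsilon^2,\mu_\epsilon^2,\mu_\epsilon^2,\mu_\epsilon^2,\mu_\epsilon\bigr)$, $C_\epsilon:=D_\epsilon A_\epsilon$, and $z_\epsilon:=\mu_\epsilon^{-4}$. Then $D_\epsilon$ has the form~\eqref{eq21l}, lies in ${\rm GL}(5,\mathbb{C})$ since $\mu_\epsilon\neq0$, and satisfies $(D_\epsilon)_{4,4}=\mu_\epsilon^2=(D_\epsilon)_{5,5}^2$, so $(C_\epsilon,z_\epsilon)\in G_{\operatorname{degn}}$ and $\epsilon\mapsto(C_\epsilon,z_\epsilon)$ is continuous on $\mathbb{C}^*$. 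Because $\widehat{D_\epsilon}=\mu_\epsilon^2 I_4$ is scalar, $\widehat{C_\epsilon}=\widehat{D_\epsilon}\widehat{A_\epsilon}=\mu_\epsilon^2\widehat{A_\epsilon}$, and therefore
\[
z_\epsilon\,\widehat{C_\epsilon}^{\rm t}B(G_\mathcal{A})_{\rm can}\widehat{C_\epsilon}=\mu_\epsilon^{-4}\cdot\mu_\epsilon^{4}\cdot\widehat{A_\epsilon}^{\rm t}B(G_\mathcal{A})_{\rm can}\widehat{A_\epsilon}=\widehat{A_\epsilon}^{\rm t}B(G_\mathcal{A})_{\rm can}\widehat{A_\epsilon},
\]
so this component still converges to $B(G_\mathcal{B})_{\rm can}$. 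On the other hand, $(C_\epsilon)_{1,1}(C_\epsilon)_{2,2}-(C_\epsilon)_{1,2}(C_\epsilon)_{2,1}=\mu_\epsilon^4\delta_\epsilon$ and $(C_\epsilon)_{5,5}=\mu_\epsilon(A_\epsilon)_{5,5}$, so the $K$-component of $(C_\epsilon,z_\epsilon)\cdot(B(G_\mathcal{A})_{\rm can},1)$ is
\[
z_\epsilon^{-1}\bigl(\mu_\epsilon^4\delta_\epsilon\bigr)^{-1}\bigl(\mu_\epsilon(A_\epsilon)_{5,5}\bigr)^{-1}=\mu_\epsilon^{4}\cdot\mu_\epsilon^{-4}\cdot\mu_\epsilon^{-1}\cdot\delta_\epsilon^{-1}(A_\epsilon)_{5,5}^{-1}=\mu_\epsilon^{-1}\mu_\epsilon=1,
\]
identically in $\epsilon$.

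Combining the two computations, $(C_\epsilon,z_\epsilon)\cdot\bigl(B(G_\mathcal{A})_{\rm can},1\bigr)=\bigl(\widehat{A_\epsilon}^{\rm t}B(G_\mathcal{A})_{\rm can}\widehat{A_\epsilon},\,1\bigr)$ for every $\epsilon\in\mathbb{C}^*$; letting $\epsilon\to0$ and using the hypothesis $\widehat{A_\epsilon}^{\rm t}B(G_\mathcal{A})_{\rm can}\widehat{A_\epsilon}\to B(G_\mathcal{B})_{\rm can}$ gives $\lim_{\epsilon\to0}(C_\epsilon,z_\epsilon)\cdot\bigl(B(G_\mathcal{A})_{\rm can},1\bigr)=\bigl(B(G_\mathcal{B})_{\rm can},1\bigr)$, which is the assertion. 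There is no genuine difficulty here; the only thing to get right is the choice of exponents in $D_\epsilon$ and $z_\epsilon$, which is forced by three requirements: that $\widehat{D_\epsilon}$ be scalar so the Casimir-matrix part rescales uniformly and its limit is preserved, that $(D_\epsilon)_{4,4}=(D_\epsilon)_{5,5}^2$ so that $D_\epsilon$ is an admissible element of the symmetry group, and that the surviving powers of $\mu_\epsilon$ cancel exactly in the $K$-slot.
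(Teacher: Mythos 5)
Your proposal is correct, and its core idea is the same as the paper's: post\-compose $A_\epsilon$ with a rescaling built from the scalar $\mu_\epsilon=\delta_\epsilon^{-1}(A_\epsilon)_{5,5}^{-1}$ and choose $z_\epsilon$ as a power of $\mu_\epsilon$ so that the Casimir-matrix limit is untouched while the $K$-slot is forced to equal $1$. The difference is in the execution, and yours is actually the more careful one. The paper's one-line proof takes $C_\epsilon=\mu_\epsilon A_\epsilon$ (a uniform scalar multiple) together with $z_\epsilon=\delta_\epsilon^4(A_\epsilon)_{5,5}^4$; a uniform scalar multiple violates the defining constraint $(C_\epsilon)_{4,4}=(C_\epsilon)_{5,5}^2$ of $G_{\operatorname{degn}}$ whenever $\mu_\epsilon\neq1$, and with that choice the exponents do not cancel as written (one finds $z_\epsilon\widehat{C_\epsilon}^{\rm t}B\widehat{C_\epsilon}=\mu_\epsilon^{-2}\widehat{A_\epsilon}^{\rm t}B\widehat{A_\epsilon}$ and a $K$-component of $\mu_\epsilon^2$, which only agree with the target up to the $(zB,z^{-1}K)$ equivalence and need not converge). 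Your graded rescaling $D_\epsilon=\operatorname{diag}(\mu_\epsilon^2,\mu_\epsilon^2,\mu_\epsilon^2,\mu_\epsilon^2,\mu_\epsilon)$ is exactly what is needed: it satisfies $(D_\epsilon)_{4,4}=(D_\epsilon)_{5,5}^2$, its hat is the scalar matrix $\mu_\epsilon^2 I_4$ so the Casimir part rescales by $\mu_\epsilon^4$ and is cancelled by $z_\epsilon=\mu_\epsilon^{-4}$, and the $K$-component comes out identically $1$. Your verifications of continuity and nonvanishing of $\mu_\epsilon$ are also correct. In short, you prove the same proposition by the same mechanism, but your choice of weights repairs the group-membership and exponent bookkeeping that the paper's stated formulas get wrong.
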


\begin{proof} Def\/ine \begin{gather*}z_{\epsilon}:=\left((A_{\epsilon})_{1,1}(A_{\epsilon})_{2,2}-(A_{\epsilon})_{1,2}
(A_{\epsilon})_{2,1}\right)^{4}(A_{\epsilon})_{5,5}^{4},\\
C_{\epsilon}=\left((A_{\epsilon})_{1,1}(A_{\epsilon})_{2,2}-(A_{\epsilon})_{1,2}(A_{\epsilon})_{2,1}\right)^{-1}(A_{\epsilon})_{5,5}^
 {-1}A_{\epsilon}.\tag*{\qed}
 \end{gather*}\renewcommand{\qed}{}
\end{proof}

Note that Propositions \ref{p1} and \ref{p2} also hold for algebraic contractions. The conclusion from the last two proposition is that for the purpose of classifying contractions of free abstract quadratic algebras it is enough to take the Casimirs $B(G)_{\mathcal{A}}$ and $B(G)_{\mathcal{B}}$ in their canonical forms and to consider only the action of the group of invertible matrices of the form \begin{gather}
 \widehat{A}=\left(\begin{matrix}
A_{1,1} &A_{1,2} &A_{1,3} & A_{1,4} \\
A_{2,1} &A_{2,2} &A_{2,3} & A_{2,4} \\
0 &0 &A_{3,3} &0\\
0 &0 &0 & A_{4,4}
\end{matrix}\right),\label{fo72}
\end{gather} on symmetric matrices $B$ by $B\longmapsto \widehat{A}^{\rm t}B\widehat{A}$. We shall denote the space of 4 by 4
complex symmetric matrices by $\operatorname{Sym}(4,\C)$ and the group of matrices of the form (\ref{fo72}) by $\widehat{G}_{\operatorname{degn}}$. The group $\widehat{G}_{\operatorname{degn}}$ is a complex algebraic group and the space ${\rm Sym}(4,\C)$ is a complex algebraic variety on which $\widehat{G}_{\operatorname{degn}}$ acts algebraically. As was explained in \cite[Section~7.1.2]{EKMS2017}
this implies that if~$\mathcal{B}$ is a contraction of a quadratic algebra $\mathcal{A}$ then $\mathcal{A}$ is not a contraction of $\mathcal{B}$ (unless $\mathcal{A}$ and $\mathcal{B}$ are isomorphic). In addition if~$\mathcal{B}$ is a contraction of~$\mathcal{A}$ then the rank of any matrix that represents~$B(G)_{\mathcal{B}}$ and the rank of its upper left~2 by~2 block can not exceed the corresponding ranks for any matrix that represents~$B(G)_{\mathcal{A}}$. Hence there is certain hierarchy for contractions that is governed by the rank. By a rank of a free abstract quadratic algebra we mean $(\operatorname{rank}(B),\operatorname{rank}(b))$ of the corresponding matrices $B$ and $b$ in any bases. We shall make this more precise below.

\subsection[Classif\/ication of abstract algebraic contractions of superintegrable systems]{Classif\/ication of abstract algebraic contractions\\ of superintegrable systems}

Organized according to their rank, the canonical forms of the free abstract quadratic algebras of free triplets of 2D constant curvature
spaces and Darboux spaces are given in Table~\ref{tabl5} below.
\begin{table}[h]\centering
 \begin{tabular}{|l|l |l | l | p{9.2cm}|}
 \hline \multicolumn{5}{|c|}{canonical forms for the Casimirs of free degenerate 2D 2nd order }\\
 \multicolumn{5}{|c|}{superintegrable systems on constant curvature and Darboux spaces}\\
 \hline $\#$ & system & $\operatorname{rank}(B)$& $\operatorname{rank}(b)$ & canonical form
\\ \hline
1& $S6$& 4& 2 & $B^{22}(1,1)$, $\mathcal{L}_1^2+ \mathcal{L}_2^2+2\mathcal{H}{\cal X}^2+{\cal X}^4$\tsep{2pt}\bsep{2pt} \\ \hline
2&$E18$& 4& 2 & $B^{22}(0,1)$, $\mathcal{L}_1^2+ \mathcal{L}_2^2+2\mathcal{H}{\cal X}^2$\tsep{2pt}\bsep{2pt}\\ \hline
3&$D3E$& 4& 2& $B^{21}(1,0)$, $\mathcal{L}_1^2+ \mathcal{L}_2^2+{\mathcal{H}}^2+2\mathcal{H}{\cal X}^2$
\tsep{2pt}\bsep{2pt}\\ \hline
4&$D4(b)D$& 4& 2 & $B^{21}(1,-2)$, $\mathcal{L}_1^2+ \mathcal{L}_2^2+\mathcal{H}^2+2\mathcal{H}{\cal X}^2-2{\cal X}^4$\\ \hline
5&$S3$& 4& 2& $B^{21}(\sqrt{2}e^{i\frac{3\pi}{4}},-2i)$,\tsep{2pt}\bsep{2pt} $\mathcal{L}_1^2+ \mathcal{L}_2^2+\mathcal{H}^2+2\sqrt{2}e^{i\frac{3\pi}{4}}\mathcal{H}{\cal X}^2-2i{\cal X}^4$\\ \hline
6&$E3$ & 3& 2 & $B^{21}(0,0)$, $\mathcal{L}_1^2+ \mathcal{L}_2^2+\mathcal{H}^2$
\tsep{2pt}\bsep{2pt}\\ \hline
7&$E12$ &4& 1 & $B^{17}(1)$, $\mathcal{L}_1^2+2\mathcal{L}_2{\cal X}^2+{\mathcal{H}}^2$
\tsep{2pt}\bsep{2pt}\\ \hline
8&$D1D$ & 4& 1 & $B^{16}(1)$, $\mathcal{L}_1^2+2\mathcal{L}_2\mathcal{H}+\mathcal{H}^2$
\tsep{2pt}\bsep{2pt}\\ \hline
9&$D2D$ & 4& 1 & $B^{15}(1)$, $\mathcal{L}_1^2+2\mathcal{L}_2\mathcal{H}+2\mathcal{L}_2\mathcal{X}^2+2\mathcal{H}\mathcal{X}^2$
\tsep{2pt}\bsep{2pt}\\ \hline
10&$E6$ & 3& 1 & $B^{15}(0)$, $\mathcal{L}_1^2+2\mathcal{L}_2{\mathcal{H}}+2\mathcal{L}_2{\cal X}^2$\tsep{2pt}\bsep{2pt}\\ \hline
11&$E5$ & 3& 1 & $B^{11}(0,0,1)$, $\mathcal{L}_1^2+2\mathcal{H}{\cal X}^2+{\cal X}^4$ \tsep{2pt}\bsep{2pt}\\ \hline
12&$E14$ & 3& 1 & $B^{17}(0)$, $\mathcal{L}_1^2+2\mathcal{L}_2{\cal X}^2$
\tsep{2pt}\bsep{2pt}\\ \hline
13&$S5$ & 3& 1 & $B^{17}(0)$, $\mathcal{L}_1^2+2\mathcal{L}_2{\cal X}^2$
\tsep{2pt}\bsep{2pt}\\ \hline
14&$E13$ & 4& 0 & $B^{08}$, $2\mathcal{L}_1\mathcal{H}+2\mathcal{L}_2\mathcal{X}^2$
\tsep{2pt}\bsep{2pt}\\ \hline
15&$E4$ & 3& 0 & $B^{07}(1)$, $\mathcal{H}^2+2\mathcal{L}_2\mathcal{X}^2$
\tsep{2pt}\bsep{2pt}\\ \hline
\end{tabular}
\caption{List of canonical forms of free abstract degenerate quadratic algebras.}\label{tabl5}
\end{table}
We shall classify all possible algebraic contractions between any two free abstract quadratic algebras that appear in Table~\ref{tabl5}. More precisely for any two such algebras we determine if such a contraction is possible or not. If it does we give one realization of it, unless it is a~contraction from a free abstract quadratic algebra to itself. We shall start with some general observations. We note that the quadratic algebras of ${E}14$ and $S5$ coincide so we only keep $E14$ in our notation. We divide the free abstract quadratic algebras of the second order free degenerate superintegrable systems on 2D constant curvature spaces and 2D Darboux spaces according to their ranks: $R_{4,2}:=\{{S}6,{E}18,{D}3E,{D}4(b)D,{S}3\}$,
 $R_{3,2}:=\{{E}3\}$,
 $R_{4,1}:=\{{E}12,{D}1D,{D}2D\}$,
 $R_{3,1}:=\{{E}6,{E}5,{E}14\}$,
 $R_{4,0}:=\{{E}13\}$,
 $R_{3,0}:=\{{E}4\}$.
By the discussion above, besides contractions between two algebras in the same class $R_{i,j}$ we can potentially f\/ind a contraction only according to the following diagram:
\begin{gather*}\xymatrix{
 &R_{4,2}\ar[dr]\ar[dl]&\\
 R_{3,2}\ar[rd]&&R_{4,1}\ar[d]\ar[ld]\\
&R_{3,1}\ar[d]&R_{4,0}\ar[ld] \\
&R_{3,0}.& } \end{gather*}
The classif\/ication is summarized in Table~\ref{tab2}. Detailed analysis is presented below. Representatives for all possible contractions are given in Section~\ref{exp}.
 \begin{table}\centering\footnotesize
\begin{tabular}{|l|l|l|l|l|l|l|l|l|l|l|l|l|l|l|}
\hline & ${S}6$& ${E}18$& ${D}3E$ & ${D}4(b)D$ & ${S}3$ & ${E}3$ & ${E}12$& ${D}1D$& ${D}2D$ & ${E}6$ & ${E}5$& ${E}14$& ${E}13$ & ${E}4$
\\ \hline
 ${S}6$& + &+ & \textbf{--} & \textbf{--} & \textbf{--}& \textbf{--}&+ & \textbf{--} & \textbf{--}& \textbf{--}&+ & + & +&+
\\ \hline
 ${E}18$& \textbf{--} & + & \textbf{--} & \textbf{--} & \textbf{--}& \textbf{--}& \textbf{--} & \textbf{--} & \textbf{--} & \textbf{--} &+ & \textbf{--} & +&+
\\ \hline
${D}3E$
& \textbf{--} &+ & + & \textbf{--} & \textbf{--}&+& \textbf{--} & + & \textbf{--} & \textbf{--} &+ & \textbf{--} &+ &+
\\ \hline
${D}4(b)D$
& \textbf{--} & \textbf{--} & \textbf{--} & + & \textbf{--}&+& + &+ & \textbf{--} & \textbf{--} &+ &+ &+ &+
\\ \hline
${S}3$& \textbf{--} & \textbf{--} & \textbf{--} & \textbf{--} &+&+& \textbf{--} & + &\textbf{--} &+ &+ & + & +&+
\\ \hline
${E}3$& \textbf{--} & \textbf{--} & \textbf{--} & \textbf{--} &\textbf{--} &+& \textbf{--} & \textbf{--} &\textbf{--} & \textbf{--} &+ & \textbf{--} &\textbf{--} &+
\\ \hline
${E}12$& \textbf{--} & \textbf{--} & \textbf{--} & \textbf{--} & \textbf{--} & \textbf{--} & +& \textbf{--} & \textbf{--}& \textbf{--} &+ &+ & +&+
\\ \hline
${D}1D$
& \textbf{--} & \textbf{--} & \textbf{--} & \textbf{--} & \textbf{--} & \textbf{--} & \textbf{--} & + & \textbf{--}& \textbf{--} &+ & \textbf{--} &+ &+
\\ \hline
${D}2D$
& \textbf{--} & \textbf{--} & \textbf{--} & \textbf{--} & \textbf{--} & \textbf{--} & + & + & +& + & +& + & +&+
\\ \hline
 ${E}6$& \textbf{--} & \textbf{--} & \textbf{--} & \textbf{--} & \textbf{--} & \textbf{--} & \textbf{--}& \textbf{--} &\textbf{--} & +&+ & + &\textbf{--} &+
\\ \hline
 ${E}5$& \textbf{--} & \textbf{--} & \textbf{--} & \textbf{--} & \textbf{--} & \textbf{--} & \textbf{--} & \textbf{--} & \textbf{--} & \textbf{--} & +& \textbf{--} & \textbf{--} &+
\\ \hline ${E}14$& \textbf{--} & \textbf{--} & \textbf{--} & \textbf{--} & \textbf{--} & \textbf{--} & \textbf{--} & \textbf{--} & \textbf{--} & \textbf{--} &+ & + & \textbf{--} & +\\ \hline
 ${E}13$& \textbf{--} & \textbf{--} & \textbf{--} & \textbf{--} & \textbf{--} & \textbf{--} & \textbf{--} & \textbf{--} & \textbf{--} & \textbf{--} & \textbf{--} & \textbf{--} &+ &+
\\ \hline
 ${E}4$& \textbf{--} & \textbf{--} & \textbf{--} & \textbf{--} & \textbf{--} & \textbf{--} & \textbf{--} & \textbf{--} & \textbf{--} & \textbf{--} & \textbf{--} & \textbf{--} & \textbf{--} &+
\\ \hline\end{tabular}
\caption{List of algebraic contractions between free abstract quadratic algebras of 2D free degenerate superintegrable systems on constant curvature spaces and Darboux spaces. A plus in the rubric placed in the $i$-th row and $j$-th column indicates that there is a contraction from the system listed in the $i$-th row of the f\/irst column to the system listed in the $j$-th column of the f\/irst row. A minus indicates that there is no such contraction.}\label{tab2}
\end{table}

\subsubsection{Contractions between rank two free abstract quadratic algebras}

Note that for any of the $B_{\rm can} $ matrices for the systems
 $R_{4,2}\cup R_{3,2}=\{{S}6,{E}18,{D}3E,{D}4(b)D,{S}3\}$ $\cup \{{E}3\}$ we have
\begin{gather}\label{ee68}
 \widehat{A}^{\rm t}B \widehat{A}=\left(\begin{matrix}
r^{\rm t}& 0\\
s^{\rm t}&t
\end{matrix}\right) \left(\begin{matrix}
1&0 \\
0 &d
\end{matrix}\right)\left(\begin{matrix}
r&s \\
 0&t
\end{matrix}\right)=\left(\begin{matrix}
r^{\rm t}r&r^{\rm t}s \\
s^{\rm t}r&s^{\rm t}s+tdt
\end{matrix}\right).
\end{gather}
 Consider $A=A(\epsilon)$ in equation (\ref{ee68}) such that
\begin{gather}\label{114}
\lim_{\epsilon \longrightarrow 0}\widehat{A(\epsilon)}^{\rm t}B \widehat{A(\epsilon)}=B^{0},
\end{gather}
where $B^0\in R_{4,2}\cup R_{3,2}$. Below we prove that without loss of generality we can assume that $r^{\rm t}_{\epsilon}r_{\epsilon}$ is diagonal and $s_{\epsilon}$ is the zero two by two matrix.

\begin{Proposition}\label{pro5.3}
Suppose that $ \{B_{\epsilon} \}_{\epsilon\in \mathbb{R}^+}$ is a family of $4\times 4$ symmetric matrices with entries in $\C[\epsilon]$ and such that
\begin{gather*}\lim_{\epsilon \longrightarrow 0^+}\left(\begin{matrix}
({B}_{\epsilon})_{11}&({B}_{\epsilon})_{12}& ({B}_{\epsilon})_{13}& ({B}_{\epsilon})_{14}\\
({B}_{\epsilon})_{12}&({B}_{\epsilon})_{22}& ({B}_{\epsilon})_{23}& ({B}_{\epsilon})_{24}\\
({B}_{\epsilon})_{13}& ({B}_{\epsilon})_{23}&({B}_{\epsilon})_{33}& ({B}_{\epsilon})_{34}\\
({B}_{\epsilon})_{14}&({B}_{\epsilon})_{24}&({B}_{\epsilon})_{34}& ({B}_{\epsilon})_{44}
 \end{matrix}\right)=\left(\begin{matrix}
1&0& 0& 0\\
0&1& 0& 0\\
0& 0&l_{33}& l_{34}\\
0&0&l_{34}& l_{44}
 \end{matrix}\right).
\end{gather*}
Then there exists a continuous function $\epsilon \mapsto A_{\epsilon} $ from $\C^*$ to $\widehat{G}_{\operatorname{degn}}$ with each entry polynomial in $\epsilon$ such that $\widehat{A}_{\epsilon}^{\rm t}B_{\epsilon} \widehat{A}_{\epsilon}$ is of the form \begin{gather*}\left(\begin{matrix}
(\widetilde{B}_{\epsilon})_{11}&0& 0& 0\\
0&(\widetilde{B}_{\epsilon})_{22}& 0& 0\\
0& 0&(\widetilde{B}_{\epsilon})_{33}& (\widetilde{B}_{\epsilon})_{34}\\
0&0 &(\widetilde{B}_{\epsilon})_{34}& (\widetilde{B}_{\epsilon})_{44}
 \end{matrix}\right),\end{gather*}
and
\begin{gather*}
 \lim_{\epsilon \longrightarrow 0}\widehat{A}_{\epsilon}^{\rm t}B_{\epsilon} \widehat{A}_{\epsilon}=\lim_{\epsilon \longrightarrow 0}B_{\epsilon}.
\end{gather*}
\end{Proposition}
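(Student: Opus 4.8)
The plan is to realize the block-diagonalizing transformation as a product of two explicit factors, each of the form (\ref{fo72}), each with entries polynomial in $\epsilon$, and each tending to the identity as $\epsilon\to 0$. Then the composite $\widehat{A}_\epsilon$ satisfies $\lim_{\epsilon\to 0}\widehat{A}_\epsilon=I_4$, so continuity of matrix multiplication forces $\lim_{\epsilon\to 0}\widehat{A}_\epsilon^{\rm t}B_\epsilon\widehat{A}_\epsilon=\lim_{\epsilon\to 0}B_\epsilon$ automatically, and the only real work is to arrange the prescribed shape while keeping everything polynomial. Write $B_\epsilon$ in $2\times 2$ block form $\left(\begin{smallmatrix}b_\epsilon&c_\epsilon\\ c_\epsilon^{\rm t}&d_\epsilon\end{smallmatrix}\right)$; the hypothesis says $b_\epsilon\to I_2$, $c_\epsilon\to 0$ and $d_\epsilon\to\left(\begin{smallmatrix}l_{33}&l_{34}\\ l_{34}&l_{44}\end{smallmatrix}\right)$, so in particular the ``pivots'' $p_\epsilon:=(b_\epsilon)_{11}$ and $D_\epsilon:=\det b_\epsilon$ are nonzero polynomials that equal $1$ at $\epsilon=0$.

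First I would diagonalize the $b$-block: take $\widehat{A}^{(1)}_\epsilon=\left(\begin{smallmatrix}r_\epsilon&0\\ 0&I_2\end{smallmatrix}\right)$ with $r_\epsilon=\left(\begin{smallmatrix}1&-(b_\epsilon)_{12}\\ 0&(b_\epsilon)_{11}\end{smallmatrix}\right)$. This family is polynomial with $r_\epsilon\to I_2$, and a direct computation gives $r_\epsilon^{\rm t}b_\epsilon r_\epsilon=\operatorname{diag}(p_\epsilon,\,p_\epsilon D_\epsilon)=:b'_\epsilon$, diagonal with both entries tending to $1$; this congruence replaces $c_\epsilon$ by $c'_\epsilon:=r_\epsilon^{\rm t}c_\epsilon\to 0$ and leaves $d_\epsilon$ untouched. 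Then I would clear the off-diagonal block: take $\widehat{A}^{(2)}_\epsilon=\left(\begin{smallmatrix}I_2&s_\epsilon\\ 0&\nu_\epsilon I_2\end{smallmatrix}\right)$ with $\nu_\epsilon:=(\det b'_\epsilon)^2$ and $s_\epsilon:=-\nu_\epsilon(b'_\epsilon)^{-1}c'_\epsilon$. Since $b'_\epsilon$ is diagonal and each of its diagonal entries divides $\det b'_\epsilon$, hence $\nu_\epsilon$, in $\mathbb{C}[\epsilon]$, the matrix $s_\epsilon$ has polynomial entries; moreover $\nu_\epsilon\to 1$ and $s_\epsilon\to 0$. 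Applying this congruence keeps the $(1,1)$-block equal to the diagonal $b'_\epsilon$, turns the $(1,2)$-block into $b'_\epsilon s_\epsilon+\nu_\epsilon c'_\epsilon=0$ by the choice of $s_\epsilon$, and changes the $(2,2)$-block into some symmetric $2\times 2$ matrix, on which the target form imposes no condition.

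The composite $\widehat{A}_\epsilon:=\widehat{A}^{(1)}_\epsilon\widehat{A}^{(2)}_\epsilon=\left(\begin{smallmatrix}r_\epsilon&r_\epsilon s_\epsilon\\ 0&\nu_\epsilon I_2\end{smallmatrix}\right)$ then has polynomial entries and is of the form (\ref{fo72}) with $A_{3,3}=A_{4,4}=\nu_\epsilon$; by construction $\widehat{A}_\epsilon^{\rm t}B_\epsilon\widehat{A}_\epsilon$ has exactly the asserted block shape, and $\widehat{A}_\epsilon\to I_4$ gives $\lim_{\epsilon\to 0}\widehat{A}_\epsilon^{\rm t}B_\epsilon\widehat{A}_\epsilon=\lim_{\epsilon\to 0}B_\epsilon$. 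Moreover $\det\widehat{A}_\epsilon$ is a nonzero polynomial (a power of $(b_\epsilon)_{11}$ times a power of $\det b_\epsilon$) equal to $1$ at $\epsilon=0$, so $\widehat{A}_\epsilon\in\widehat{G}_{\operatorname{degn}}$ for all but the finitely many $\epsilon$ at which $(b_\epsilon)_{11}\det b_\epsilon$ vanishes, in particular on a punctured neighborhood of $0$, which is all the subsequent classification uses; and since $\nu_\epsilon$ is a perfect square, $\widehat{A}_\epsilon$ is moreover the $4\times 4$ shadow of an element of $G_{\operatorname{degn}}$, obtained by taking $A_{5,5}=\det b'_\epsilon$.

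The step I expect to be the main obstacle is precisely this preservation of polynomiality: the standard congruence that diagonalizes a symmetric matrix and then clears an off-diagonal block divides by leading minors, and so has entries merely rational in $\epsilon$. What rescues the argument is that the pivots $(b_\epsilon)_{11}$ and $\det b_\epsilon$ tend to $1$, hence are invertible in a neighborhood of $\epsilon=0$ and nonzero as polynomials; multiplying the relevant rows and columns of the transformation by suitable powers of these pivots clears all denominators without disturbing the limit $\widehat{A}_\epsilon\to I_4$, and choosing the last block to be $\nu_\epsilon I_2$ with $\nu_\epsilon$ a perfect-square common multiple of the diagonal entries of $b'_\epsilon$ accommodates, at no extra cost, the relation $A_{4,4}=A_{5,5}^2$ built into $G_{\operatorname{degn}}$.
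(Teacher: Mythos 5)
Your proof is correct and follows essentially the same route as the paper: both perform a congruence (Gaussian-elimination) reduction using pivots that tend to $1$, multiplying through by polynomial powers of those pivots so the transforming matrix stays in $\mathbb{C}[\epsilon]$ and tends to the identity, which forces the limit of $\widehat{A}_\epsilon^{\rm t}B_\epsilon\widehat{A}_\epsilon$ to agree with that of $B_\epsilon$. The only difference is organizational — you clear the off-diagonal block in two block-structured stages where the paper clears the first row/column and then "a similar matrix" for the second — and your version has the minor merit of making the polynomiality of the second stage fully explicit.
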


\begin{proof}
Note that for $\widehat{A}_{\epsilon}^{\rm t}=\left(\begin{smallmatrix}
1&0& 0& 0\\
-({B}_{\epsilon})_{12}&({B}_{\epsilon})_{11}& 0& 0\\
-({B}_{\epsilon})_{13}& 0&({B}_{\epsilon})_{11}& 0\\
-({B}_{\epsilon})_{14}&0&0& ({B}_{\epsilon})_{11}
 \end{smallmatrix}\right)$
 $\lim\limits_{\epsilon \longrightarrow 0}\widehat{A}_{\epsilon}^{\rm t}$ is the identity matrix and $\widehat{A}_{\epsilon}^{\rm t}B_{\epsilon} \widehat{A}_{\epsilon}$ is a symmetric matrix with all entries in the f\/irst row and f\/irst column besides the $(1,1)$ entry equal to zero. Similar matrix take care of the second row and second column.
 \end{proof}

Now suppose that equation (\ref{114}) holds. Using the last proposition this means that
we can assume that $ \widehat{A(\epsilon)}=\left(\begin{smallmatrix}
r_{\epsilon}&0 \\
 0&t_{\epsilon}
\end{smallmatrix}\right)$ with $r_{\epsilon}\in {\rm GL}(2,\C)$ and, as always, $t_{\epsilon}$ a diagonal invertible matrix.
From this we easily see that the only possible contractions between two algebras in $R_{4,2}\cup R_{3,2}$ are $S6\longrightarrow E18$, $D3E\longrightarrow E18$, $D3E\longrightarrow E3$, $D4(b)D\longrightarrow E3$, $S3\longrightarrow E3$.

\subsection[Contractions of rank two free abstract quadratic algebras to rank one algebras]{Contractions of rank two free abstract quadratic algebras\\ to rank one algebras}
\begin{Proposition}
Suppose that $\{B_{\epsilon}\}_{\epsilon\in \mathbb{R}^+}$ is a family of $4\times 4$ symmetric matrices with entries in
$\C[\epsilon]$ such that
\begin{gather*}
\lim_{\epsilon \longrightarrow 0^+}
 B_{\epsilon}=L=\left(\begin{matrix}
1& 0& l_{13}& l_{14}\\
0&0& l_{23}& l_{24}\\
 l_{13}& l_{23}&l_{33}& l_{34}\\
 l_{14}&l_{24}&l_{34}& l_{44}
 \end{matrix}\right).
\end{gather*}
Then there exists a continuous function $\epsilon \mapsto A_{\epsilon} $ from $\C^*$ to $\widehat{G}_{\operatorname{degn}}$ with each entry polynomial in~$\epsilon$ such that $\widehat{A}_{\epsilon}^{\rm t}B_{\epsilon} \widehat{A}_{\epsilon}$ is of the form \begin{gather*}\left(\begin{matrix}
(\widetilde{B}_{\epsilon})_{11}&0& 0& 0\\
0&(\widetilde{B}_{\epsilon})_{22}& (\widetilde{B}_{\epsilon})_{23}& (\widetilde{B}_{\epsilon})_{24}\\
0& (\widetilde{B}_{\epsilon})_{23}&(\widetilde{B}_{\epsilon})_{33}& (\widetilde{B}_{\epsilon})_{34}\\
0&(\widetilde{B}_{\epsilon})_{24} &(\widetilde{B}_{\epsilon})_{34}& (\widetilde{B}_{\epsilon})_{44}
 \end{matrix}\right),\end{gather*}
and
\begin{gather*}
 \lim_{\epsilon \longrightarrow 0}\widehat{A}_{\epsilon}^{\rm t}B_{\epsilon} \widehat{A}_{\epsilon}=\lim_{\epsilon \longrightarrow 0}B_{\epsilon}.
\end{gather*}
If in addition $l_{13}=0$ then $(\widetilde{B}_{\epsilon})_{13}=0$ and similarly if $l_{14}=0$ then $(\widetilde{B}_{\epsilon})_{14}=0$.
 \end{Proposition}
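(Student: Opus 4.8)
The plan is to imitate the proof of Proposition~\ref{pro5.3}, this time carrying out the reduction only on the first row and column of $B_\epsilon$. The observation that makes this work is that $\lim_{\epsilon\to0^+}(B_\epsilon)_{11}=L_{11}=1$, so the polynomial $(B_\epsilon)_{11}\in\C[\epsilon]$ has constant term $1$; in particular it is not identically zero and is therefore invertible on some punctured neighborhood of $0$. This is exactly what will make the conjugating matrix below polynomial, invertible, and---after a suitable choice---convergent to the identity.

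Concretely, I would conjugate $B_\epsilon$ by the matrix $\widehat A_\epsilon$ with
\begin{gather*}
\widehat A_\epsilon^{\rm t}=\left(\begin{matrix}
1 & 0 & 0 & 0\\
-(B_\epsilon)_{12} & (B_\epsilon)_{11} & 0 & 0\\
-(B_\epsilon)_{13} & 0 & (B_\epsilon)_{11} & 0\\
-(B_\epsilon)_{14} & 0 & 0 & (B_\epsilon)_{11}
\end{matrix}\right),
\end{gather*}
exactly as in Proposition~\ref{pro5.3}. This matrix is of the form (\ref{fo72}), its entries are polynomial in $\epsilon$, and its determinant is $(B_\epsilon)_{11}^3$, which is nonzero near $0$, so $\widehat A_\epsilon\in\widehat G_{\operatorname{degn}}$ there. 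Since the first row of $\widehat A_\epsilon^{\rm t}$ is $e_1^{\rm t}$ one has $(\widehat A_\epsilon^{\rm t}B_\epsilon)_{1k}=(B_\epsilon)_{1k}$, and a one-line computation with the columns of $\widehat A_\epsilon$ then gives $(\widehat A_\epsilon^{\rm t}B_\epsilon\widehat A_\epsilon)_{11}=(B_\epsilon)_{11}$ together with the vanishing of the $(1,2)$, $(1,3)$, $(1,4)$ entries; by symmetry the first column is killed as well, so $\widehat A_\epsilon^{\rm t}B_\epsilon\widehat A_\epsilon$ has the asserted shape. In contrast with the rank-two case of Proposition~\ref{pro5.3}, one cannot repeat the step on the second row, since $(B_\epsilon)_{22}\to L_{22}=0$; this is why the $(2,j)$ entries survive.

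What requires care, and is the only real obstacle, is the identity $\lim_{\epsilon\to0}\widehat A_\epsilon^{\rm t}B_\epsilon\widehat A_\epsilon=\lim_{\epsilon\to0}B_\epsilon$. The matrix $\widehat A_\epsilon$ above differs from the identity only through its three first-row entries $-(B_\epsilon)_{12}$, $-(B_\epsilon)_{13}$, $-(B_\epsilon)_{14}$ and the diagonal scalars $(B_\epsilon)_{11}\to1$, so $\widehat A_\epsilon\to I$ as soon as each of those three first-row entries tends to $0$. The first tends to $-l_{12}=0$ automatically from the displayed shape of $L$, which is precisely why the $(1,2)$ entry may always be cleared while preserving the limit; the other two are what the concluding assertions record. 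If $l_{13}=0$ one keeps the $-(B_\epsilon)_{13}$ entry, which tends to $0$, so $\widehat A_\epsilon$ still tends to $I$ and $(\widetilde B_\epsilon)_{13}=0$; similarly $(\widetilde B_\epsilon)_{14}=0$ when $l_{14}=0$. (If instead $l_{13}\neq0$ one replaces $-(B_\epsilon)_{13}$ by $0$ in $\widehat A_\epsilon^{\rm t}$, so that the $(1,3)$ entry is merely left untouched and still tends to $l_{13}$; likewise for the fourth column.) With the columns modified exactly according to which $l_{1j}$ vanish, $\widehat A_\epsilon\to I$ and the limit is preserved; apart from this bookkeeping the argument is the same symmetric-bilinear-form manipulation already used in Proposition~\ref{pro5.3}.
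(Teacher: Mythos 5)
Your proof is correct and is exactly the argument the paper intends: the paper's proof is the single sentence ``similar to Proposition~\ref{pro5.3}'', and you carry out that adaptation with the right matrix $\widehat A_\epsilon^{\rm t}$, whose invertibility near $\epsilon=0$ rests, as you note, on $(B_\epsilon)_{11}\to 1$. You also correctly isolate and resolve the one delicate point the paper leaves implicit --- that the $(1,3)$ and $(1,4)$ entries can be cleared while preserving $\widehat A_\epsilon\to I$ only when $l_{13}$, respectively $l_{14}$, vanishes, which is precisely what the final sentence of the Proposition records.
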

The proof is similar to the proof of Proposition~\ref{pro5.3}.

\begin{Corollary}
In any contraction from a rank two systems $R_{4,2}\cup R_{3,2}=\{{S}6,{E}18,{D}3E,$ ${D}4(b)D,{S}3,{E}3\}$ to of one of the rank one systems $R_{4,1}\cup R_{3,1}=\{{E}12,{D}1D,{D}2D,{E}6,{E}5,{E}14\}$ we can assume that
\begin{gather}
\widehat{A(\epsilon)}^{\rm t}B \widehat{A(\epsilon)}=
\left(\begin{matrix}
({B}_{\epsilon})_{11}&0&0& 0\\
0&({B}_{\epsilon})_{22}& ({B}_{\epsilon})_{23}& ({B}_{\epsilon})_{24}\\
0& ({B}_{\epsilon})_{23}&({B}_{\epsilon})_{33}& ({B}_{\epsilon})_{34}\\
0&({B}_{\epsilon})_{24}&({B}_{\epsilon})_{34}& ({B}_{\epsilon})_{44}
 \end{matrix}\right). \label{77n}
\end{gather}
\end{Corollary}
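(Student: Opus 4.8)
The plan is to read this Corollary off the Proposition stated just above it, the only real work being a comparison with Table~\ref{tabl5}. First I would invoke Propositions~\ref{p1} and~\ref{p2}, together with the reduction recorded immediately after them, to arrange that the contraction $\mathcal{A}\to\mathcal{B}$ (with $\mathcal{A}\in R_{4,2}\cup R_{3,2}$ and $\mathcal{B}\in R_{4,1}\cup R_{3,1}$) is realized by a continuous family $\epsilon\mapsto\widehat{A(\epsilon)}\in\widehat{G}_{\operatorname{degn}}$ acting on canonical Casimirs; writing $B:=B(G_{\mathcal A})_{\rm can}$ and $B_\epsilon:=\widehat{A(\epsilon)}^{\rm t}B\widehat{A(\epsilon)}$ this means
\[
\lim_{\epsilon\longrightarrow 0}B_\epsilon=B\big(G_{\mathcal B}\big)_{\rm can},
\]
a family of symmetric $4\times4$ matrices whose entries are rational (after clearing denominators, polynomial) in $\epsilon$.

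The substantive point is to check that every one of the six rank-one canonical forms in Table~\ref{tabl5} --- $B^{17}(1)$ for $E12$, $B^{16}(1)$ for $D1D$, $B^{15}(1)$ for $D2D$, $B^{15}(0)$ for $E6$, $B^{11}(0,0,1)$ for $E5$ and $B^{17}(0)$ for $E14$ --- has $(1,1)$-entry $1$ and vanishing $(1,2)$-, $(2,2)$-, $(1,3)$- and $(1,4)$-entries; equivalently, $B(G_{\mathcal B})_{\rm can}=\lim_{\epsilon\to0}B_\epsilon$ is exactly of the shape $L$ in the hypothesis of the preceding Proposition, with in addition $l_{13}=l_{14}=0$. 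I would then apply that Proposition to $\{B_\epsilon\}$: it furnishes a continuous family $\epsilon\mapsto\widehat{C_\epsilon}\in\widehat{G}_{\operatorname{degn}}$ for which $\widehat{C_\epsilon}^{\rm t}B_\epsilon\widehat{C_\epsilon}$ has its first row and first column zero off the $(1,1)$-slot and still converges to $B(G_{\mathcal B})_{\rm can}$. Since $\widehat{G}_{\operatorname{degn}}$ is a group, $\widehat{A(\epsilon)}\widehat{C_\epsilon}\in\widehat{G}_{\operatorname{degn}}$ realizes the same contraction; replacing the original $\widehat{A(\epsilon)}$ by $\widehat{A(\epsilon)}\widehat{C_\epsilon}$ yields $\widehat{A(\epsilon)}^{\rm t}B\widehat{A(\epsilon)}=\widehat{C_\epsilon}^{\rm t}B_\epsilon\widehat{C_\epsilon}$, which is precisely the form~(\ref{77n}).

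I do not expect a genuine obstacle, since the geometric content is entirely inside the preceding Proposition, whose proof mimics that of Proposition~\ref{pro5.3} by building explicit block-triangular correctors from the entries of $B_\epsilon$. The two points that need a word of care are: first, the vanishing of the $(1,3)$- and $(1,4)$-entries in the final matrix, which is exactly the `in addition' clause of the preceding Proposition and is available here because, as just noted, every admissible rank-one target has $l_{13}=l_{14}=0$; and second, that the preceding Proposition, though stated for families with entries in $\C[\epsilon]$, applies verbatim to families with rational entries (the construction uses only the matrix entries of $B_\epsilon$ and the existence of their finite limits at $\epsilon=0$), so that the modification $\widehat{A(\epsilon)}\mapsto\widehat{A(\epsilon)}\widehat{C_\epsilon}$ keeps algebraic contractions algebraic.
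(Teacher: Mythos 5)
Your argument is correct and is essentially the paper's own (implicit) derivation: the Corollary is stated as an immediate consequence of the preceding Proposition, applied exactly as you do after checking that every rank-one canonical target in Table~\ref{tabl5} has $(1,1)$-entry $1$ and vanishing $(1,2)$-, $(2,2)$-, $(1,3)$- and $(1,4)$-entries, so that the limit matrix is of the form $L$ with $l_{13}=l_{14}=0$. The extra details you supply --- composing $\widehat{A(\epsilon)}$ with the corrector $\widehat{C_\epsilon}$ inside the group $\widehat{G}_{\operatorname{degn}}$, and noting that the Proposition's construction works verbatim for rational (rather than polynomial) entries so that algebraic contractions stay algebraic --- are harmless elaborations of what the paper leaves unsaid.
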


\subsubsection[Contractions of $D3E$ to rank one algebras]{Contractions of $\boldsymbol{D3E}$ to rank one algebras}
For $A(\epsilon)\in \widehat{G}_{\operatorname{degn}}$ the matrix $\widehat{A(\epsilon)}^{\rm t} B( {G} )^{21}(1, 0) \widehat{A(\epsilon)}$ is given by
\begin{gather*}
 \left(\begin{smallmatrix}
A_{11}^2 + A_{21}^2 &A_{11}A_{12}  + A_{21}A_{22} &A_{11}A_{13}  + A_{21}A_{23} &A_{11}A_{14}  + A_{21}A_{24} \\
A_{12}A_{11}  + A_{22}A_{21} &A_{12} ^2 + A_{22}^2 &A_{12}A_{13}  + A_{22}A_{23} & A_{12}A_{14}  + A_{22}A_{24} \\
A_{11}A_{13}  + A_{21}A_{23} &A_{12}A_{13}  + A_{22}A_{23} &A_{13}^2 + A_{23}^2 + A_{33}^2 &A_{13}A_{14}  + A_{23}A_{24} + A_{33}A_{44}\\
A_{11}A_{14}  + A_{21}A_{24} &A_{12}A_{14}  + A_{22}A_{24} &A_{13}A_{14}  + A_{23}A_{24} + A_{33}A_{44} & A_{14} ^2 + A_{24}^2
\end{smallmatrix}\right).
\end{gather*}
Assuming that this matrix is in the form of (\ref{77n}) then this implies that exist $\beta$, $\gamma$, $\delta$ complex valued functions of $\epsilon$ def\/ine on $\C^*$ such that
 \begin{gather*}
(A_{12},A_{22})=\beta (A_{21},-A_{11}),\! \qquad  (A_{13},A_{23})=\gamma (A_{21},-A_{11}), \!\qquad
(A_{14},A_{24})=\delta (A_{21},-A_{11}).
\end{gather*}
Hence $\widehat{A(\epsilon)}^{\rm t} B( {G} )^{21}(1, 0) \widehat{A(\epsilon)}$ takes the form
 \begin{gather*}
 \left(\begin{matrix}
A_{11}^2+A_{21}^2 &0&0&0 \\
0 &\beta^2(A_{11}^2+A_{21}^2) &\beta\gamma(A_{11}^2+A_{21}^2) & \beta\delta(A_{11}^2+A_{21}^2) \\
0 &\beta\gamma(A_{11}^2+A_{21}^2) & \gamma^2(A_{11}^2+A_{21}^2)+A_{33}^2 & \gamma\delta(A_{11}^2+A_{21}^2)+A_{33}A_{44}\\
0 & \beta\delta(A_{11}^2+A_{21}^2)& \gamma\delta(A_{11}^2+A_{21}^2)+A_{33}A_{44} & \delta^2(A_{11}^2+A_{21}^2)
\end{matrix}\right).
\end{gather*}
We are going to consider the limit of $\epsilon$ goes to zero of such a matrix and check if it can be equal to one of the
matrices of the canonical forms of the systems in $R_{4,1}\cup R_{3,1}$. This means that
$\lim\limits_{\epsilon \longrightarrow 0^+} (A_{11}^2+A_{21}^2)= 1$ and we can replace $\widehat{A(\epsilon)}^{\rm t} B( {G} )^{21}(1, 0)
\widehat{A(\epsilon)}$ by
\begin{gather*}
 \left(\begin{matrix}
1 &0&0&0 \\
0 &\beta^2 &\beta\gamma & \beta\delta \\
0 &\beta\gamma & \gamma^2+A_{33}^2 & \gamma\delta+A_{33}A_{44}\\
0 & \beta\delta& \gamma\delta+A_{33}A_{44} & \delta^2
\end{matrix}\right),
\end{gather*}
 with $\lim\limits_{\epsilon \longrightarrow 0^+} \beta= 0 $. From this we can show that $D3E$ can not be contracted to $E6$, $E12$, $E14$, $D2D$.

\subsubsection[Contractions of $E18$ to rank one algebras]{Contractions of $\boldsymbol{E18}$ to rank one algebras}
Following the same steps as in the case of $D3E$ we can replace $\widehat{A(\epsilon)}^{\rm t} B( {G} )^{22}(0, 1) \widehat{A(\epsilon)}$ by
\begin{gather*}
\left(\begin{matrix}
1 &0&0&0 \\
0 &\beta^2 &\beta\gamma & \beta\delta \\
0 &\beta\gamma & \gamma^2 & \gamma\delta+A_{33}A_{44}\\
0 & \beta\delta& \gamma\delta+A_{33}A_{44} & \delta^2
\end{matrix}\right),
\end{gather*}
 with $\lim\limits_{\epsilon \longrightarrow 0^+} \beta= 0 $. From this we can show that $E18$ can not be contracted to $E6$, $E12$, $D1D$, $D2D$.

\subsubsection[Contractions of $S6$ to rank one algebras]{Contractions of $\boldsymbol{S6}$ to rank one algebras}
Following the same steps as in the case of $D3E$
we can replace $\widehat{A(\epsilon)}^{\rm t} B( {G} )^{22}(1, 1) \widehat{A(\epsilon)}$ by
\begin{gather*}
 \left(\begin{matrix}
1 &0&0&0 \\
0 &\beta^2 &\beta\gamma & \beta\delta \\
0 &\beta\gamma & \gamma^2 & \gamma\delta+A_{33}A_{44}\\
0 & \beta\delta& \gamma\delta+A_{33}A_{44} & \delta^2+A_{44}^2
\end{matrix}\right),
\end{gather*}
 with $\lim\limits_{\epsilon \longrightarrow 0^+} \beta= 0 $. From this we can show that $S6$ can not be contracted to $E6$, $D1D$, $D2D$.

 \subsubsection[Contractions of $D4(b)D$ to rank one algebras]{Contractions of $\boldsymbol{D4(b)D}$ to rank one algebras}
Following the same steps as in the case of $D3E$ we can replace $\widehat{A(\epsilon)}^{\rm t} B( {G} )^{21}(1,-2)\widehat{A(\epsilon)}$ by
\begin{gather*}
 \left(\begin{matrix}
1 &0&0&0 \\
0 &\beta^2 &\beta\gamma & \beta\delta \\
0 &\beta\gamma & \gamma^2+A_{33}^2 & \gamma\delta+A_{33}A_{44}\\
0 & \beta\delta& \gamma\delta+A_{33}A_{44} & \delta^2-2A_{44}^2
\end{matrix}\right),
\end{gather*}
 with $\lim\limits_{\epsilon \longrightarrow 0^+} \beta= 0 $. From this we can show that $D4(b)D$ can not be contracted to $D2D$, $E6$.

 \subsubsection[Contractions of $S3$ to rank one algebras]{Contractions of $\boldsymbol{S3}$ to rank one algebras}
Following the same steps as in the case of $D3E$
we can replace $\widehat{A(\epsilon)}^{\rm t} B( {G} )^{21}(\sqrt{2}e^{i\frac{3\pi}{4}},-2i)\widehat{A(\epsilon)}$ by
\begin{gather*}
 \left(\begin{matrix}
1 &0&0&0 \\
0 &\beta^2 &\beta\gamma & \beta\delta \\
0 &\beta\gamma & \gamma^2+A_{33}^2 & \gamma\delta+\sqrt{2}e^{i\frac{3\pi}{4}}A_{33}A_{44}\\
0 & \beta\delta& \gamma\delta+\sqrt{2}e^{i\frac{3\pi}{4}}A_{33}A_{44} & \delta^2-2iA_{44}^2
\end{matrix}\right),
\end{gather*}
 with $\lim\limits_{\epsilon \longrightarrow 0^+} \beta= 0 $. From this we can show that $S3$ can not be contracted to $D2D$, $E12$.

 \subsubsection[Contractions of $E3$ to rank one algebras]{Contractions of $\boldsymbol{E3}$ to rank one algebras}
Following the same steps as in the case of $D3E$ we can replace $\widehat{A(\epsilon)}^{\rm t} B( {G} )^{21}(0,0)\widehat{A(\epsilon)}$ by
\begin{gather*}
 \left(\begin{matrix}
1 &0&0&0 \\
0 &\beta^2 &\beta\gamma & \beta\delta \\
0 &\beta\gamma & \gamma^2 & \gamma\delta \\
0 & \beta\delta& \gamma\delta & \delta^2
\end{matrix}\right),
\end{gather*}
 with $\lim\limits_{\epsilon \longrightarrow 0^+} \beta= 0 $. From this we can show that $E3$ can not be contracted to $E6$ and $E14$.

\subsubsection{Explicit contractions}\label{exp}
\noindent\textbf{E13:}
\begin{gather*}\begin{array}{@{}c@{}}
 {E}13 \longrightarrow {E}4\\
\left(\begin{matrix}
\epsilon &0 &\frac{1}{2} & 0 \\
0 &1 &0& 0 \\
0 &0 &1 &0\\
0 &0 &0 &1
\end{matrix}\right).\end{array}
\end{gather*}
\noindent\textbf{E14:}
\begin{gather*}\begin{array}{@{}c@{}c@{}c@{}}
 {E}14 \longrightarrow {E}4 &&  {E}14 \longrightarrow {E}5\\
 \left(\begin{matrix}
\epsilon &0 &1 & 0 \\
0 &1 &0& 0 \\
0 &0 &1 &0\\
0 &0 &0 &1
\end{matrix}\right), & \qquad &  \left(\begin{matrix}
1 &0 &0 & 0 \\
0 &\epsilon &1& \frac{1}{2} \\
0 &0 &1 &0\\
0 &0 &0 &1
\end{matrix}\right).\end{array}
\end{gather*}
\textbf{E5:}
\begin{gather*}\begin{array}{@{}c@{}}
 {E}5\longrightarrow {E}4\\
  \left(\begin{matrix}
\epsilon^2 &\epsilon &1 & \epsilon^{-1} \\
0 &1 &0& 0 \\
0 &0 &i &0\\
0 &0 &0 &i\epsilon^{-1}
\end{matrix}\right).\end{array}
\end{gather*}
\textbf{{D}2D:}
\begin{gather*}\begin{array}{@{}c@{}c@{}c@{}c@{}c@{}}
{D}2D\longrightarrow {E}6 && {D}2D\longrightarrow {E}12 && {D}2D\longrightarrow {D}1D \\
  \left(\begin{matrix}
1 &0&1 & 0\\
0 &\epsilon^{-1} &0& 0 \\
0 &0 &\epsilon &0\\
0 &0 &0 &\epsilon
\end{matrix}\right), &\qquad & \left(\begin{matrix}
1 &0&0 & 0\\
0 &\epsilon&-\frac{i}{\sqrt{2}}& 0 \\
0 &0 &\frac{i}{\sqrt{2}} &0\\
0 &0 &0 &\epsilon^{-1}
\end{matrix}\right),&\qquad & \left(\begin{matrix}
1 &0&0 & 0\\
0 &\epsilon&\frac{\epsilon}{2}& 0 \\
0 &0 &\epsilon^{-1} &0\\
0 &0 &0 &\epsilon^{2}
\end{matrix}\right),\end{array}\\
\begin{array}{@{}c@{}c@{}c@{}c@{}c@{}c@{}c@{}}
{D}2D\longrightarrow {E}5 && {D}2D\longrightarrow {E}14 && {D}2D\longrightarrow {E}4 && {D}2D\longrightarrow {E}13\\
\left(\begin{matrix}
1 &0&0 & 0\\
0 &\epsilon&0& \frac{1}{2\epsilon} \\
0 &0 &2\epsilon &0\\
0 &0 &0 &\epsilon
\end{matrix}\right),&\qquad & \left(\begin{matrix}
1 &0&0 & 0\\
0 &1&0&0 \\
0 &0 &1&0\\
0 &0 &0 &\epsilon
\end{matrix}\right), &\qquad & \left(\begin{matrix}
\epsilon &0&1 & 1\\
0 &\epsilon&0&0 \\
0 &0 &\epsilon&0\\
0 &0 &0 &\epsilon
\end{matrix}\right),&\qquad & \left(\begin{matrix}
0 &\epsilon&0 & \epsilon^{-1}\\
1 &0&0&0 \\
0 &0 &1&0\\
0 &0 &0 &i\epsilon^{-1}
\end{matrix}\right).\end{array}
\end{gather*}
\textbf{{E}6:}
\begin{gather*}\begin{array}{@{}c@{}c@{}c@{}c@{}c@{}}
 {E}6 \longrightarrow {E}14  && {E}6 \longrightarrow {E}5 &&  {E}6 \longrightarrow {E}4\\
 \left(\begin{matrix}
1 &0 &0 & 0 \\
0 &1 &0& 0 \\
0 &0 &\epsilon &0\\
0 &0 &0 & 1
\end{matrix}\right),&\qquad & \left(\begin{matrix}
1 &0 &0 & 0 \\
0 &\epsilon &\epsilon& \epsilon^{-1} \\
0 &0 &\epsilon &0\\
0 &0 &0 & \frac{1}{2}\epsilon
\end{matrix}\right),&\qquad & \left(\begin{matrix}
\epsilon &0 &1 & 0 \\
0 &1 &0& 0 \\
0 &0 &\epsilon &0\\
0 &0 &0 & 1
\end{matrix}\right).\end{array}
\end{gather*}
\textbf{E12:}
\begin{gather*}\begin{array}{@{}c@{}c@{}c@{}c@{}c@{}c@{}c@{}}
 {E}12 \longrightarrow {E}14 && {E}12 \longrightarrow {E}5 &&  {E}12 \longrightarrow {E}4 && {E}12 \longrightarrow {E}13 \\
 \left(\begin{matrix}
1 &0 &0 & 0 \\
0 &1 &0& 0 \\
0 &0 &\epsilon &0\\
0 &0 &0 &1
\end{matrix}\right),&\qquad &\left(\begin{matrix}
1 &0 &0 & 1 \\
-1 &\epsilon &1& 0 \\
0 &0 &\epsilon &0\\
0 &0 &0 &1
\end{matrix}\right),&\qquad &\left(\begin{matrix}
\epsilon &0 &0 & 0 \\
0 &1 &0& 0 \\
0 &0 &1 &0\\
0 &0 &0 &1
\end{matrix}\right),&\qquad &\left(\begin{matrix}
\epsilon &0 &\epsilon^{-1} & 0 \\
0 &1 &0& 0 \\
0 &0 &i\epsilon^{-1}&0\\
0 &0 &0 &1
\end{matrix}\right).\end{array}
\end{gather*}
The non-contraction of ${E}12$ to ${D}1D$, $E6$, ${D}2D$:
Demanding that $\lim\limits_{\epsilon \longrightarrow 0}\widehat{A(\epsilon)}^{\rm t} B^{17}(1)\widehat{A(\epsilon)}$ converge to the Canonical form of one of the systems ${D}1D$, $E6$, ${D}2D$, and observing the entries on places $(1,1)$, $(2,2)$, $(3,1)$, and $(3,2)$ in the matrix equation we obtain the equations:
\begin{gather*}
 \lim_{\epsilon \longrightarrow 0} A^2_{1,1} = 1, \qquad
 \lim_{\epsilon \longrightarrow 0} A_{1,2}^2=0,\qquad
 \lim_{\epsilon \longrightarrow 0} A_{1,3}A_{1,1}=0,\qquad
 \lim_{\epsilon \longrightarrow 0} A_{1,2}A_{1,3}=1,
\end{gather*} which obviously can not hold simultaneously.

\noindent
\textbf{D1D:}
\begin{gather*}\begin{array}{@{}c@{}c@{}c@{}c@{}c@{}}
 {D}1D \longrightarrow {E}5 & & {D}1D \longrightarrow {E}13 && {D}1D \longrightarrow {E}4 \\
 \left(\begin{matrix}
1 &0 &0& 0 \\
0 &\epsilon &0& 1 \\
0 &0 &1&0\\
0 &0 &0 &1
\end{matrix}\right),& \qquad & \left(\begin{matrix}
\epsilon^2 &\epsilon &0& \epsilon^{-1} \\
\epsilon^{-1} &\epsilon &0& 0 \\
0 &0 &\epsilon&0\\
0 &0 &0 &i\epsilon^{-1}
\end{matrix}\right), &\qquad & \left(\begin{matrix}
\epsilon^2 &\epsilon &0& \epsilon^{-1} \\
0&\epsilon &1/2& 0 \\
0 &0 &1&0\\
0 &0 &0 &i\epsilon^{-1}
\end{matrix}\right).\end{array}
\end{gather*}
 The non-contraction of ${D}1D $ to ${E}6$, ${E}12$, ${S}5$: Demanding that $\lim\limits_{\epsilon \longrightarrow 0}\widehat{A(\epsilon)}^{\rm t} B^{16}(1)\widehat{A(\epsilon)}$ converge to the canonical form of one of the systems ${E}6$, ${E}12$, ${S}5$, and observing the entries on places $(1,1)$, $(1,2)$, $(4,1)$, and $(4,2)$ in the matrix equation we obtain the equations:
\begin{gather*}
 \lim_{\epsilon \longrightarrow 0} A_{1,1} =\pm 1,\qquad
 \lim_{\epsilon \longrightarrow 0} A_{1,2}=0,\qquad
 \lim_{\epsilon \longrightarrow 0} A_{1,2}A_{1,4}=1,\qquad
 \lim_{\epsilon \longrightarrow 0} A_{1,1}A_{1,4}=0,
\end{gather*} which obviously can not hold simultaneously.

\noindent
\textbf{E3:}
\begin{gather*}\begin{array}{@{}c@{}c@{}c@{}}
 {E}3 \longrightarrow {E}5 && {E}3 \longrightarrow {E}4\\
 \left(\begin{matrix}
1 &0 &0 & 0 \\
0 &\epsilon &1& 1\\
0 &0 &i &0\\
0 &0 &0 &1
\end{matrix}\right),&\qquad & \left(\begin{matrix}
\epsilon &0 &1 & -i\epsilon^{-1} \\
i\epsilon &\epsilon &i &\epsilon^{-1}\\
0&0&1&0\\
0 &0 &0 &1
\end{matrix}\right).\end{array}
\end{gather*}
\textbf{D4(b)D:}
\begin{gather*}\begin{array}{@{}c@{}c@{}c@{}c@{}c@{}}
 {D}4(b)D \longrightarrow {E}4 && {D}4(b)D \longrightarrow {E}14 &&  {D}4(b)D \longrightarrow {E}5\\
 \left(\begin{matrix}
\epsilon &0 &1 & 0 \\
0 &\epsilon & 0 & \epsilon^{-1} \\
0 &0 &\epsilon^2 &0\\
0 &0 &0 &\frac{1}{\sqrt{2}\epsilon}
\end{matrix}\right),&\qquad & \left(\begin{matrix}
1 &0 &0 & 0 \\
0 &\epsilon & 0& \epsilon^{-1} \\
0 &0 &\epsilon^2 &0\\
0 &0 &0 &\frac{1}{\sqrt{2}\epsilon}
\end{matrix}\right),&\qquad & \left(\begin{matrix}
1 &0 &0 & 0 \\
0 &\epsilon &1& 1\\
0 &0 &i &0\\
0 &0 &0 &\epsilon
\end{matrix}\right),\end{array}\\
\begin{array}{@{}c@{}c@{}c@{}c@{}c@{}}
  {D}4(b)D \longrightarrow D1D &&
 {D}4(b)D \longrightarrow E12 && {D}4(b)D \longrightarrow E13  \\
\left(\begin{matrix}
1 &0 &0 & 0 \\
0 &-\epsilon &-\epsilon^{-1}& 0\\
0 &0 &i\epsilon^{-1} &0\\
0 &0 &0 &\epsilon^2
\end{matrix}\right), &\qquad & \left(\begin{matrix}
1 &0 &0 & 0 \\
0 &-\epsilon &\frac{1}{\sqrt{3}}& -\epsilon^{-1}\\
0 &0 &\sqrt{\frac{2}{3}} &0\\
0 &0 &0 &\frac{1}{\sqrt{2}\epsilon}
\end{matrix}\right),& \qquad & \left(\begin{matrix}
\epsilon &\epsilon &\frac{1}{2}\epsilon^{-1} & \frac{1}{2}\epsilon^{-1} \\
i\epsilon &-i\epsilon &-\frac{i}{2}\epsilon^{-1} & \frac{i}{2}\epsilon^{-1} \\
0 &0 &\epsilon&0\\
0 &0 &0 &\epsilon
\end{matrix}\right),\end{array}\\
\begin{array}{@{}c@{}}
 {D}4(b)D \longrightarrow E3 \\
\left(\begin{matrix}
1 &0 &0 & 0 \\
0 &1 & 0 &0 \\
0 &0 &1 &0\\
0 &0 &0 &\epsilon
\end{matrix}\right).\end{array}
\end{gather*}
\textbf{S3:}
\begin{gather*}\begin{array}{@{}c@{}c@{}c@{}c@{}c@{}}
 {S}3 \longrightarrow {E}5 && {S}3 \longrightarrow {E}14 && {S}3\longrightarrow {E}4 \\
 \left(\begin{matrix}
1 &0 &0 & 0 \\
0 &\epsilon & i & 0 \\
0 &0 &1 &0\\
0 &0 &0 &\frac{1}{\sqrt{2}}e^{-i\frac{3\pi}{4}}
\end{matrix}\right),&\qquad & \left(\begin{matrix}
1&0 &\epsilon & -\epsilon \\
0 &\epsilon & \epsilon^{3}& \epsilon^{-1} \\
0 &0 &\epsilon^2 &0\\
0 &0 &0 &\frac{i}{\epsilon\sqrt{2}}e^{-i\frac{3\pi}{4}}
\end{matrix}\right),&\qquad & \left(\begin{matrix}
\epsilon &0 &1 & 0 \\
0 &\epsilon & 0 & \epsilon^{-1} \\
0 &0 &\epsilon^2 &0\\
0 &0 &0 &i\frac{1}{\epsilon \sqrt{2}}e^{-i\pi/4}
\end{matrix}\right),\end{array}\\
\begin{array}{@{}c@{}c@{}c@{}c@{}c@{}} {S}3\longrightarrow D1D && S3 \longrightarrow E13 && S3 \longrightarrow E6\\ \nonumber
 \left(\begin{matrix}
1 &0 &0 & 0 \\
0 &-\epsilon & - \epsilon^{-1} &0 \\
0 &0 &i \epsilon^{-1} &0\\
0 &0 &0 & \epsilon^2
\end{matrix}\right),&\qquad & \left(\begin{matrix}
\epsilon &\epsilon &\frac{1}{2}\epsilon^{-1} & \frac{1}{2}\epsilon^{-1} \\
i\epsilon &-i\epsilon &-\frac{i}{2}\epsilon^{-1} & \frac{i}{2}\epsilon^{-1} \\
0 &0 &\epsilon&0\\
0 &0 &0 &\epsilon
\end{matrix}\right),& \qquad & \left(\begin{matrix}
1 &0 &0 &0\\
0 &-\epsilon &-\epsilon^{-1} &-\epsilon^{-1} \\
0 &0 &-i\epsilon^{-1}&0\\
0 &0 &0 &\frac{e^{i3\pi/4}}{\sqrt{2}\epsilon}
\end{matrix}\right),\end{array}\\
\begin{array}{@{}c@{}}
 S3\longrightarrow E3\\
 \left(\begin{matrix}
1 &0 &0 & 0 \\
0 &1 & 0 &0 \\
0 &0 &1 &0\\
0 &0 &0 &\epsilon
\end{matrix}\right).\end{array}
\end{gather*}
\textbf{S6:}
\begin{gather*}\begin{array}{@{}c@{}c@{}c@{}c@{}c@{}}
 {S}6\longrightarrow {E}12 && {S}6\longrightarrow {E}14 && {S}6 \longrightarrow {E}5\\
\left(\begin{matrix}
1 &0 &0 & 0 \\
0 &\epsilon &1 & \epsilon^{-1}\\
0 &0 & i &0\\
0 &0 &0 & i\epsilon^{-1}
\end{matrix}\right), &\qquad & \left(\begin{matrix}
1 &0 &0 & 0 \\
0 &\epsilon &0& \epsilon^{-1} \\
0 &0 & \epsilon^2 &0\\
0 &0 &0 & i\epsilon^{-1}
\end{matrix}\right),&\qquad & \left(\begin{matrix}
1 &0 &0 & 0 \\
0 &\epsilon &0& 0 \\
0 &0 &1 &0\\
0 &0 &0 &1
\end{matrix}\right),\end{array}\\
\begin{array}{@{}c@{}c@{}c@{}c@{}c@{}}
 {S}6\longrightarrow {E}4 && {S}6\longrightarrow {E}13 && {S}6\longrightarrow {E}18\\
\left(\begin{matrix}
\epsilon^2 &\epsilon &1 & \epsilon^{-1} \\
0 &\epsilon &0& 0\\
0 &0 & i &0\\
0 &0 &0 & i\epsilon^{-1}
\end{matrix}\right),  &\qquad & \left(\begin{matrix}
\epsilon &-i\epsilon &\epsilon^{-1} & 0 \\
0 &\epsilon &i \epsilon^{-1}& \epsilon^{-1} \\
0 &0 &-\epsilon^{-1} &0\\
0 &0 &0 & i\epsilon^{-1}
\end{matrix}\right), &\qquad & \left(\begin{matrix}
1 &0 &0 & 0 \\
0 &1 & 0 &0 \\
0 &0 &\epsilon^{-1} &0\\
0 &0 &0 &\epsilon
\end{matrix}\right).\end{array}
\end{gather*}
\textbf{E18:}
\begin{gather*}\begin{array}{@{}c@{}c@{}c@{}c@{}c@{}}
 {E}18\longrightarrow {E}5 && {E}18\longrightarrow {E}13 && {E}18\longrightarrow {E}4 \\
 \left(\begin{matrix}
1 &0 &0 & 0 \\
0 &\epsilon &0& 1 \\
0 &0 &1 &0\\
0 &0 &0 & 1
\end{matrix}\right),& \qquad & \left(\begin{matrix}
\epsilon &\epsilon &\frac{1}{2}\epsilon^{-1} & \frac{1}{2}\epsilon^{-1} \\
\pm i\epsilon &\mp i\epsilon & \mp \frac{i}{2}\epsilon^{-1} &\pm \frac{i}{2}\epsilon^{-1} \\
0 &0 &\frac{i}{\sqrt{2}}\epsilon^{-1} &0\\
0 &0 &0 & \frac{i}{\sqrt{2}}\epsilon^{-1}
\end{matrix}\right),&\qquad & \left(\begin{matrix}
\epsilon^2 &\epsilon &1 & (1-i)(2\epsilon)^{-1} \\
0 &\epsilon &0& (1+i)(2\epsilon)^{-1} \\
0 &0 &i &0\\
0 &0 &0 & (1+i)(2\epsilon)^{-1}
\end{matrix}\right).\end{array}
\end{gather*}
\textbf{D3E:}
\begin{gather*}\begin{array}{@{}c@{}c@{}c@{}c@{}c@{}}
  D3{E} \longrightarrow {E}5 &&  D3{E}  \longrightarrow D1D && D3{E} \longrightarrow E13\\
 \left(\begin{matrix}
1 &0 &0 & 0 \\
0 &\epsilon &1& 1\\
0 &0 &i &0\\
0 &0 &0 &\epsilon
\end{matrix}\right), &\qquad & \left(\begin{matrix}
1 &0 &0 & 0 \\
0 &-\epsilon &-\epsilon^{-1} & -1\\
0 &0 &i \epsilon^{-1} &0\\
0 &0 &0 &i
\end{matrix}\right),& \qquad & \left(\begin{matrix}
\epsilon^{-1} &\epsilon^{2} &\frac{\epsilon}{1-i} & \frac{1}{\epsilon^2(1+i)} \\
i\epsilon^{-1} &\epsilon^{2} &-\frac{\epsilon}{1-i} & \frac{1}{\epsilon^2(1-i)} \\
0 &0 &-\frac{\sqrt{2}\epsilon}{1+i}&0\\
0 &0 &0 &\frac{1}{\sqrt{2}\epsilon^2}
\end{matrix}\right),\end{array}\\
 \begin{array}{@{}c@{}c@{}c@{}c@{}c@{}}
 D3{E} \longrightarrow E4 &&  D3{E} \longrightarrow E3 &&  D3{E} \longrightarrow E18\\
 \left(\begin{matrix}
\epsilon &0 &1 & 0 \\
0 &\epsilon &0 &\epsilon^{-1} \\
0 &0 &\epsilon^2&\frac{-1+i}{\sqrt{2}\epsilon}\\
0 &0 &0 &\frac{1}{\sqrt{2}\epsilon}
\end{matrix}\right), &\qquad & \left(\begin{matrix}
1 &0 &0 & 0 \\
0 &1 & 0 &0 \\
0 &0 &1 &0\\
0 &0 &0 &\epsilon
\end{matrix}\right), &\qquad & \left(\begin{matrix}
1 &0 &0 & 0 \\
0 &1 & 0 &0 \\
0 &0 &\epsilon &0\\
0 &0 &0 &\epsilon^{-1}
\end{matrix}\right).\end{array}
\end{gather*}

\subsection[Contractions of the degenerate quadratic algebras and the lower half of the Askey scheme]{Contractions of the degenerate quadratic algebras\\ and the lower half of the Askey scheme}

\begin{figure}[h]\centering
\includegraphics[width=110mm]{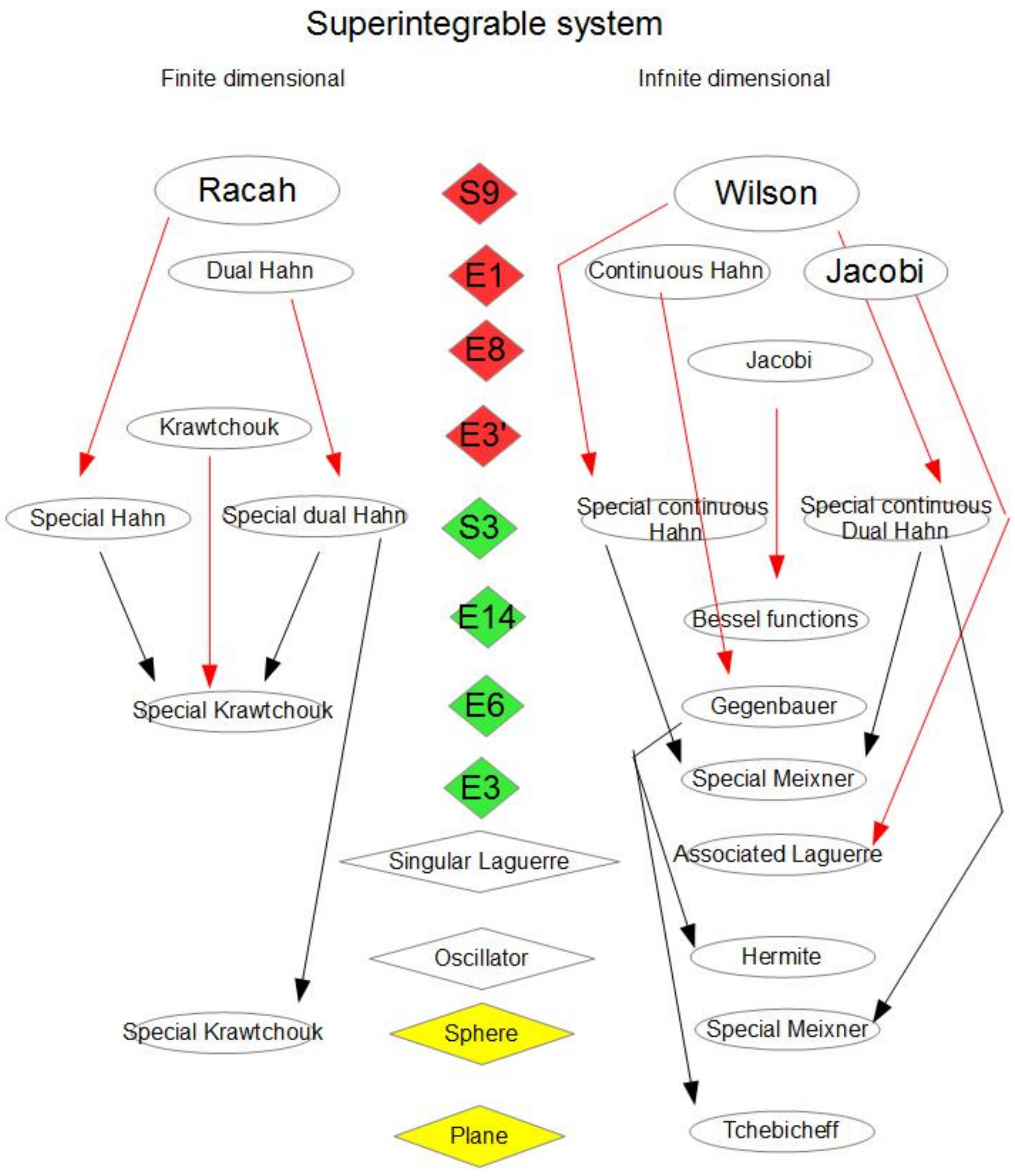}
\caption{Contractions of degenerate systems and the bottom half of the Askey scheme.}\label{caption1}
\end{figure}

The bottom half of the contraction Askey scheme relating orthogonal polynomials via contractions to degenerate, singular and free superintegrable systems is presented in Fig.~\ref{caption1}. The top half of the Scheme, relating to contractions of nondegenerate superintegrable systems can be found in~\cite{EKMS2017} and the full scheme in~\cite{KMP2014}. On the left side are the orthogonal polynomials that realize f\/inite-dimensional representations of the quadratic algebras via dif\/ference or dif\/ferential operators and on the right those that realize inf\/inite-dimensional bounded below representations. The arrows from the nondegenerate superintegrable system $S9$, $E1$, $E8$ and $E3'$ correspond to restriction/contractions to degenerate systems, i.e., the parameters in the 3-parameter potentials are restricted to the case of only 1 parameter and such that one of the symmetry operators becomes a perfect square. This increases the symmetry algebra of the resulting degenerate system.
\begin{Remark} For reference, the corresponding potentials are
\begin{itemize}\itemsep=0pt
\item $E1$: \quad $V=\alpha(x^2+y^2)+\frac{\beta}{x^2}+\frac{\gamma}{y^2}$,
\item $E3'$: \quad
$V=\alpha(x^2+y^2)+\beta x+\gamma y$,
\item $E8$:\quad
$V=\frac{\alpha (x-iy) }{(x+iy)^3}+\frac{\beta}{(x+iy)^2}+\gamma\big(x^2+y^2\big)$,
\end{itemize}
on f\/lat space and
\begin{itemize}\itemsep=0pt
\item $ S9$:\quad $V=\frac{\alpha}{s_1^2}+\frac{\beta}{s_2^2}+\frac{\gamma}{s_3^2}$,\qquad $s_1^2+s_2^2+s_3^2=1$,
\end{itemize}
on the 2-sphere.
\end{Remark}

The arrows from one degenerate superintegrable system to another are the standard contractions studied above. The singular Laguerre and oscillator superintegrable systems have singular Hamiltonians, and for these systems knowledge of the free quadratic algebra does not necessarily determine the full superintegrable system. The singular Laguerre system is a restriction/contraction of~$E1$. The resulting quadratic algebra is isomorphic to $\{ H \} \oplus
 {\mathfrak{sl}} (2,\mathbb{R})$~\cite{KMP2014}, in the same sense that the quadratic algebra of the 2D Kepler system is said to be ${\mathfrak{so}}(3,\mathbb{R})$. This is true only if the system is restricted to an eigenspace of~$H$. The Oscillator system is a contraction of $E6$ and its quadratic algebra is isomorphic to the 4-dimensional oscillator algebra~\cite{KMP2014}. The plane and sphere systems are restriction/contractions of degenerate superintegrable systems to free superintegrable systems on the plane and the 2-sphere, respectively.

\section{Conclusions and discussion}\label{section6} This paper is devoted to the study of the geometric quadratic algebras that correspond to 2D degenerate 2nd order superintegrable systems and general abstract degenerate quadratic algebras. Since the geometric quadratic algebras are uniquely determined by their free restrictions we studied only parameter-free geometric and abstract algebras. The geometric algebras were already known; in this paper we classif\/ied all abstract algebras in Table~\ref{table2}. We related the geometric and free quadratic algebras in Table~\ref{tablea}. We showed that there were 5 abstract algebras with no geometric counterpart, but that it was impossible to represent them in phase space.

In Section~\ref{Helmholtzcontractions} we derived and classif\/ied all B\^ocher contractions of 2D degenerate 2nd order superintegrable systems. In Fig.~\ref{caption1} we showed the relationship between our results and the bottom half of the Askey scheme. We derived and classif\/ied all abstract contractions of the geometric quadratic algebras, presenting the results in Table~\ref{tab2}. Comparing the B\^ocher and abstract contractions and taking into account the isomorphism of the $E14$ and $S5$ algebras, we see that there is a match except for 6 abstract contractions with no geometric realization:
\begin{gather*}%\label{Table7}
\begin{array}{@{}lll} \text{paren algebra}& \text{abstract contracted algebra}&\\
\hline
S6&D1D&\\
\hline S3&D1&\\
\hline S3&E13\\ \hline
D3E&E18&\\
\hline E12&E5&\\ \hline
E14&E5&\\
\hline
\end{array}
\end{gather*}
In two cases the failure of geometric realization is obvious: It is not possible to contract a~constant curvature space to a Darboux space~\cite{HKMS2015}. This paper is a partial warm-up for an analogous study of quadratics algebras for 3D superintegrable systems, e.g.,~\cite{CKP2015}
and for cubic algebras, e.g.,~\cite{Marquette}.

\appendix
\section{Summary of degenerate Laplace and Helmholtz systems}\label{appendixA}
The degenerate superintegrable systems can occur only on the 2-sphere, 2D f\/lat space, or one of the~4 Darboux spaces.
The notation for these systems is taken from~\cite{KKMW, KKMP}. (We write the systems in classical form; the quantum analogs have the same potentials
and the obvious replacements of classical momenta by quantum derivatives.) We assume all variables to be complex.

{\bf Degenerate complex Euclidean systems $ H=p_x^2+p_y^2+\alpha V(x,y)$:}
\begin{enumerate}\itemsep=0pt
\item $ E18$: $ V=\frac{1}{\sqrt{x^2+y^2}}$, Kepler potential,
\item $ E3$: $ V=x^2+y^2$, harmonic oscillator,
 \item $ E6$: $ V=\frac{1}{x^2}$, radial potential,
 \item $ E5$: $ V=x$, linear potential,
 \item $ E12$: $ V=\frac{ x+iy}{\sqrt{(x+iy)^2+c^2}}$,
\item $ E14$: $ V=\frac{1}{(x+iy)^2}$,
 \item $ E4$: $ V=x+iy$,
\item $ E13$: $ V=\frac{1}{\sqrt{ x+iy}}$.
\end{enumerate}
The last 4 systems are real in Minkowski space.

{\bf Degenerate systems on the complex sphere:}
We use the classical realization for $o(3,{\mathcal C})$ with basis ${ J}_1=s_2p_{s_3}-s_3 p_{s_2}$, ${ J}_2=s_3 p_{s_1}-s_1 p_{s_3}$, ${ J}_3=s_1 p_{s_2}-s_2 p_{s_1}$, and Hamiltonian ${ H}={ J}_1^2+{ J}_2^2+{ J}_3^2+\alpha V$. Here $s_1^2+s_2^2+s_3^2=1$.
\begin{enumerate}\itemsep=0pt
\item $ S6$: $V=\frac{ s_3}{\sqrt{s_1^2+s_2^2}}$, Kepler analog,
\item $ S3$: $V=\frac{ 1}{s_3^2}$, Higg's oscillator,
\item $ S5$: $ V=\frac{ 1}{(s_1+is_2)^2}$.
\end{enumerate}
The last system is real on the 2-sheet hyperboloid.

{\bf Degenerate systems on Darboux spaces:}
\begin{enumerate}\itemsep=0pt
 \item ${ D1D}$:  ${ H}=\frac{1}{4x}(p_x^2+p_y^2)+\frac{\alpha}{x}$,
\item $ D2D$: ${ H}=\frac{x^2}{x^2+1}(p_x^2+p_y^2)+\frac{\alpha}{x^2+1}$,
\item $ D3E$: $ H=\frac12\frac{e^{2x}}{e^x+1}(p_x^2+p_y^2) +\frac{\alpha}{e^x+1}$,
\item $ D4(b)D$: ${ H}=-\frac{\sin^2 2x}{2\cos 2x+b}(p_x^2+p_y^2) +\frac{\alpha}{2\cos 2x+b}$.
\end{enumerate}

\begin{Remark}\label{remark2} Every degenerate system occurs as a ``restriction'' of at least one nondegenerate system, although the symmetry algebra grows. For example the classical nondegenerate $S9$ system has the Hamiltonian
\begin{gather*} H= J_1^2+J_2^2+J_3^2+\frac{a_1}{s_1^2}+\frac{a_2}{s_2^2}+\frac{a_3}{s_3^2}\end{gather*}
and a basis of symmetries
\begin{gather*}{ L}_1={ J}_3^2+a_1\frac{s_2^2}{s_1^2}+a_2\frac{s_1^2}{s_2^2},\qquad { L}_2={ J}_1^2+a_2\frac{s_3^2}{s_2^2}+a_3\frac{s_2^2}{s_3^2},\qquad { L}_3={ J}_2^2+a_3\frac{s_1^2}{s_3^2}+a_1\frac{s_3^2}{s_1^2},\end{gather*}
where $ H={ L}_1+{ L}_2+{ L}_3+a_1+a_2+a_3$. If we let $a_1\to 0$, $a_2\to 0$ we obtain the Hamiltonian for $S3$: $H'= J_1^2+J_2^2+J_3^2+\frac{a_3}{s_3^2}$. However, now
\begin{gather*}L_1\to L_1'= J_3^2,\qquad L_2\to L_2'= J_1^2+a_3\frac{s_2^2}{s_3^2},\qquad {L}_3\to L_3' ={ J}_2^2+a_3\frac{s_1^2}{s_3^2},\end{gather*}
and the restricted system now admits the 1st order symmetry $J_3$ as well as a new 2nd order symmetry $\{J_3,L_2'\}$. These~4 symmetries are related by the Casimir. A table with all of the restrictions of nondegenerate systems on constant curvature spaces to degenerate systems can be found in~\cite{KM2014} and a table with all restrictions of nondegenerate systems on Darboux spaces can be found in~\cite{HKMS2015}.
\end{Remark}

\subsection*{Acknowledgments}
This work was partially supported by a grant from the Simons Foundation (\#~208754 to Willard Miller, Jr and by CONACYT grant (\#~250881 to M.A.~Escobar). The author M.A.~Escobar is grateful to ICN UNAM for the kind hospitality during his visit, where a part of the research was done, he was supported in part by DGAPA grant IN108815 (Mexico). We thank a~referee for pointing out the relevance of references~\cite{DufZun05, GraMarPer93,LauPiVan13}.

\pdfbookmark[1]{References}{ref}
\LastPageEnding

\end{document}